\documentclass[11pt]{article}
\usepackage[left=1in,top=1in,right=1in,bottom=1in,head=0in,nofoot]{geometry}

\usepackage{setspace,url,bm,amsmath} 
\usepackage{scalerel}

\usepackage{xcolor}
 \usepackage{multirow}
 \usepackage{arydshln}
 \usepackage{mathtools}
\usepackage{titlesec} 
\titlelabel{\thetitle.\quad} 
\titleformat*{\section}{\bf\Large\center}

\usepackage{authblk}
\usepackage{float}
\usepackage{graphicx} 
\usepackage{bbm}
\usepackage{latexsym}
\usepackage{caption}
\usepackage[margin=20pt]{subcaption}
\usepackage{enumerate}
\graphicspath{ {./Graphs/} }

\usepackage{tcolorbox}

\usepackage{amsthm}
\usepackage{amssymb}
\usepackage{amsmath}
\usepackage{color}

\usepackage{comment}
\theoremstyle{definition}
\newtheorem{assumption}{Assumption}
\newtheorem*{theorem*}{Theorem}
\newtheorem{theorem}{Theorem}
\newtheorem*{rmk*}{remark}

\newtheorem{lemma}{Lemma}
\newtheorem{example}{Example}

\newtheorem{condition}{Condition}

\newtheorem*{corollary*}{Corollary}

\usepackage{natbib} 
\bibpunct{(}{)}{;}{a}{}{,} 

\usepackage{etoolbox} 
\apptocmd{\sloppy}{\hbadness 10000\relax}{}{} 

\usepackage{multibib}
\newcites{sec}{References}

\usepackage{color}
\usepackage{listings}
\usepackage{hyperref}
\usepackage{booktabs}

\usepackage{lscape}

\RequirePackage[normalem]{ulem}

\def \P{\mathbb{P}}

\def \E {\mathbb{E}}
\def \cE{\mathcal{R}} 
\def \cO{\mathcal{E}}
 \def \cS{\mathcal{S}}

\newcommand{\indep}{\perp \!\!\! \perp}

\usepackage{wasysym}
\usepackage{algorithmic}
\usepackage{algorithm}

\allowdisplaybreaks

\begin{document}

\singlespacing

\title{\bf Improving Treatment Effect Estimation in Trials through Adaptive Borrowing of External Controls}

\author[1]{Qinwei Yang}
\author[2]{Jingyi Li}
\author[1]{Peng Wu\thanks{Corresponding author: pengwu@btbu.edu.cn}}
\author[3]{Shu Yang}
\affil[1]{Beijing Technology and Business University,  Beijing, China} 
\affil[2]{National University of Singapore, Singapore}  
\affil[3]{North Carolina State University, Carolinas, USA}


\date{}

\maketitle

\begin{abstract}
Randomized controlled trials (RCTs) often suffer from limited inferential efficiency in estimating treatment effects due to their small sample sizes. In recent years, incorporating external controls (ECs) has gained increasing attention as an effective way to augment small RCTs and thereby enhance estimation efficiency. However, ECs are not always comparable to RCTs, and direct borrowing without careful evaluation can introduce substantial bias and, paradoxically, undermine the accuracy of treatment effect estimation. 
In this paper, we propose a novel adaptive influence-based sample borrowing framework to improve average treatment effect (ATE) estimation in RCTs. The framework quantifies the ``comparability'' of each sample in ECs using influence functions and identifies the optimal subset of ECs that minimizes the mean squared error of the ATE estimator.
The proposed framework is assumption-lean regarding the distribution of ECs and is robust to outliers, making it broadly applicable across diverse settings. Moreover, we develop an outcome calibration method to improve the data utilization efficiency of ECs, further strengthening the adaptive influence-based sample-borrowing framework. 
We demonstrate the effectiveness of the proposed method using both simulated and real-world datasets. 
\end{abstract}


\medskip 
\noindent 
{\bf Keywords}: 
Adaptive Borrowing, Causal Inference, Data Fusion, 
Efficiency Improvement. 

\newpage

\onehalfspacing


\section{Introduction}
Randomized controlled trials (RCTs) are regarded as the gold standard for estimating treatment effects, as randomization can effectively eliminate confounding between treatment and outcome~\citep{imbens2015causal, Hernan-Robins2020}. However, their inferential efficiency is typically limited by small sample sizes, due to high costs and lengthy recruitment periods~\citep{athey2017state, hu2023longterm,  Wu-etal-ShortLong}.
To address this challenge, there has been growing interest in augmenting small RCTs by leveraging external datasets containing only control arms, referred to as \emph{ external controls} (ECs), to improve treatment effect estimation in RCTs. 
For example, in digital marketing, platforms may combine small-scale A/B test data with historical single-arm user behavior logs to enhance inference~\citep{bojinov2023design}. In medical research, oncology trials often incorporate historical control data to strengthen causal conclusions~\citep{hobbs2011hierarchical}. Similarly, rare disease studies frequently augment small randomized trials with matched ECs from disease registries to improve statistical power~\citep{wang2019using}.

To fully exploit ECs, it is indispensable to invoke several assumptions on the relationship between RCTs and ECs~\citep{gao2024improving, wu2024comparative}.  
Toward this, most studies rely on the exchangeability assumption, positing that the distribution of potential outcomes under control remains invariant between RCTs and ECs when conditioned on baseline covariates~\citep{dahabreh2019generalizing, Dahabreh-etal2020, li2023improving, Wu-Mao2025, Colnet-etal2024}.   
However, the exchangeability assumption requires that individuals in both RCT controls and ECs follow the same pattern between covariates and outcome, which may be less plausible in real-world applications due to individual heterogeneity~\citep{gao2024improving}. 
In practice, ECs usually have significantly larger sample sizes than RCTs and often exhibit greater individual heterogeneity~\citep{li2023improving, fda2023considerations}. When some individuals in ECs show patterns that differ from those in RCTs, the exchangeability assumption is violated and leads to biased conclusions.

\emph{In this article, we aim to improve the average treatment effect (ATE) estimation in an RCT by adaptively borrowing ECs without relying on the exchangeability assumption. The key lies in (a) measuring the ``comparability'' of each sample in ECs for the ATE estimation, thereby distinguishing between comparable (or beneficial) and non-comparable (or harmful) samples in ECs, and (b) defining and identifying the optimal set of comparable samples in ECs.}  


\emph{For the first goal (a)}, we propose an adaptive influence-based sample borrowing approach. Specifically, we employ the influence function~\citep{Hampel, koh2017understanding} to quantify how each sample in ECs, when added to the RCTs, perturbs the outcome model fitted on RCT controls, thereby yielding influence scores that reflect the comparability of each sample in ECs.  
Intuitively, a sample with a small influence score has less impact on the outcome model in RCT controls and is therefore considered more comparable. Including such samples in the RCT can effectively increase the sample size of controls without introducing bias. In contrast, a large influence score suggests that including the sample would substantially affect the outcome model, introducing bias into the treatment effect estimation. 
Using ranked influence scores, we construct nested candidate subsets of ECs, where the $k$-th subset contains the top-$k$ ECs with the smallest influence scores. 


\emph{For the second goal (b)}, we first derive the semiparametric efficient estimator~\citep{tsiatis2006semiparametric} of the ATE that combines the RCT with an arbitrary candidate subset of ECs under the exchangeability assumption.  
Such an estimator minimizes the asymptotic variance among all regular estimators and is often considered optimal~\citep{newey1990semiparametric, van2000asymptotic} under the given assumptions. 
We establish the consistency and asymptotic normality of the proposed estimator. 
Then, recognizing that the exchangeability assumption may not hold for all candidate subsets of ECs, we analyze the asymptotic bias and variance of the proposed estimator under violations of exchangeability. 
Finally, we propose selecting the optimal subset of ECs that yields the ATE estimator with the minimum mean squared error (MSE). 

By integrating the two components developed for goals (a) and (b), we establish an adaptive influence-based sample borrowing framework to improve ATE estimation in RCTs. This framework consists of two steps: obtain the influence scores for each sample in ECs; identify the optimal subset of ECs that minimizes the MSE of the ATE estimator. The proposed framework remains effective when the exchangeability assumption is violated.

However, a careful examination of the borrowing behavior of the proposed influence-based framework reveals that it is effective only when several samples in ECs have small influence scores. When most samples in ECs have large influence scores---that is, when the majority of ECs exhibit outcome model patterns that differ substantially from those in RCT controls---the framework can borrow only a small number of ECs, resulting in an estimator similar to that based solely on RCT data and limiting data utilization efficiency. 
This is intuitive, as including such samples would significantly increase bias while providing little reduction in variance.

To mitigate this, \emph{we further propose a novel outcome calibration method to improve the data utilization efficiency of ECs.} We achieve this by adjusting for outcome differences between  RCT controls and ECs. 
We first propose a robust bias estimation approach that combines RCT controls and ECs, and then calibrate the outcomes in ECs by subtracting the estimated bias for each sample. This adjustment ensures that the conditional means of the potential outcomes under control are aligned between the RCT controls and ECs. After calibration, we then apply the adaptive influence-based sample borrowing framework to obtain the final estimator of ATE in RCT data.   
The main contributions are summarized as follows.
\begin{itemize}
\item We reveal the limitations of existing approaches for improving the estimation of ATEs by combining RCTs and ECs.

\item We propose an adaptive influence-based sample borrowing framework to improve ATE estimation in RCTs. 
This framework is established by novelly using influence scores to quantify the comparability of ECs and by borrowing the optimal subset of ECs based on MSE.


\item We propose a calibration method to enhance the data utilization efficiency of ECs, thereby further strengthening the adaptive influence-based sample borrowing framework.


\item We conduct extensive experiments on both simulated and real-world datasets, demonstrating that the proposed approach outperforms existing baseline methods.

\end{itemize}

\section{Related Work}  \label{related-work}

Since the work of \cite{pocock1976combination}, there has been increasing research on augmenting RCT data with ECs to improve the efficiency of treatment effect estimation. These methods are termed \textit{external control}, \textit{historical control}, or \textit{history borrowing} in the related literature. 
Under the exchangeability assumption (or its analog) for the potential outcome under control, several approaches~\citep{dahabreh2019generalizing, li2023improving, valancius2024causal} have been proposed to estimate treatment effects in RCTs using ECs. However, the exchangeability assumption may be less plausible in practice due to individual heterogeneity. As a result, combining RCT data and ECs directly will lead to bias. 


\textbf{Selective Borrowing}: To relax the exchangeability assumption, several methods have been proposed to adaptively select the degree of borrowing or adjust EC outcomes based on observed differences from RCT controls. These methods include matching and bias adjustment~\citep{stuart2008matching}, power priors~\citep{neuenschwander2009note}, bayesian hierarchical models such as meta-analytic predictive priors~\citep{schoenfeld2019design}, and commensurate priors~\citep{hobbs2011hierarchical}. 
Despite their appeal, previous simulation studies reveal that no single approach performs well in all scenarios when the exchangeability assumption is violated~\citep{shan2022simulation, gao2024improving}.  
Moreover, many of them rely on strong parametric model  assumptions, making them vulnerable to model misspecification and limiting their ability to 
accommodate the complex patterns typical of real-world applications~\citep{shan2022simulation,gao2024improving}.  
More recently, \cite{gao2024improving} proposed a state-of-the-art adaptive lasso–based method that selects a comparable subset of ECs through bias penalization. 

In this article, we extend this class of methodologies by proposing a novel adaptive influence-based sample borrowing framework, which quantifies the comparability of ECs through influence scores and identifies the optimal subset by minimizing the MSE. 
Extensive experiments show that the proposed method performs well across diverse scenarios. 

\textbf{Data Combination}: In addition to combining RCT data with ECs, there are other settings for data combination~\citep{Degtiar-Rose2023, kallus2024role, Colnet-etal2024, wu2024comparative}. For example, one can combine RCT (or experimental) data with external data that contain only covariates~\citep{Dahabreh-etal2019, Dahabreh-etal2020}. In such settings, the goal is typically to generalize the causal effect from the RCT to the external population, rather than to improve estimation of treatment effects within the RCT itself. Another common setting involves combining RCT data with confounded observational data that suffer from unobserved confounding~\citep{Chen-Cai-2021, hu2023longterm, Wu2023Transfer, Yang-etal2023}. In addition, several studies have combined multiple datasets for causal inference~\citep{Han-etal-2023, Wu-Mao2025}. When combining multiple datasets, it is important not only to consider which identifiability assumptions to adopt in order to improve estimation of treatment effects, but also to address how to preserve privacy. 
 
 Unlike these studies, which either use the entire external dataset or do not use external data at all, our method performs individual-level borrowing of external data.  
In addition, different from the above two classes of studies, we further propose a calibrated method to enhance data utilization efficiency. This method is model-agnostic and can be integrated with other approaches, granting it broad applicability.



The rest of the paper is organized as follows. Section \ref{sec3} introduces the notation and defines the problem of interest. Section \ref{motivation} provides the motivation for this work. Sections \ref{influence-function-framework} and \ref{sec6} present the adaptive influence-based sample borrowing framework, and Section \ref{sec-calibration} proposes the calibrated method. Section \ref{sec-illustrate} illustrates the sample borrowing behavior of the proposed methods through two simulated examples. Section \ref{sec-experiment} evaluates the numerical performance of the proposed methods via extensive experiments on both simulated and real-world datasets. Section \ref{sec-conclusion} concludes with a discussion.

\section{Preliminaries} \label{sec3}
\subsection{Notation and Setup}
Let \( X \in \mathcal{X} \subset \mathbb{R}^p \) denote the observed pre-treatment covariates and \( Y \in \mathcal{Y} \subset \mathbb{R} \) the outcome of interest. The binary treatment indicator is \( A \in \{0,1\} \), where \( A = 1 \) indicates treatment and \( A = 0 \) indicates control. Under the potential outcomes framework~\citep{rubin1974estimating, splawa1990application}, each individual has two potential outcomes, \( Y(0) \) and \( Y(1) \), corresponding to the outcome under control and treatment, respectively. We maintain the stable unit treatment value assumption~\citep{rubin1980randomization}, the  observed outcome satisfies $Y = (1 - A)Y(0) + A Y(1).$

Suppose we have access to RCT data and external data containing only control samples (ECs). The RCT data and ECs are denoted by 
 \begin{equation*}
  \begin{split}
      &\{X_i, A_i, Y_i, R_i = 1, i \in \mathcal{R} \},  \\
   &\{X_j, A_j = 0, Y_j, R_j = 0, j \in \mathcal{E} \},  
\end{split} 
\end{equation*}
 where \( R \) is a data source indicator: \( R = 1 \) corresponds to RCT data and \( R = 0 \) to the ECs. Let \( N_{\mathcal{R}} \) and \( N_{\mathcal{E}} \) denote the sample sizes of RCT data and ECs, respectively. 
  Let \( \mathbb{P}(\cdot \mid R=1) \) and \( \mathbb{P}(\cdot \mid R=0) \) represent the population distributions of RCT data and ECs, respectively, and let \( \mathbb{E} \) denote the expectation under \( \mathbb{P} \). 
The causal estimand of interest is the ATE in the RCT population,  defined as $\tau=\mathbb{E}[Y(1)-Y(0) \mid R=1]$.

\subsection{Estimation Based Only on RCT data}
The identification of $\tau$ from RCT data is guaranteed under the strong ignorability assumption, a standard condition in causal inference~\citep{rosenbaum1983central,imbens2004nonparametric}. 

\begin{assumption}[Strong ignorability for RCT data]\label{assump1} \quad  \par 
 (a) $A \indep \{Y(0),Y(1)\} \mid (X,R=1)$; \par (b) $0 < e_1(x) \triangleq \mathbb{P}(A=1 \mid X = x, R=1) < 1$ for all $x$ satisfying $\P(X=x\mid R=1)>0$, where $e_1(x)$ is the propensity score. 
\end{assumption}

Assumption \ref{assump1}(a) implies that $X$ captures the confounding between $A$ and $Y$. Assumption \ref{assump1}(b) ensures that individuals with $X = x$ have a positive probability of receiving each treatment option. Assumption \ref{assump1} is inherently satisfied in RCTs due to the randomization mechanism. Under Assumption \ref{assump1}, we can identify $\tau$ using only RCT data: 
$$\tau = \mathbb{E}[\mu_1(X)-\mu_0(X) \mid R=1 ],$$ 
where $\mu_a(X)=\mathbb{E}[Y\mid X,A=a,R=1]$ for $a=0,1,$ denote the outcome regression functions in RCT data.

When only RCT data are available, the semiparametric efficient estimator of $\tau$ under Assumption~\ref{assump1}  is given by
\begin{equation*}
    \hat \tau_{\textup{aipw}} = \frac{1}{N_{\cE}} \sum_{i\in \cE} \biggl\{\frac{ A_i(Y_i - \hat \mu_1(X_i))}{\hat e_1(X_i)} - \frac{(1-A_i)(Y_i - \hat \mu_0(X_i))}{1 - \hat e_1(X_i)} \\
    + (\hat \mu_1(X_i) - \hat \mu_0(X_i))\biggr\},
\end{equation*}

where $\hat \mu_0(x)$, $\hat \mu_1(x)$, and $\hat e_1(x)$ are estimates of $\mu_0(x)$, $\mu_1(x)$, and $e_1(x)$, respectively. 
This is the classical augmented inverse probability weighting (AIPW) or doubly robust estimator, which has been widely studied in prior work~\citep{Bang-Robins-2005, Kang-Schafer-2007, Tan2007, Wu-Han2024, Wu-Tan2024}.
The propensity score $e_1(x)$ may be known a priori in RCTs~\citep{gao2024improving, Qiu-etal2015}. In such cases, we set $\hat e_1(x) = e_1(x)$ and the proposed method in this article is directly applicable.
 
However, estimating $\tau$ using only RCT data often suffers from low efficiency due to small sample sizes, which result from high costs, long recruitment periods, and ethical or feasibility constraints. This paper aims to improve the estimation efficiency of $\tau$ by fully leveraging ECs.

\section{Motivation}
\label{motivation}

To leverage ECs, most studies rely on the exchangeability assumption below~\citep{dahabreh2019generalizing, Dahabreh-etal2020, wu2024comparative}. 

\begin{assumption}[Exchangeability] \label{mean-exchange}\quad  \par 
   (a) $R \indep Y(0) \mid X$ in $\P$; 
   \par(b) $\mathbb{P}(R = 1 \mid X) >0$ for all $X$ such that $\P(X, R=1)> 0$. 
\end{assumption}
Assumption~\ref{mean-exchange}(a) ensures the transportability of the conditional mean of $Y(0)$ is identical between RCT data and ECs, i.e., 
 $\mathbb{E}[Y(0)\mid X=x,R=0]=\mathbb{E}[Y(0)\mid X=x,R=1]$. 
This establishes a connection between RCT data and ECs.
Assumption~\ref{mean-exchange}(b) indicates that all samples in RCT data have a positive probability of
belonging to EC data.  
Under Assumption~\ref{mean-exchange}, we can use the \emph{entire} ECs to improve the treatment effect estimation in RCTs~\citep{li2023improving, gao2024improving, wu2024comparative}.

However, the exchangeability assumption is usually implausible in practice due to individual heterogeneity. 
Therefore, directly integrating ECs without proper scrutiny can lead to biased estimates. To address this,~\cite{gao2024improving} proposed a state-of-the-art,  adaptive lasso-based approach that identifies the optimal subset of ECs---rather than using the entire set---for integration, thereby improving treatment effect estimation in RCTs under weaker conditions. 
For clarity, we provide a brief overview of this approach and analyze its limitations, which motivate our work.  

Specifically, \cite{gao2024improving} introduced a vector of bias parameter $\bm{b}_0=(b_{1,0},\dots,b_{N_{\mathcal{E},0}})$ for all $j\in \mathcal{E}$, where 
$$b_{j,0}=\mu_{0,\mathcal{E}}(X_j)-\mu_{0}(X_j),$$
with $\mu_{0,\mathcal{E}}(X_j)=\mathbb{E}(Y_j\mid X_j, R_j=0)$. 
The bias $\bm{b}_0$ quantifies the difference in conditional mean outcomes between ECs and RCT controls. \cite{gao2024improving} considered samples in ECs with zero bias as ``comparable" and aimed to identify these samples from the pool of ECs.   
Let $\hat{b}_{j, 0} = \hat{\mu}_{0,\mathcal{E}}(X_j)-
\hat{\mu}_0(X_j)$ be a consistent estimator of $b_{j,0}$,  and $\hat{\bm{b}} = (\hat b_{1,0}, \cdots, \hat b_{N_{\mathcal{E},0}})$ an initial estimator of  $\bm{b}_0$. A sparse estimator $\tilde{\bm{b}}$ is then obtained by minimizing the adaptive lasso loss:  
    \begin{equation}\label{eq_alb}
\tilde{\bm{b}} = \arg\min_{\bm{b}} \left\{ (\widehat{\bm{b}} - \bm{b})^{\mathrm{T}} \widehat{\Sigma}_{\bm{b}}^{-1} (\widehat{\bm{b}} - \bm{b}) + \lambda \sum_{j \in \mathcal{E}} \frac{|b_{j,0}|}{|\hat{b}_{j,0}|^\nu} \right\}, 
\end{equation}
where $\widehat{\Sigma}_{\bm{b}}$ is the variance estimate of $\hat{\bm{b}}$ and $(\lambda,\nu)$ are two tuning parameters. Finally, the samples in ECs with an estimated zero bias are borrowed. 

\begin{figure*}[t!]
    \centering
    \subfloat[]{
    \begin{minipage}[t]{0.4\linewidth}
    \centering
    \includegraphics[width=1.0\textwidth]{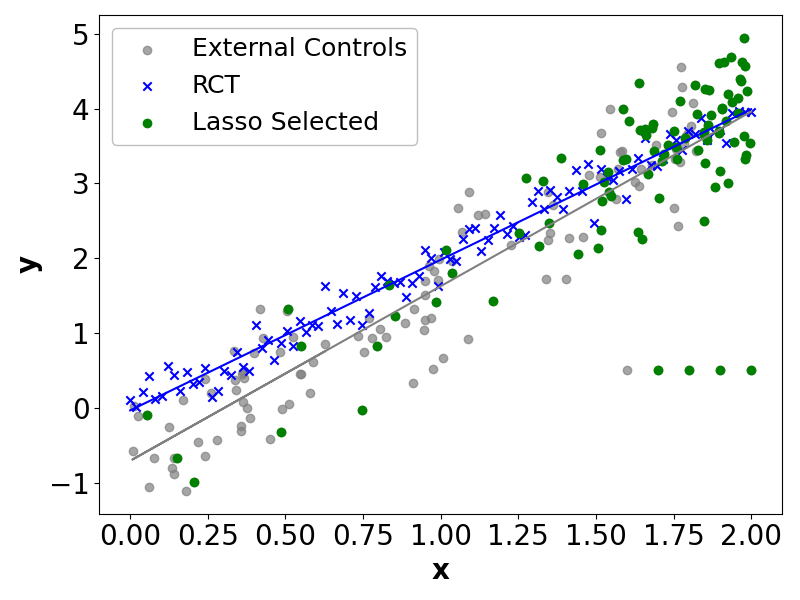}
    \end{minipage}%
    
    \begin{minipage}[t]{0.4\linewidth}
    \centering
    \includegraphics[width=1.0\textwidth]{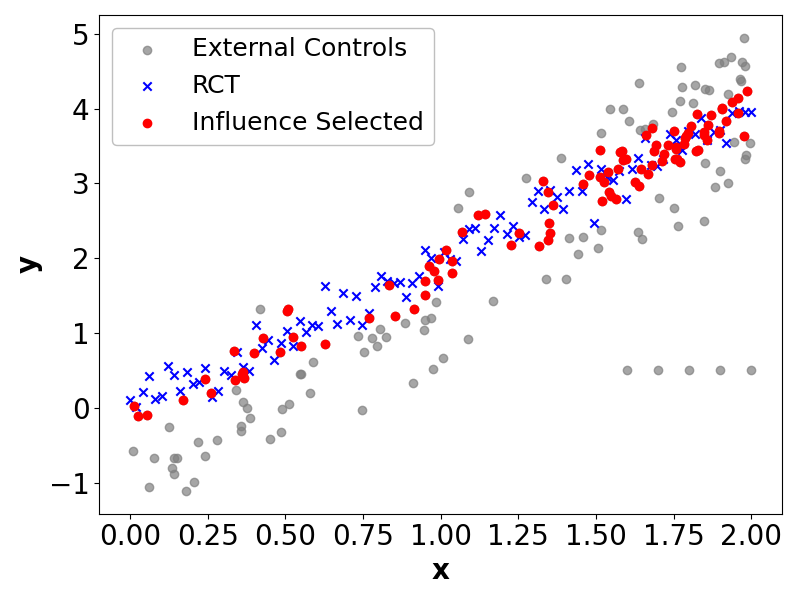}
    \end{minipage}%
    }

    \subfloat[]{
    \begin{minipage}[t]{0.4\linewidth}
    \centering
    \includegraphics[width=1.0\textwidth]{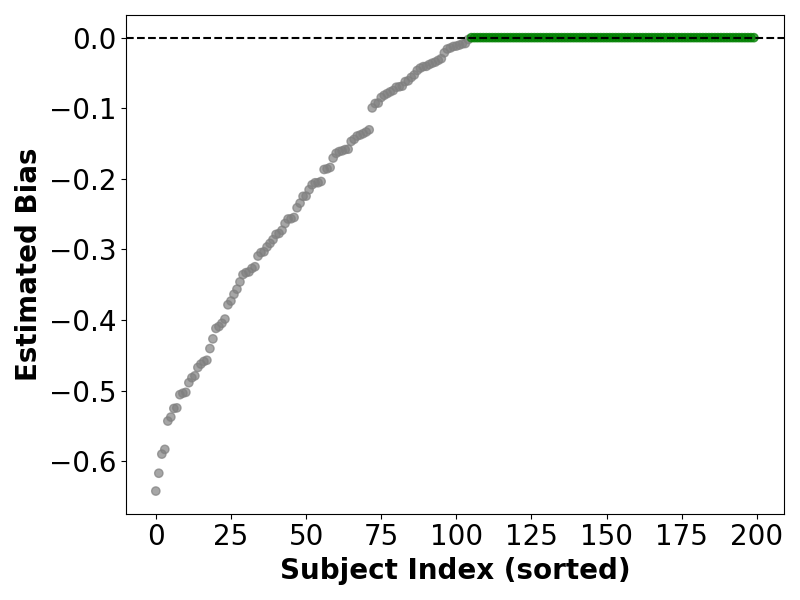}
    \end{minipage}%
    \begin{minipage}[t]{0.4\linewidth}
    \centering
    \includegraphics[width=1.0\textwidth]{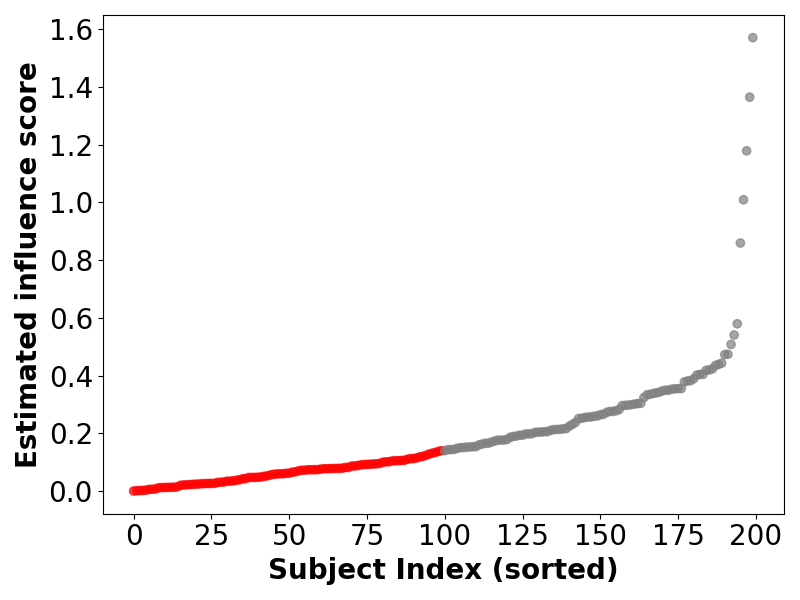}
    \end{minipage}%
    }%
    \centering
    \caption{Comparison of different approaches in a simulated example, see Example 1 in Section \ref{sec-illustrate} for details. (a) \textit{Sample Borrowing Comparison}: The left panel shows the adaptive lasso-based borrowing approach~\citep{gao2024improving}, while the right panel shows the influence-based borrowing approach proposed in Section \ref{sec6}.  
    (b) \textit{Bias and Influence Score Estimation}: The x-axis is the subject index, sorted by the estimated bias (left) or influence score (right). The left panel displays bias estimation for samples borrowed by the lasso-based approach (green points), while the right panel shows influence score estimation for samples borrowed by the influence-based approach (red points).}

    

    \label{fig.1}
    \setlength{\abovecaptionskip}{0.cm}
    \vspace{-0pt}
\end{figure*}

To better understand the borrowing behavior of the adaptive lasso-based approach, we conduct a simulation using a simple simulated dataset (see Example 1 in Section \ref{sec-illustrate} for details). 
In Fig.~\ref{fig.1}(a), the `x-shaped' and `dot-shaped' markers represent RCT controls and ECs, and the green points denote the ECs borrowed by the adaptive lasso-based approach. From it, we observe several limitations of this method: 
\begin{itemize}
\item Suboptimal comparability:  From Fig.~\ref{fig.1}(a), the samples (marked in green) borrowed by the adaptive lasso-based approach may exhibit patterns that differ substantially from those in RCT controls, resulting in suboptimal comparability. Intuitively, the adaptive lasso-based approach borrows samples with small values of $b(x) = \mu_{0,\mathcal{E}}(x) - \mu_0(x)$, where $b(x)$ represents an average discrepancy that overlooks individual-level heterogeneity in the neighborhood of $x$. Consequently, the approach may include samples with high variability (even picking out outliers), simply because their mean outcomes are close to those in RCT controls. 

\item Sensitivity to outliers: 
The final estimator $\tilde{\bm{b}}$ heavily depends on the estimate of $\hat{\mu}_{0,\mathcal{E}}(x)$, which is learned using \emph{all} ECs. When several outliers are present in ECs, the estimate of $\hat{\mu}_{0,\mathcal{E}}(x)$ may become biased, introducing estimation errors for individuals and exacerbating non-comparability issues. 
\end{itemize}

These limitations substantially hinder the effective use of ECs and restrict their potential to enhance the efficiency of treatment effect estimation in RCTs. Therefore, developing an improved data-borrowing methodology is essential.

\section{Adaptive Sample Borrowing Framework}
\label{influence-function-framework}

In this section, we propose an adaptive influence-based sample borrowing framework that more effectively identifies comparable ECs and enhances the efficiency of treatment effect estimation in RCTs.

\subsection{Quantifying Comparability of ECs via Influence Score}
The key lies in quantifying the comparability of each EC sample to RCT controls without modeling $\mu_{0,\cO}(x)$. We formalize this by posing a counterfactual question: how would the parameters in $\mu_0(x)$ change if an EC sample were added? If the model parameters change minimally, the sample can be considered comparable and suitable for borrowing.

Denote $\mathcal{C} = \{(X_i, A_i = 0, Y_i, R_i = 1), i \in \mathcal{R}\}$ as the set of RCT controls, and let $N_{\mathcal{C}}$ denote its sample size. 
Given RCT controls $\{Z_i = (X_i, Y_i) : Z_i \in \mathcal{C}\}$, we model $\mu_{0}(x)$ as $\mu_0(x; \theta)$, where $\theta$ is the parameter and is 
 estimated by the empirical risk minimizer:
\begin{equation}\label{erm} 
    \hat{\theta}\overset{\text{def}}{=} \arg\min_{\theta\in\Theta} \sum_{Z_i\in\mathcal{C}} L(Z_i;\theta),
\end{equation}
where $L(Z_i;\theta)=(Y_i- \mu_{0}(X_i;\theta))^2$ is a loss function that is twice-differentiable and convex in $\theta$. 
For an added EC sample $z=(x,y)$, we denote the modified parameter as $\hat{\theta}_{+z}$, which is obtained by retraining the model after adding $z$ as 
$$
    \hat{\theta}_{+z}\overset{\textup{def}}{=}\arg\min_{\theta\in\Theta} \sum_{Z_i\in\mathcal{C}\cup z} L(Z_i;\theta).
$$ 
We can then measure the influence of the EC sample $z$ on the loss over RCT controls using
\[ 
  \sum_{Z_i\in\mathcal{C}}|L(Z_i,\hat{\theta}_{+z})-L(Z_i,\hat{\theta})|,
  \]
where a larger value indicates that the sample $z$ has a greater impact on $\hat{\mu}_0(x) = \mu_0(x; \hat{\theta})$. 

However, retraining the model for each added sample $z$ is prohibitively slow and computationally expensive. Instead, following \citep{cook1980characterizations}, we approximate how the model parameters change when $z$ is added to RCT controls. The key idea is to assign an infinitesimal weight $\epsilon$ to $z$, yielding new model parameter: 
$$
\hat{\theta}_{\epsilon,z} \overset{\mathrm{def}}{=}  \arg\min_{\theta \in \Theta} \frac{1}{ N_{\mathcal{C}}} \sum_{Z_i\in\mathcal{C}} L(Z_i,\theta) + \epsilon L(z,\theta).$$ 
 A classic result~\citep{koh2017understanding,cook1980characterizations} shows that the influence of upweighting $z$ on the parameter $\hat{\theta}$ is given by
\[ 
\left. \frac{d \hat{\theta}_{\epsilon,z}}{d \epsilon} \right|_{\epsilon = 0} = - H_{\hat{\theta}}^{-1} \, \nabla_{\theta} L(z, \hat{\theta}),
\] 
where $H_{\hat{\theta}}\overset{\mathrm{def}}{=} N_{\mathcal{C}}^{-1} \sum_{Z_i\in\mathcal{C}} \nabla_{\hat{\theta}}^{2} L(Z_{i}, \hat{\theta})$ is the Hessian matrix. 
Then, the influence of adding the EC sample $z$ on the loss of a single sample $Z_i\in\mathcal{C}$ can be expressed in a closed-form expression by applying the chain rule,
\begin{equation*}
    \begin{aligned}
\left. \frac{dL(Z_i, \hat{\theta}_{\epsilon, z})}{d\epsilon} \right|_{\epsilon = 0} = -\nabla_\theta L(Z_i, \hat{\theta})^\top H_{\hat{\theta}}^{-1} \nabla_\theta L(z,\hat{\theta}).
\end{aligned}
\end{equation*}
which is ``the first-order derivatives of loss at $Z_i$'' multiplied by ``the parameter change when we add the EC sample $z$''. 
Furthermore, we can quantify the total influence of $z$ on the entire RCT controls as the influence score, which is given by 
\begin{equation}\label{eq7}
    \begin{aligned}
        \mathcal{IF}(z)
        \overset{\text{def}}{=}\sum_{Z_i\in\mathcal{C}}\left|\nabla_\theta L(Z_i,\hat{\theta})^\top H_{\hat{\theta}}^{-1} \nabla_\theta L(z, \hat{\theta})\right|. 
    \end{aligned}
\end{equation}

For all ECs $\{Z_j=(X_j,Y_j),j\in\mathcal{E}\}$, 
we calculate the influence scores $\{\mathcal{IF}(Z_j) ,j\in\mathcal{E}\}$ according to Eq.~\eqref{eq7}. These scores quantify the comparability of ECs.

\subsection{Adaptive Influence-Based Sample Borrowing Framework}



The proposed adaptive sample borrowing framework consists of two main steps: 

 (1) Based on the ranking of influence scores, we construct a sequence of nested subsets of ECs, where each subset contains the top-$k$ most comparable ECs, that is, those with the smallest influence scores. 
 
  (2) Find the optimal $k$ that minimizes the mean squared error (MSE) of the estimator based on RCT data and the borrowed top-$k$ ECs. This procedure is detailed in Section~\ref{sec6}.  

We summarize the proposed framework in Algorithm \ref{algo1}. 

\begin{algorithm}[t]
\caption{Adaptive Influence-Based Sample Borrowing Framework (AIB)}
\label{algo1}
\begin{algorithmic}[1]
\STATE \textbf{Input:} The RCT data $\{X_i, A_i, Y_i, R_i = 1, i \in \mathcal{R} \}$, and ECs: $(X_j,A_j=0,Y_j,R_j=0,j\in\mathcal{E})$.

\STATE \emph{\bf Step 1:} Fit the outcome model $\mu_0(X)$ and obtain the model parameters $\hat{\theta}$ by the empirical risk minimizer~\eqref{erm}


\STATE \emph{\bf Step 2}: Calculate the influence score $\mathcal{IF}(z)$  for each EC sample $z$ sing Eq.~\eqref{eq7}, then rank the scores and constructing a sequence of nested subsets of ECs, $\mathcal{\bf S} = \{ \cS_k: k = 1, ..., N_{\mathcal{E}}\},$ where $\mathcal{S}_k$ contains the top-$k$ samples with the smallest influence scores.

\STATE 
{\bf Step 3:} Estimate the MSE of $\hat{\tau}$ using the RCT data and borrowing set $\mathcal{S}_k$, as described in Section \ref{section.6.2}, then find the  subset of ECs, $\mathcal{S}_k^*$, that minimizes the MSE.

\STATE \textbf{Return:} The comparable samples set
 $\mathcal{S}_k^*$ and the final estimator of $\tau$. 
\end{algorithmic}
\end{algorithm}

\emph{Unlike the adaptive lasso-based approach, the influence-based approach does not rely on modeling $\mu_{0, \mathcal{E}}(x)$. Moreover, the influence score is defined at the individual level, and each sample's score is 
unaffected by other points in ECs, making the approach robust to outliers.}  
As illustrated in the right panel of Fig.~\ref{fig.1}(a), the proposed method effectively borrows samples (in red) that are close to RCT controls while maintaining robustness to outliers in ECs.

We remark that, although the influence score is not a novel concept, to the best of our knowledge, we are the first to use it to quantify the comparability of ECs. 
In addition, modeling $\mu_0(x)$ with $\mu_0(x; \theta)$ may appear to require a parametric model specification for $\mu_0(x)$. However, this is not a major limitation: the influence function method has been successfully applied to various neural network architectures~\citep{koh2017understanding}.  


\section{Improved Estimation via Finding the Optimal Subset of ECs}  \label{sec6}

In this section, we provide a detailed description of Step 3 in the adaptive sample borrowing framework presented in Algorithm \ref{algo1}.  
Let $\mathcal{S} \subseteq \mathcal{E}$ denote the subset of ECs borrowed by influence scores, and let $N_{\mathcal{S}}$ be its sample size.  
For ease of presentation, we denote $\P_{\mathcal{S}}$ as the combined population, where $\P_{\cS}(\cdot\mid R=1)$ and $\P_{\cS}(\cdot \mid R=0)$ represent the distributions of RCT data and the borrowed ECs, respectively. 
The expectation operator under $\P_{\cS}$ is denoted by $\mathbb{E}_{\mathcal{S}}$. For clarity, Table \ref{tab.2} summarizes the nuisance parameters used in constructing the estimator of $\tau$ and in the corresponding theoretical analysis. 

\begin{table}[H] 
\centering
\caption{Nuisance parameters}
\setlength{\tabcolsep}{2pt}
\resizebox{1\linewidth}{!}
{\begin{tabular}{ll}
\hline 
$e_1(X)=\P(A=1 \mid X,R=1) = \P_{\cS}(A=1 \mid X,R=1)$   &  propensity score  in $\cE$    \\ 
$\mu_a(X)=\E(Y \mid X, A=a, R=1) = \E_{\cS}(Y \mid X, A=a, R=1)$  &  outcome regression function in $\cE$  \\
   $\pi(X) = \P_{\cS}(R=1\mid X)$    & sampling score in  $\cE \cup \cS$  \\ 
$e_{\cS}(X) =  e_1(X) \pi(X)$ &  propensity score  in $\cE \cup \cS$    \\
$m_a(X) = \E_{\cS} [Y \mid X, A=a]$ & outcome regression function in $\cE \cup \cS$ \\
\hline 
\end{tabular}}
\\  
By definition, $\mu_1(X) = m_1(X)$, whereas $\mu_0(X)$ may not equal $m_0(X)$. 
\label{tab.2}
\end{table}

In Subsection \ref{sec6-1}, we construct the estimator of $\tau$ by combining RCT data $\mathcal{R}$ with the borrowed ECs $\mathcal{S}$, under a weaker version of Assumption \ref{mean-exchange} (Assumption \ref{assump2} below). In Subsection \ref{section.6.2}, we analyze the bias and variance of the estimator when Assumption \ref{assump2} is violated, and then aim to find the optimal subset of ECs that minimizes the MSE.

\subsection{Estimation Based on Exchangeability for Borrowed ECs} \label{sec6-1}

To fully utilize the borrowed data, we aim to derive the semiparametric efficient estimator of $\tau$, which is regarded as optimal since it achieves the semiparametric efficiency bound---yielding the smallest asymptotic variance under standard regularity conditions~\citep{tsiatis2006semiparametric,newey1990semiparametric}. To construct the semiparametric efficient estimator, we typically first derive the efficient influence function\footnote{The concept of the efficient influence function is distinct from that of the influence score introduced in Section \ref{influence-function-framework}.} and the semiparametric efficiency bound.


\begin{assumption}[Exchangeability for borrowed ECs] \label{assump2} \quad  \par
    (a) $R \indep Y(0) \mid X$ in $\P_{\cS}$;  \par (b) $\pi(X)=\mathbb{P}_{\mathcal{S}}(R = 1 \mid X) >0$ for all $X$ such that $\mathbb{P}_{\mathcal{S}}(X,R=1)>0$.
\end{assumption}

Assumption~\ref{assump2} is analogous to, but weaker than, Assumption~\ref{mean-exchange}, differing only by replacing  $\P$ with $\P_{\cS}$. It means that the outcome means for $Y(0)$ given covariates are the same in $\mathcal{R}$ and  $\cS$, that is, $\E_{\cS} [ Y(0)\mid X, R=1 ] =\E_{\cS} [ Y(0)\mid X, R=0]$. 
Assumption~\ref{assump2} is plausible when $\mathcal{S}$ is appropriately borrowed by influence scores. For example, in the right panel of Fig.~\ref{fig.1}(a), the RCT controls (in blue) and the borrowed ECs (in red) are close, and follow a similar distribution, supporting the plausibility of Assumption~\ref{assump2}.  



\begin{lemma}\label{lem1} Under Assumptions \ref{assump1} and \ref{assump2}, the efficient influence function of $\tau$ is 
\begin{equation*}
    \phi = \frac{\pi(X)}{q} \bigg \{  \frac{ RA(Y - m_1(X))}{ e_{\mathcal{S}}(X)  } - \frac{(1-A)(Y - m_0(X))   }{1 - e_{\mathcal{S}}(X)} \bigg \} +  \frac{R}{q}\{ m_1(X) -  m_0(X) - \tau \},
\end{equation*}

where $e_{\cS}(X)$,  $\pi(X)$, 
$m_1(X)$, and $m_0(X)$ are defined in Table \ref{tab.2} and $q=\mathbb{P}_{\mathcal{S}}(R=1)$.
The associated semiparametric efficiency bound for $\tau$ is 
   $\textup{Var}(\phi)$. 
\end{lemma}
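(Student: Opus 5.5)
We follow the standard semiparametric recipe: identify $\tau$ as a functional of the observed-data law $\P_{\cS}$, and compute its pathwise derivative along regular parametric submodels. Under Assumption~\ref{assump2}(a) and Assumption~\ref{assump1}(a), $\mu_0(X)=\E_{\cS}[Y\mid X,A=0,R=1]=\E_{\cS}[Y\mid X,A=0,R=0]$. The ECs have $A=0$, so $\mu_0(X)=m_0(X)$ on the support. Hence
\[
\tau=\E_{\cS}[m_1(X)-m_0(X)\mid R=1]=\frac{\E_{\cS}[\pi(X)\{m_1(X)-m_0(X)\}]}{q}.
\]
Assumptions~\ref{assump1}(b) and \ref{assump2}(b) make $e_{\cS}$ and $1-e_{\cS}$ bounded away from zero where needed, so the ratios in $\phi$ are well defined.

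\textbf{Step 1: the tangent space.} Factor the observed-data density $p(x,r,a,y)=p(x)\,p(r\mid x)\,p(a\mid x,r)\,p(y\mid x,a,r)$. Note that $a=0$ whenever $r=0$. The restriction $R\indep Y(0)\mid X$ imposes $p(y\mid x,0,1)=p(y\mid x,0,0)$. Under the mean version used in the paper, it imposes only equality of the conditional means, and in that case the model is not fully nonparametric. To stay in the simplest case, I will work with the full conditional-distribution restriction, so that the control outcome density depends only on $(x,a=0)$. Then the model is locally nonparametric on the reparametrized factors $p(x)$, $p(r\mid x)$, $p(a\mid x,r=1)$, $p(y\mid x,a=1,r=1)$ and $p(y\mid x,a=0)$. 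The tangent space is the orthogonal sum of the corresponding mean-zero score spaces. Since it is the whole (restricted) space, any influence function in it is the efficient one. So it suffices to find the pathwise derivative's representer and check that it lies in this tangent space.

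\textbf{Step 2: pathwise derivative.} Perturb along a submodel $p_t$ with score $s=s_X+s_R+s_A+s_{Y,1}+s_{Y,0}$ and differentiate $\tau_t=\E_t[\pi_t(m_{1,t}-m_{0,t})]/\E_t[\pi_t]$. The derivative of the numerator splits into four terms.
\begin{itemize}
\item The $p(x)$ term gives $\E[\pi(m_1-m_0)s_X]$.
\item The $\pi$ term gives $\E[(R-\pi)(m_1-m_0)s_R]$.
\item The $m_1$ term, with $m_1$ estimated from $\{R=1,A=1\}$, yields the weight $\pi(X)/e_{\cS}(X)$ times $RA(Y-m_1)$.
\item The $m_0$ term, with $m_0$ now estimated from all controls pooled, yields the weight $\pi(X)/(1-e_{\cS}(X))$ times $(1-A)(Y-m_0)$. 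Here one uses $\P_{\cS}(A=0\mid X)=1-e_{\cS}(X)$.
\end{itemize}
Combining these with the denominator's derivative $-\tau\,\E[(R-q)s]/q$, the terms collect into $\E[\phi s]$ with
\[
\phi=\frac{\pi}{q}\Big\{\frac{RA(Y-m_1)}{e_{\cS}}-\frac{(1-A)(Y-m_0)}{1-e_{\cS}}\Big\}+\frac{R}{q}(m_1-m_0-\tau).
\]
To get this form, note that the $p(x)$ and $\pi$ pieces combine into $R(m_1-m_0)/q$ minus its mean.

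\textbf{Step 3: verification and the main obstacle.} Check that $\phi$ has mean zero. Then check that each component lies in the corresponding score space: the residual pieces are conditionally mean-zero given $(X,A,R)$, and the last piece is a function of $(X,R)$. This establishes $\phi$ as the efficient influence function, and the bound is $\mathrm{Var}(\phi)$.

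The main obstacle is the $m_0$ term. One must carefully use the exchangeability restriction to replace $\mu_0$ by the pooled $m_0$. One must also derive that the effective inverse weight is $\pi/(1-e_{\cS})$ rather than $1/(1-e_1)$. This comes from computing $\frac{d}{dt}\E_t[\pi(X)m_{0,t}(X)]=\E\big[\pi(X)\frac{(1-A)(Y-m_0)}{\P(A=0\mid X)}s\big]$.

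If one insists on the mean-only exchangeability, the tangent space becomes a proper subspace. In that case I would additionally verify that $\phi$ is orthogonal to the nuisance scores violating the moment restriction, which should hold since $\phi$ depends on $Y$ only through the residuals.
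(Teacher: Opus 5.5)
Your proposal is correct and follows essentially the same route as the paper: both factor the likelihood with a single pooled control-outcome density $f_0(y\mid x)$, obtain the same tangent space $\mathcal{T}$, differentiate $\tau=\E_{\cS}[\pi(X)\{m_1(X)-m_0(X)\}]/q$, and conclude by checking $\phi\in\mathcal{T}$; the only difference is that you assemble $\phi$ term by term, whereas the paper posits it and verifies $\E_{\cS}[\phi\, s]=\partial\tau/\partial\theta$ through $H_1,H_2,H_3$. Two cautions: (i) the restricted model is not locally nonparametric, so gradients are not unique, and the membership step must use the fact that the $A=0$ residual carries the $R$-free weight $\pi(X)/\{q(1-e_{\cS}(X))\}$, i.e.\ has the form $(1-A)s_0(Y\mid X)$, since being conditionally mean zero given $(X,A,R)$ is equally true of the inefficient RCT-only AIPW gradient; (ii) your closing remark is false in general, because under mean-only exchangeability the model and its tangent space are larger, and $\phi$ lies in that tangent space only when $\textup{Var}(Y\mid X,A=0,R=1)=\textup{Var}(Y\mid X,R=0)$.
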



Lemma \ref{lem1} presents the efficient influence function of $\tau$ under Assumptions \ref{assump1} and \ref{assump2}. Based on it, we can construct the estimator of $\tau$ as follows 
  \begin{align*}   
   \hat\tau_{\mathcal{S}} ={}&  \frac{1}{ N_{\mathcal{R}} + N_{\mathcal{S}} } \sum_{i\in \mathcal{R}\cup\mathcal{S}}   \Biggl [ \frac{\hat \pi(X_i)}{q} \Big \{  \frac{ R_iA_i(Y_i - \hat m_1(X_i))}{ \hat e_{\mathcal{S}}(X_i)  }    -   \frac{(1-A_i)(Y_i - \hat m_0(X_i)) }{1 - \hat e_{\mathcal{S}}(X_i)}\Big \} \\
   {}& +  \frac{R_i}{q}\{ \hat m_1(X_i) - \hat m_0(X_i)\}   \Biggr ]. 
  \end{align*}
where $\hat \pi(x), \hat e_1(x), \hat m_a(x)$ are 
 estimates of $\pi(x), e_1(x), m_a(x)$ for $a= 0,1$, and $\hat e_{\cS}(x) = \hat e_1(x) \hat \pi(x)$.   
 For simplicity, let $n =  N_{\cE} + N_{\cS}$.
 



\begin{condition} \label{cond1}  For a function $f(X)$ with a generic random variable $X$, define its $L_2$-norm as $||f(X)||_2 = \{\int f^2(x) d \P(x)\}^{1/2}$, suppose that    
$|| \hat e_{1}(X) - e_{1}(X) ||_2 \cdot || \hat{m}_a(X) - m_a(X) ||_2 = o_\mathbb{P}(n^{-1/2})$ and $|| \hat{\pi}(X) - \pi(X) ||_2 \cdot || \hat{m}_a(X) - m_a(X) ||_2 = o_\mathbb{P}(n^{-1/2})$ for  $ a \in \{0, \ 1 \}$, and additional regularity conditions in Assumption A.1 of Appendix.
\end{condition}

\begin{lemma}  \label{lem2}
      Under Assumptions \ref{assump1} and \ref{assump2},
      if Condition \ref{cond1} holds,    
  the estimator $\hat \tau_{\cS}$ satisfies
    $$
        \sqrt{n}(\hat \tau_{\cS} - \tau) \xrightarrow{d} \mathcal{N}(0, \sigma^2),
    $$
    where $\sigma^2 = \textup{Var}(\phi)$ is the semiparametric efficiency bound of $\tau$, and $\xrightarrow{d}$ means convergence in distribution.
 \end{lemma}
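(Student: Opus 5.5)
The plan follows the standard route for AIPW-type estimators: decompose $\hat\tau_{\cS}-\tau$ into an oracle term, which is a sample average of the efficient influence function $\phi$ from Lemma~\ref{lem1}; an empirical-process term; and a second-order remainder. Write $\hat\tau_{\cS}=\mathbb{P}_n\{\varphi(O;\hat\eta)\}$, where $\mathbb{P}_n$ is the empirical average over $\mathcal{R}\cup\mathcal{S}$, $\eta=(\pi,e_1,m_0,m_1)$ collects the nuisance functions, and $\varphi(O;\eta)=\phi+R\tau/q$. Note that $\mathbb{P}_n\{R/q\}\to 1$; more precisely, the randomness of $N_{\mathcal{R}}/n$ relative to $q$ is absorbed by writing the $-R\tau/q$ centering inside $\phi$, so that $\mathbb{P}_n\varphi(\cdot;\eta)-\tau=\mathbb{P}_n\phi$ exactly. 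Then
\begin{equation*}
\hat\tau_{\cS}-\tau=\mathbb{P}_n\phi+(\mathbb{P}_n-\mathbb{E}_{\cS})\{\varphi(\cdot;\hat\eta)-\varphi(\cdot;\eta)\}+\mathbb{E}_{\cS}\{\varphi(\cdot;\hat\eta)-\varphi(\cdot;\eta)\}.
\end{equation*}
Since $\phi$ has mean zero by Lemma~\ref{lem1}, the first term gives $\sqrt{n}\,\mathbb{P}_n\phi\xrightarrow{d}\mathcal{N}(0,\textup{Var}(\phi))$ by the central limit theorem.

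For the empirical-process term, I would use either sample splitting/cross-fitting or a Donsker condition, which presumably is part of Assumption A.1. Combined with $L_2$-consistency of $\hat\eta$ and boundedness of $1/\hat e_{\cS}$ and $1/(1-\hat e_{\cS})$ (strict overlap), this yields $\|\varphi(\cdot;\hat\eta)-\varphi(\cdot;\eta)\|_2=o_\mathbb{P}(1)$. Chebyshev's inequality conditional on the training fold, or asymptotic equicontinuity, then makes this term $o_\mathbb{P}(n^{-1/2})$.

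The main work is the bias term $\mathbb{E}_{\cS}\{\varphi(\cdot;\hat\eta)-\varphi(\cdot;\eta)\}$, which I would handle by iterated expectations given $X$.

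\emph{Treated part.} Using $\mathbb{E}_{\cS}[RA\mid X]=\pi e_1=e_{\cS}$ and $\mathbb{E}_{\cS}[Y\mid X,A=1,R=1]=m_1$, the treated contribution becomes
\begin{equation*}
\mathbb{E}_{\cS}\Big[\frac{\hat\pi(X)}{q}\frac{e_{\cS}(X)}{\hat e_{\cS}(X)}(m_1-\hat m_1)(X)+\frac{\pi(X)}{q}(\hat m_1-m_1)(X)\Big].
\end{equation*}
Writing $\hat\pi e_{\cS}/\hat e_{\cS}=\pi e_1/\hat e_1$, this equals $q^{-1}\mathbb{E}_{\cS}[\pi(\hat m_1-m_1)(\hat e_1-e_1)/\hat e_1]$. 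By Cauchy--Schwarz it is bounded by $\|\hat e_1-e_1\|_2\|\hat m_1-m_1\|_2$.

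\emph{Control part.} Here Assumption~\ref{assump2}(a) is essential: it gives $\mathbb{E}_{\cS}[Y\mid X,A=0]=m_0(X)$ for both the $R=1$ and $R=0$ units, and also $\mu_0=m_0$. Together with $\mathbb{E}_{\cS}[1-A\mid X]=1-e_{\cS}$, the control residual term reduces to $\mathbb{E}_{\cS}[\hat\pi(1-e_{\cS})(m_0-\hat m_0)/\{q(1-\hat e_{\cS})\}]$. Adding the plug-in piece $\mathbb{E}_{\cS}[\pi(\hat m_0-m_0)]/q$ leaves
\begin{equation*}
q^{-1}\mathbb{E}_{\cS}\Big[(\hat m_0-m_0)\Big\{\pi-\frac{\hat\pi(1-e_{\cS})}{1-\hat e_{\cS}}\Big\}\Big].
\end{equation*}
The main obstacle is to show that the bracketed factor is a Lipschitz combination of $\hat\pi-\pi$ and $\hat e_1-e_1$. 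At $\hat\pi=\pi$ and $\hat e_1=e_1$ the bracket vanishes, and it is smooth in $(\hat\pi,\hat e_1)$ on a region where $1-\hat e_{\cS}$ is bounded away from zero. A first-order expansion then bounds the term by a constant times $(\|\hat\pi-\pi\|_2+\|\hat e_1-e_1\|_2)\|\hat m_0-m_0\|_2$.

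Under Condition~\ref{cond1}, both the treated and control bias terms are therefore $o_\mathbb{P}(n^{-1/2})$. Slutsky's lemma then gives $\sqrt{n}(\hat\tau_{\cS}-\tau)\xrightarrow{d}\mathcal{N}(0,\textup{Var}(\phi))$. By Lemma~\ref{lem1}, $\textup{Var}(\phi)$ is the efficiency bound, so $\sigma^2=\textup{Var}(\phi)$ is that bound, which completes the argument.
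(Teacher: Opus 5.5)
Your proposal is correct and follows essentially the same route as the paper's proof. Both use the same three-way split: an oracle average of $\phi$ handled by the CLT, an empirical-process term handled by a Donsker/variance argument, and a bias term that is a product of nuisance errors controlled by Condition~\ref{cond1}. The one difference is that you evaluate the bias term exactly by iterated expectations (e.g.\ the treated part equals $q^{-1}\E_{\cS}[\pi(\hat m_1-m_1)(\hat e_1-e_1)/\hat e_1]$), which is cleaner than the paper's functional Taylor expansion with its unverified higher-order terms. One small correction: Assumption~\ref{assump2} is not needed for the control residual identity, which holds by the definition of $m_0$. It enters instead through $\E_{\cS}\phi=0$, i.e.\ $m_0=\mu_0$ on the trial population.
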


Lemma \ref{lem2} establishes the consistency and asymptotic normality of the estimator $\hat \tau_{\cS}$. It further shows that $\hat{\tau}_{\mathcal{S}}$ is semiparametrically efficient under regularity conditions, provided that the nuisance parameters are estimated at a rate faster than $n^{-1/4}$. These conditions are standard and widely adopted in machine-learning-assisted causal inference~\citep{chernozhukov2018double, Semenova-Chernozhukov, Kennedy-2023, Wu-etal-2025-Safe}.

\begin{lemma} \label{lem3}
Under the conditions in Lemma \ref{lem2}, we have 
$\textup{asy.var}( \sqrt{n}\hat \tau_{\mathcal{S}}) \leq \textup{asy.var}( \sqrt{n}\hat \tau_{\textup{aipw}})$, where $\textup{asy.var}$ denotes the asymptotic variance.  
The efficiency gain $\textup{asy.var}( \sqrt{n}\hat \tau_{\textup{aipw}}) - \textup{asy.var}(\sqrt{n}\hat \tau_{\mathcal{S}})$  is 
\begin{align*} 
   \E_{\cS} \left [  \frac{ \pi(X) }{q^2}   \frac{\textup{Var}(Y(0)|X)}{ 1 - e_1(X) }  \frac{ \P_{\cS}(A=0, R=0|X) }{  \P_{\cS}(A=0|X)}\right ],
\end{align*}
where $\P_{\cS}(A=0, R=0|X) = \P_{\cS}(R=0|X) \P_{\cS}(A=0|R=0,X) = 1-\pi(X)$, $\P_{\cS}(A=0|X) = 1 - e_{\cS}(X)$. 
\end{lemma}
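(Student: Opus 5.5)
The plan is to reduce both asymptotic variances to variances of influence functions under the combined population $\P_{\cS}$ and compare them term by term after conditioning on $X$. By Lemma \ref{lem2}, $\textup{asy.var}(\sqrt{n}\hat\tau_{\cS}) = \textup{Var}(\phi)$ with $\phi$ from Lemma \ref{lem1}.

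\textbf{Step 1: the AIPW influence function on the same scale.} $\hat\tau_{\textup{aipw}}$ uses only the $N_{\cE}$ RCT units, and $N_{\cE}/n \to q$. Under Assumption \ref{assump1} and the analogous rate condition for $\hat\mu_a,\hat e_1$, the standard AIPW expansion gives, on the $\sqrt{n}$ scale, the influence function
\[
\phi_{\textup{aipw}} = \frac{R}{q}\Big\{\frac{A(Y-\mu_1(X))}{e_1(X)} - \frac{(1-A)(Y-\mu_0(X))}{1-e_1(X)} + \mu_1(X)-\mu_0(X)-\tau\Big\}.
\]
Hence $\textup{asy.var}(\sqrt{n}\hat\tau_{\textup{aipw}})=\textup{Var}(\phi_{\textup{aipw}})$ under $\P_{\cS}$.

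\textbf{Step 2: identify the nuisances and decompose each variance.} Assumption \ref{assump2}(a) gives $R\indep Y(0)\mid X$ in $\P_{\cS}$, and the ECs all have $A=0$. So the law of $Y\mid X,A=0$ in $\P_{\cS}$ equals the law of $Y(0)\mid X$ in the RCT. This yields $m_0=\mu_0$ and a common conditional variance $\textup{Var}(Y(0)\mid X)$ across both sources. Also $m_1=\mu_1$ by Table \ref{tab.2}.

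Both $\phi$ and $\phi_{\textup{aipw}}$ are then sums of three pieces: a treated residual term, a control residual term, and a projection term $\frac{R}{q}\{\mu_1-\mu_0-\tau\}$. The residual terms have conditional mean zero given $(X,R,A)$, and $RA(1-A)=0$, so all cross terms vanish. Each variance is therefore $\E_{\cS}$ of the conditional second moments plus the common projection contribution $\E_{\cS}[\pi(X)(\mu_1-\mu_0-\tau)^2/q^2]$.

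\textbf{Step 3: compare term by term.} The treated terms coincide. Using $\E_{\cS}[RA\mid X]=e_{\cS}=\pi e_1$, the treated term of $\phi$ gives
\[
\frac{\pi^2}{q^2}\cdot\frac{e_{\cS}\textup{Var}(Y(1)\mid X)}{e_{\cS}^2}=\frac{\pi\,\textup{Var}(Y(1)\mid X)}{q^2 e_1},
\]
which is exactly the treated term of $\phi_{\textup{aipw}}$.

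For the control terms, $\phi$ contributes
\[
\frac{\pi^2}{q^2}\cdot\frac{(1-e_{\cS})\textup{Var}(Y(0)\mid X)}{(1-e_{\cS})^2}=\frac{\pi^2\,\textup{Var}(Y(0)\mid X)}{q^2(1-e_{\cS})},
\]
while $\phi_{\textup{aipw}}$ contributes $\pi\,\textup{Var}(Y(0)\mid X)/\{q^2(1-e_1)\}$. Their difference is
\[
\frac{\pi\,\textup{Var}(Y(0)\mid X)}{q^2}\Big[\frac{1}{1-e_1}-\frac{\pi}{1-e_{\cS}}\Big]
=\frac{\pi\,\textup{Var}(Y(0)\mid X)}{q^2}\cdot\frac{1-\pi}{(1-e_1)(1-e_{\cS})},
\]
using $1-e_{\cS}-\pi(1-e_1)=1-\pi$.

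Taking $\E_{\cS}$ and substituting $1-\pi=\P_{\cS}(A=0,R=0\mid X)$ and $1-e_{\cS}=\P_{\cS}(A=0\mid X)$ gives the stated efficiency gain. The gain is nonnegative because every factor is nonnegative, which proves the inequality.

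\textbf{Main obstacle.} The delicate step is Step 1 combined with Step 2. We must express the RCT-only estimator's variance on the combined-sample scale through the $R/q$ factor, and justify that mean exchangeability in Assumption \ref{assump2} also equates the conditional variances. The latter follows from the full independence in (a); if only mean exchangeability were assumed, the EC residual variance would replace $\textup{Var}(Y(0)\mid X)$ on the EC part and the clean formula would fail.
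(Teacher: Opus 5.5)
Your proposal is correct and follows essentially the same route as the paper: write $\hat\tau_{\textup{aipw}}$ as an asymptotically linear estimator over $\mathcal{R}\cup\mathcal{S}$ with the $R/q$ factor, compute both asymptotic variances from conditional second moments given $X$, and simplify the control-arm difference using $1-e_{\cS}-\pi(1-e_1)=1-\pi$. The paper's proof uses only implicitly that $\textup{Var}(Y(0)\mid X)$ is common to both sources; as you note, this requires the full conditional independence in Assumption~\ref{assump2}(a), together with Assumption~\ref{assump1}(a), and not just mean exchangeability.
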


Lemma \ref{lem3} shows that if the borrowed ECs satisfy Assumption \ref{assump2}, then combining them with RCT data can improve the estimation efficiency of $\tau$ compared with using RCT data alone. 

We further provide a complementary remark regarding Assumption \ref{assump2}. A slightly weaker version of Assumption \ref{assump2} is
$\E_{\cS}[Y(0)\mid X, R=1] = \E_{\cS}[Y(0)\mid X, R=0]$.
All conclusions in Lemmas \ref{lem1}–\ref{lem3} are similar under this weaker assumption, so we don't distinguish between the two in detail for ease of presentation.  
\subsection{Borrowing of ECs}\label{section.6.2}

In Section \ref{sec6-1}, given a borrowed EC subset $\mathcal{S}$, we construct the estimator $\hat{\tau}_{\mathcal{S}}$ by combining data from $\mathcal{R}$ and $\mathcal{S}$. However, the validity of $\hat{\tau}_{\mathcal{S}}$ relies on Assumption \ref{assump2}, which may not hold for an arbitrary choice of $\mathcal{S}$. 
When Assumption \ref{assump2} is violated, $\hat{\tau}_{\mathcal{S}}$ is no longer a consistent estimator of $\tau$ and becomes biased.
Intuitively, borrowing ECs involves a bias–variance trade-off: increasing the sample size of $\mathcal{S}$ reduces variance but may introduce non-comparable samples and thus induce bias. The proposed sample borrowing method is based on the bias–variance analysis of $\hat{\tau}_{\mathcal{S}}$.



\begin{theorem}[Bias-variance analysis] \label{thm1} Under Assumption \ref{assump1} only, 
      if Condition \ref{cond1} holds, 
    then $\hat \tau_{\cS}$ satisfies
    $$
        \sqrt{n}\{ \hat \tau_{\cS} - \tau - \textup{bias}(\hat \tau_{\cS}) \} \xrightarrow{d} \mathcal{N}(0, \sigma^2),
    $$
    where $\sigma^2 = \textup{Var}(\phi)$, $\phi$ is defined in Lemma \ref{lem1}, and 
    $$\textup{bias}(\hat \tau_{\cS}) = \E_{\cS}\ [ \frac{R}{q} \{ \mu_0(X) - m_0(X)  \} ].$$ 
\end{theorem}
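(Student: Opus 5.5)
The plan is to follow the standard von Mises / one-step expansion used for Lemma \ref{lem2}, but without invoking Assumption \ref{assump2}. The only identity that fails is $\mu_0 = m_0$, so the resulting drift is isolated as a deterministic bias term. Write $\tau_{\cS}$ for the probability limit of $\hat\tau_{\cS}$, i.e.\ the value solving $\E_{\cS}[\tilde\phi]=0$. Here $\tilde\phi$ is the function $\phi$ of Lemma \ref{lem1} evaluated at the true nuisances $(\pi, e_1, m_1, m_0)$ of $\P_{\cS}$, with $\tau$ replaced by $\tau_{\cS}$. The whole argument then has two parts: compute $\tau_{\cS}-\tau$ explicitly, and show $\sqrt{n}(\hat\tau_{\cS}-\tau_{\cS})$ is asymptotically linear with influence function $\tilde\phi$.

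\textbf{Step 1 (population limit).} We compute $\E_{\cS}$ of the bracketed term at the true nuisances.
\begin{itemize}
\item Treated term: $RA=1$ only in the RCT. Conditioning on $X$ gives $\E_{\cS}[RA(Y-m_1)\mid X]=\pi e_1(\mu_1-m_1)=0$, since $m_1=\mu_1$ (Table \ref{tab.2}).
\item Control term: $\E_{\cS}[(1-A)(Y-m_0)\mid X]=0$ by the very definition of $m_0(X)=\E_{\cS}[Y\mid X,A=0]$. This holds even when exchangeability fails, because $m_0$ pools RCT controls and borrowed ECs.
\item Remaining term: only $\E_{\cS}[R\{m_1-m_0\}]/q$ survives.
\end{itemize}
Writing $m_1-m_0=(\mu_1-\mu_0)+(\mu_0-m_0)$ and using Assumption \ref{assump1} (so that $\E_{\cS}[R(\mu_1-\mu_0)]/q=\tau$) yields
\[
\tau_{\cS}=\tau+\E_{\cS}\Big[\frac{R}{q}\{\mu_0(X)-m_0(X)\}\Big],
\]
which is exactly the stated bias.

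\textbf{Step 2 (expansion).} Decompose
\[
\hat\tau_{\cS}-\tau_{\cS}=(\mathbb{P}_n-\P_{\cS})\tilde\phi+(\mathbb{P}_n-\P_{\cS})(\hat\phi-\tilde\phi)+\P_{\cS}(\hat\phi-\tilde\phi).
\]
\begin{itemize}
\item The first term gives the CLT with variance $\textup{Var}(\phi)$.
\item The empirical-process term is $o_\P(n^{-1/2})$ by the regularity conditions of Condition \ref{cond1} (Donsker or cross-fitting, consistency, bounded propensities), exactly as in Lemma \ref{lem2}.
\item The remainder $\P_{\cS}(\hat\phi-\tilde\phi)$ is handled by iterated conditioning on $X$. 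The $m_1$ part gives $\pi e_1(m_1-\hat m_1)\{1/(\hat e_1\hat\pi)\}\hat\pi + \pi(\hat m_1-m_1)$, up to the $1/q$ factor, which collapses to $\pi(\hat m_1-m_1)(\hat e_1\hat\pi - e_1\pi)/(\hat e_1\hat\pi)$. This is a product of errors.
\item Analogously, the $m_0$ part uses $\P_{\cS}(A=0\mid X)=1-e_1\pi$, giving $(\hat m_0-m_0)\{(1-e_{\cS})\hat\pi/(1-\hat e_{\cS})-\pi\}$, which factors into $(\hat m_0-m_0)$ times a combination of $(\hat\pi-\pi)$ and $(\hat e_1-e_1)$.
\end{itemize}
Cauchy--Schwarz and Condition \ref{cond1} then make the remainder $o_\P(n^{-1/2})$. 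Crucially, none of these manipulations uses $\mu_0=m_0$, so they remain valid without Assumption \ref{assump2}.

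\textbf{Main obstacle.} The delicate point is the second bullet of the remainder: verifying that the control-arm remainder is genuinely second order. The $\pi$-weighting is applied to both $R=1$ and $R=0$ controls, so the cancellation must use the pooled control probability $1-e_{\cS}$ rather than $1-e_1$. I would check the algebra
\[
\frac{\hat\pi(1-e_1\pi)}{1-\hat e_1\hat\pi}-\pi=\frac{\hat\pi-\pi}{1-\hat e_1\hat\pi}+\frac{\pi\hat\pi(\hat e_1-e_1)+\ldots}{1-\hat e_1\hat\pi}
\]
carefully. Once this is settled, combining the three terms with Step 1 gives $\sqrt{n}\{\hat\tau_{\cS}-\tau-\textup{bias}(\hat\tau_{\cS})\}\xrightarrow{d}\mathcal N(0,\sigma^2)$.
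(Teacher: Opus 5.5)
Your proposal is correct and follows essentially the paper's argument. The paper also reuses the asymptotic-linear expansion from the proof of Lemma~\ref{lem2}, whose first-order cancellation and second-order remainder rely only on the definitions of $m_1$, $m_0$ and $e_{\cS}=e_1\pi$ (never on $\mu_0=m_0$); exactly as in your Step~1, it then shows via $m_1=\mu_1$ and Assumption~\ref{assump1} that the leading term has mean $\E_{\cS}[R\{m_1(X)-m_0(X)\}]/q=\tau+\E_{\cS}[R\{\mu_0(X)-m_0(X)\}]/q$.

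There is one algebra slip and one subtlety, neither of which affects the argument. First, the $\hat\pi$ cancels in your treated-arm remainder, so it simplifies to $\pi(\hat m_1-m_1)(\hat e_1-e_1)/\hat e_1$, which is still a product of errors. Second, $\tau$ enters $\phi$ through $R\tau/q$, so your first term actually has the variance of $\phi$ with $\tau$ replaced by $\tau+\textup{bias}(\hat\tau_{\cS})$. That in general differs from the literal $\textup{Var}(\phi)$ of Lemma~\ref{lem1}, a point the paper's proof also glosses over.
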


Theorem \ref{thm1} extends Lemma \ref{lem2} by allowing Assumption \ref{assump2} to be violated. When Assumption \ref{assump2} holds, the bias term vanishes, and Theorem \ref{thm1} reduces to Lemma \ref{lem2}, with the asymptotic variance $\sigma^2$ attaining the semiparametric efficiency bound. In contrast, when Assumption \ref{assump2} is violated, bias arises and is proportional to the expectation of $\mu_0(X) - m_0(X)$, and the asymptotic variance $\sigma^2$ retains the same form.

Based on Theorem \ref{thm1}, we propose borrowing the optimal set $\cS$ that minimizes the MSE of $\hat{\tau}_{\cS}$ from the candidate sets. Specifically, the MSE of $\hat \tau_{\cS}$ is $\textup{MSE}(\hat \tau_{\cS}) = \{\textup{bias}(\hat \tau_{\cS})\}^2 + \textup{var}(\hat \tau_{\cS})$, $\textup{var}(\hat \tau_{\cS}) = \sigma^2 /n$.  
We estimate the bias as 
   $\widehat{\textup{bias}}(\hat \tau_\cS) =  \hat{\tau}_{\mathcal{S}}-\hat \tau_{\textup{aipw}}$, 
and estimate the variance as
$\widehat{\textup{var}}(\hat \tau_\cS)  =  \hat \sigma^2/ n,$ 
    where $\hat \sigma^2$ is the sample variance of
	\begin{align*} 
	      & \biggl \{ \frac{\hat \pi(X_i)}{q} \left ( \frac{ R_iA_i(Y_i - \hat m_1(X_i))}{ \hat e_{\mathcal{S}}(X_i)  }  -   \frac{(1-A_i)(Y_i - \hat m_0(X_i)) }{1 - \hat e_{\mathcal{S}}(X_i)}\right )  
	     +  \frac{R_i}{q}\{ \hat m_1(X_i) - \hat m_0(X_i)\}, i  \in \cE \cup \cS  \biggr \}. 
	     \end{align*}
Then the estimated MSE is $\widehat{\text{MSE}}(\hat{\tau}_{\cS}) =  \{\widehat{\textup{bias}}(\hat \tau_{\cS})\}^2 + \widehat{\textup{var}}(\hat \tau_{\cS})$.

Our goal is to find the optimal subset of ECs, defined as  
 $$\cS^*= \arg \min_{\mathcal{S}_k  \in {\bf S} } \text{MSE}(\hat{\tau}_{\cS_k}),$$
where $\mathcal{\bf S} = \{ \cS_k: k = 1, ..., N_{\mathcal{E}}\}$, $\cS_k$ denotes a subset of ECs corresponding to the top-$k$ smallest influence scores. 
The $\cS^*$ may contain more than one subset of ECs; that is, there may exist multiple minima in terms of MSE.    
In real-world applications, we estimate $\cS^*$ with  
	\[\hat \cS  =  \arg \min_{\mathcal{S}_k  \in {\bf S} } \widehat{\text{MSE}}(\hat{\tau}_{\cS_k}).     \]
 

It is noteworthy that the proposed borrowing strategy for estimating $\cS^*$ does not rely on Assumption~\ref{mean-exchange} or~\ref{assump2}, nor does impose any strict restrictions on the distribution of ECs. Consequently, our method is widely applicable across a variety of settings.  

RCT data typically have relatively small sample sizes, as their collection is time-consuming and costly, and further constrained by the inclusion criteria set by the experimenter~\citep{Kallus-Puli2018}. In contrast, ECs are observational, more readily available, and often drawn from larger and more diverse sources, exhibiting greater individual heterogeneity~\citep{li2023improving, fda2023considerations, yang2024learning, Colnet-etal2024}. 
As a result, it is inevitable that there exist some individuals (or outliers) in ECs who exhibit patterns that differ significantly from those of RCT controls.
In such cases, the adaptive lasso-based approach may achieve  suboptimal performance (as discussed in Section~\ref{motivation}), whereas our proposed approach can effectively accommodate these scenarios.  


We further analyze the properties of the proposed approach, including the convergence rate of the MSE estimator and the validity of the borrowing strategy. 

\begin{theorem}[Convergence rate] \label{thm2} Under Assumption \ref{assump1} and Condition \ref{cond1}, if $|Y|$ is bounded by a finite constant, then for any given borrowed set of ECs $\cS$, 
        \[  \widehat{\textup{MSE}}(\hat{\tau}_{\cS}) - \textup{MSE}(\hat{\tau}_{\cS}) = O_{\P}(1/\sqrt{n}).    \]
\end{theorem}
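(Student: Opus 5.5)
We decompose the error into a bias part and a variance part,
\[
\widehat{\textup{MSE}}(\hat\tau_\cS)-\textup{MSE}(\hat\tau_\cS)=\Big[\{\widehat{\textup{bias}}(\hat\tau_\cS)\}^2-\{\textup{bias}(\hat\tau_\cS)\}^2\Big]+\Big[\hat\sigma^2/n-\sigma^2/n\Big],
\]
and bound each bracket separately.

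\textbf{Variance part.} This bracket is of smaller order. Write $\hat\sigma^2-\sigma^2$ as two pieces. The first is the difference between the sample variance of the estimated summands and the sample variance of the true summands $\phi_i+\tau/q\cdot R_i$. Under Condition~\ref{cond1} this piece is $o_\P(1)$, using the boundedness of $|Y|$ and the positivity in Assumption~\ref{assump1} and in Assumption A.1 of the Appendix. The second is the difference between the sample variance of the true summands and its population counterpart. These summands are bounded, so by the law of large numbers (or Chebyshev) this piece is $O_\P(n^{-1/2})$. Hence $\hat\sigma^2=\sigma^2+o_\P(1)$, and therefore $\hat\sigma^2/n-\sigma^2/n=o_\P(1/n)$, which is $O_\P(1/\sqrt n)$.

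\textbf{Bias part.} This bracket is the main term. Factor it as
\[
\{\widehat{\textup{bias}}\}^2-\{\textup{bias}\}^2=(\widehat{\textup{bias}}-\textup{bias})(\widehat{\textup{bias}}+\textup{bias}).
\]
By Theorem~\ref{thm1}, which is valid under Assumption~\ref{assump1} alone, $\hat\tau_\cS-\tau-\textup{bias}(\hat\tau_\cS)=O_\P(n^{-1/2})$. Applying the classical AIPW asymptotics to RCT data alone (the special case $\cS=\emptyset$ of Lemma~\ref{lem2}) gives $\hat\tau_{\textup{aipw}}-\tau=O_\P(N_{\cE}^{-1/2})$. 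Subtracting the two,
\[
\widehat{\textup{bias}}-\textup{bias}=(\hat\tau_\cS-\tau-\textup{bias})-(\hat\tau_{\textup{aipw}}-\tau)=O_\P(n^{-1/2}),
\]
where we treat $N_{\cE}/n$ as bounded away from zero, the usual regime with $\cS$ fixed. For the second factor, boundedness of $|Y|$ together with the overlap conditions makes $\textup{bias}(\hat\tau_\cS)=\E_\cS[R\{\mu_0-m_0\}/q]$ bounded. Combined with $\widehat{\textup{bias}}=\textup{bias}+o_\P(1)$, this gives $\widehat{\textup{bias}}+\textup{bias}=O_\P(1)$. The product is therefore $O_\P(n^{-1/2})$.

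Adding the two brackets yields $O_\P(1/\sqrt n)$, which is the claim.

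\textbf{Expected obstacle.} The delicate step is the joint handling of $\hat\tau_\cS$ and $\hat\tau_{\textup{aipw}}$ in the bias part. The two estimators use different nuisance estimates and are normalized by different sample sizes. This needs only separate $O_\P$ rates, not joint asymptotics, but the relative sample sizes must be controlled. I would state explicitly the requirement that $N_{\cE}/n$ is bounded away from zero (or $N_\cS=O(N_\cE)$).

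A second point is that the rate is $n^{-1/2}$ rather than faster. When the true bias is nonzero, the cross term $2\,\textup{bias}\cdot(\widehat{\textup{bias}}-\textup{bias})$ is exactly of order $n^{-1/2}$, so this is the dominant term. When the bias is zero, that cross term vanishes and the bound is trivially satisfied.
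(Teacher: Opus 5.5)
Your proof is correct and is essentially the paper's argument. It uses the same split into a bias part and a variance part. The bias-estimation error $\widehat{\textup{bias}}(\hat\tau_{\cS})-\textup{bias}(\hat\tau_{\cS})=O_{\P}(n^{-1/2})$ comes, as in the paper, from the asymptotic linearity of $\hat\tau_{\cS}$ and $\hat\tau_{\textup{aipw}}$. The only difference there is that the paper expands their difference jointly as $A_{1n}+A_{2n}$, whereas you cite Theorem~\ref{thm1} and the AIPW rate separately.

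Two of your steps make explicit what the paper leaves implicit: the factorization of the squared bias, and the requirement that $N_{\mathcal{R}}/n$ stay bounded away from zero. The paper simply asserts that the two rates suffice and treats $q$ as a constant. The second requirement is genuinely needed, because $\widehat{\textup{bias}}$ inherits the $N_{\mathcal{R}}^{-1/2}$ rate of $\hat\tau_{\textup{aipw}}$.

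There is one harmless slip, which the paper's proof shares. The sample variance of the uncentered summands $\phi_i+R_i\tau/q$ estimates $\textup{Var}(\phi+R\tau/q)$. This exceeds $\sigma^2=\textup{Var}(\phi)$ by $\tau^2(1-q)/q$, even when Assumption~\ref{assump2} holds. So ``$\hat\sigma^2=\sigma^2+o_{\P}(1)$'' does not follow. You only need $\hat\sigma^2-\sigma^2=O_{\P}(1)$, however. Bounded summands give this directly, and it already makes the variance part $O_{\P}(1/n)$.
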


Theorem \ref{thm2} presents the convergence rate of the proposed MSE estimator. When the sample size of RCT data  is comparable to that of $\cS$ (i.e., $N_{\cE}/N_{\cS}$ converges to a positive constant) or much larger than that of $\cS$ (i.e., $N_{\cE}/N_{\cS}$ diverges to infinity), 
we can write the convergence rate as $O_{\P}(1/\sqrt{N_{\cE}})$. 
If $N_{\cE}$ is much smaller than $N_{\cS}$ (i.e., $N_{\cE}/N_{\cS}$ converges to zero), the convergence rate is $O_{\P}(1/\sqrt{N_{\cS}})$.  
Next, we show the validity of the proposed borrowing strategy. 


\begin{theorem}[Validity] \label{thm3} Under the conditions in Theorem \ref{thm2}, for the given ranking of influence scores, if there exists a positive constant $\eta$,  such that $\textup{MSE}(\hat{\tau}_{\cS^*}) < \textup{MSE}(\hat{\tau}_{\cS_k}) - \eta$ for all $\cS_k \notin \cS^*$, then we have
    \[  \lim_{N_{\mathcal{R}}\to \infty}  \P( \hat \cS \in \cS^* ) = 1.      \]
\end{theorem}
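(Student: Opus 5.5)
The plan is a standard argmin-consistency argument. It rests on uniform convergence of the estimated MSE over the candidate family together with the separation margin $\eta$.

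The candidate family ${\bf S}=\{\cS_1,\dots,\cS_{N_{\cO}}\}$ is determined by the fixed ranking of influence scores. I treat it as a given (finite, or at least controlled) collection of sets.

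\textbf{Step 1: uniform closeness.} From Theorem~\ref{thm2}, for each fixed $\cS_k$ we have $\widehat{\textup{MSE}}(\hat\tau_{\cS_k})-\textup{MSE}(\hat\tau_{\cS_k})=O_{\P}(n_k^{-1/2})$, where $n_k=N_{\cE}+N_{\cS_k}\ge N_{\cE}$. So each error is $o_{\P}(1)$ as $N_{\cE}\to\infty$. I then want
\[
\Delta_N \overset{\text{def}}{=} \max_{k}\bigl|\widehat{\textup{MSE}}(\hat\tau_{\cS_k})-\textup{MSE}(\hat\tau_{\cS_k})\bigr| = o_{\P}(1).
\]
If the number of candidate sets is fixed (finite), this follows from a union bound over finitely many $o_{\P}(1)$ terms.

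\textbf{Step 2: the margin argument.} On the event $\{\Delta_N<\eta/2\}$, take any $\cS_k\notin\cS^*$ and any $\cS^\dagger\in\cS^*$. Then
\[
\widehat{\textup{MSE}}(\hat\tau_{\cS^\dagger}) < \textup{MSE}(\hat\tau_{\cS^\dagger})+\eta/2 < \textup{MSE}(\hat\tau_{\cS_k})-\eta/2 < \widehat{\textup{MSE}}(\hat\tau_{\cS_k}).
\]
So no $\cS_k\notin\cS^*$ can minimize $\widehat{\textup{MSE}}$, and hence $\hat\cS\in\cS^*$. It follows that $\P(\hat\cS\in\cS^*)\ge \P(\Delta_N<\eta/2)\to 1$ as $N_{\cE}\to\infty$.

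\textbf{Main obstacle.} The hard part is Step 1 when the family ${\bf S}$ grows with $N_{\cO}$, since then a pointwise $O_{\P}$ bound no longer gives a uniform one.

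My first approach is to go back into the proof of Theorem~\ref{thm2}. With $|Y|$ bounded and the nuisance estimates bounded away from the positivity boundaries (Assumption~\ref{assump1}(b) and Assumption~\ref{assump2}(b)-type overlap), both error pieces are averages of bounded terms:
\begin{itemize}
\item $\widehat{\textup{bias}}-\textup{bias}=(\hat\tau_{\cS}-\E\text{-limit})-(\hat\tau_{\textup{aipw}}-\tau)$;
\item $\hat\sigma^2/n-\sigma^2/n$.
\end{itemize}
Hoeffding-type exponential tail bounds apply to such averages. A union bound over at most $N_{\cO}$ sets then costs only a $\sqrt{\log N_{\cO}}$ factor, which is still $o(1)$ provided $\log N_{\cO}=o(N_{\cE})$.

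The $\hat\tau_{\textup{aipw}}$ term is common to all $k$, so its error enters uniformly at no extra cost. The nuisance remainder terms are handled by Condition~\ref{cond1}, holding uniformly over the candidate sets.

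If this uniformity cannot be established cleanly, I fall back to assuming that the candidate family is finite, or that Theorem~\ref{thm2} holds uniformly over ${\bf S}$. With either assumption, Steps 1 and 2 complete the proof.
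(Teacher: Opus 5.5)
Your proposal is correct and uses essentially the same argmin-plus-margin argument as the paper. The paper splits $\textup{MSE}(\hat\tau_{\hat\cS})-\textup{MSE}(\hat\tau_{\cS^*})$ into two estimation-error terms, each bounded by $\eta/2$ via Theorem~\ref{thm2}, plus a nonpositive term that comes from $\hat\cS$ minimizing $\widehat{\textup{MSE}}$. Your uniform bound $\Delta_N<\eta/2$ over the candidate family, and your conclusion $\P(\hat\cS\in\cS^*)\ge\P(\Delta_N<\eta/2)\to 1$, make explicit two steps the paper leaves implicit: it applies the fixed-$\cS$ result of Theorem~\ref{thm2} to the data-dependent $\hat\cS$, and it states the $\eta/2$ bounds deterministically rather than with probability tending to one. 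Your finite-family version is therefore the more careful one.
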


Theorem \ref{thm3} shows that, as the sample size grows, the estimated subset of ECs asymptotically selects an optimal element from 
$\cS^*$ with probability one, provided that the optimal subset is strictly separated from all other candidate subsets of ECs in terms of MSE by a positive margin $\eta$.

\begin{figure*}[t!]
    \centering

    \subfloat[]{
    \begin{minipage}[t]{0.8\linewidth}
    \centering
    \includegraphics[width=1.0\textwidth]{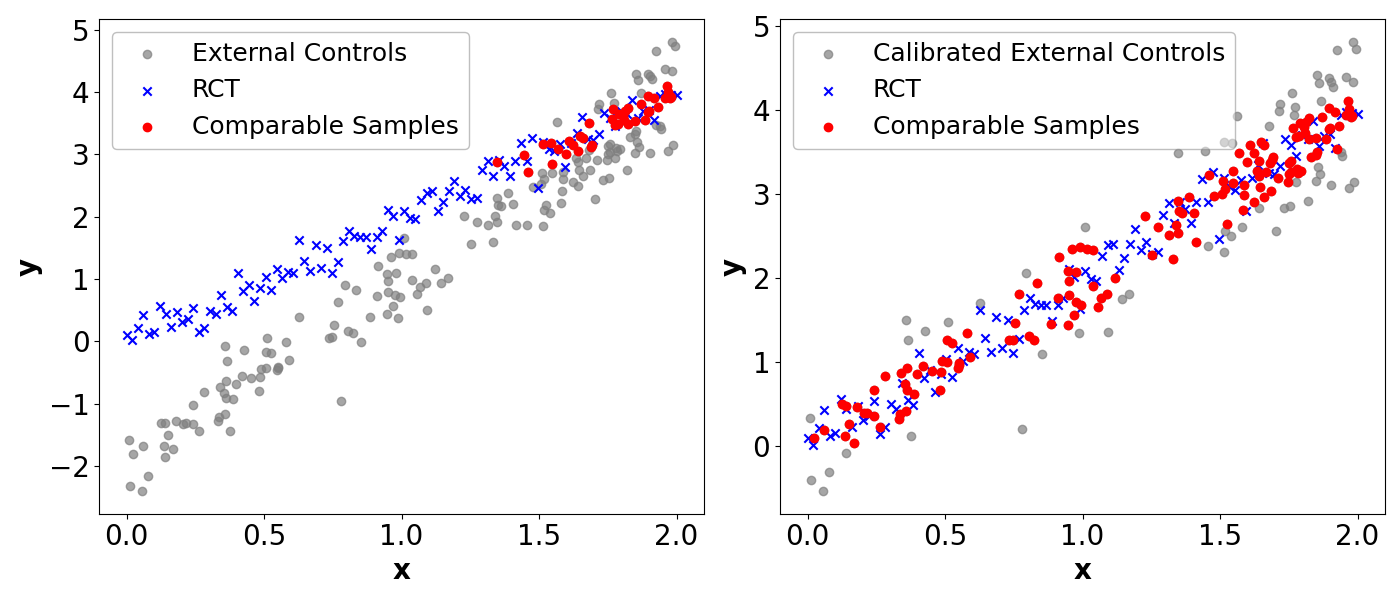}
    \end{minipage}%
    }%
    
    \subfloat[]{
    \begin{minipage}[t]{0.8\linewidth}
    \centering
    \includegraphics[width=1.0\textwidth]{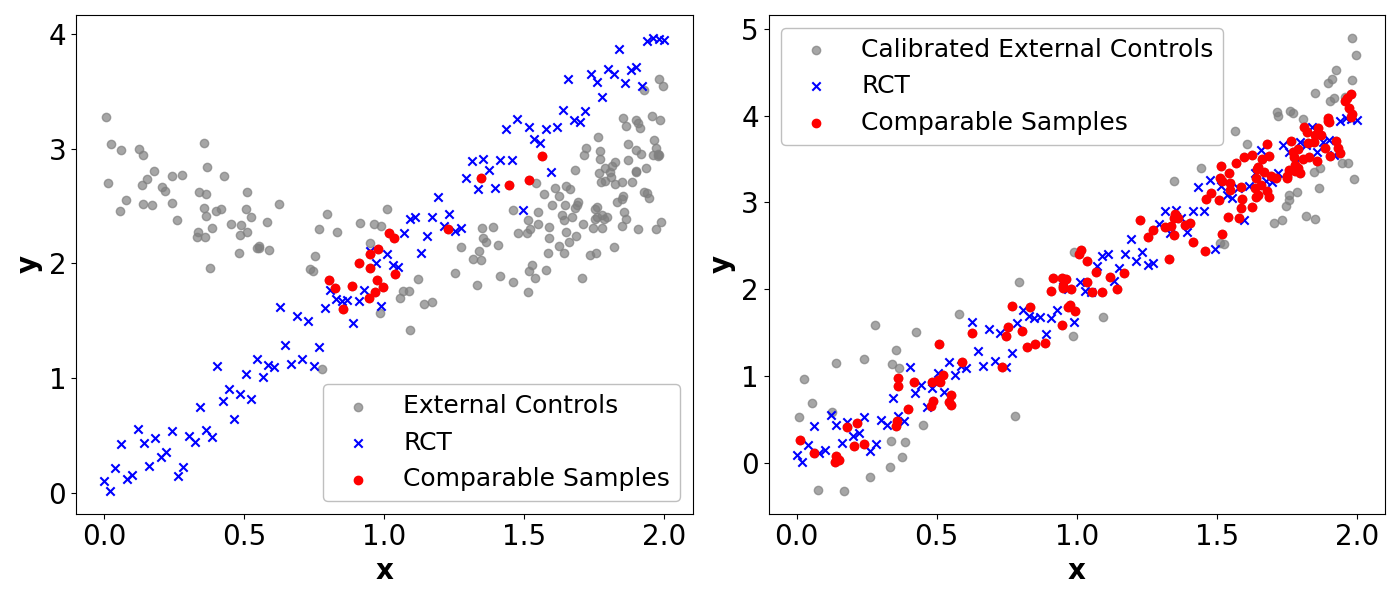}
    \end{minipage}%
    }%
    \centering
    \caption{
   Comparison of the influence-based approach and the calibrated influence-based approach in two simulated cases, see Example 2 in Section~\ref{sec-illustrate} for details. In both cases (a) and (b), the left panel shows the original outcome distribution, featuring a large difference between RCT controls and ECs, while the right panel shows the calibrated distribution, where this difference has been eliminated. Borrowed ECs are marked in red. The left panels in both (a) and (b) demonstrate that the standard influence-based approach borrows only a few ECs. In contrast, the right panels in both (a) and (b) show that, after outcome calibration, the calibrated approach successfully borrows a greater number of comparable ECs.} 
    \label{fig.2}
    \setlength{\abovecaptionskip}{0.cm}
\end{figure*}

\section{Calibrated Sample Borrowing Framework}
\label{sec-calibration}

The influence-based sample borrowing framework developed in Section~\ref{influence-function-framework},  
directly discards less-comparable samples, which may lead to suboptimal data utilization efficiency, especially when the proportion of comparable external samples is low, see Section \ref{sec7-1} for details. 
In this section, we propose a novel calibrated adaptive sample borrowing approach to address this issue and improve data utilization efficiency. 

\subsection{Borrowing Behavior Analysis of the Influence-Based Sample Borrowing Framework} \label{sec7-1}

Before introducing the calibrated approach, we first analyze the borrowing behavior of the influence-based approach. This approach measures comparability by evaluating the influence of each EC sample on the outcome model in RCT controls. It preferentially borrows ECs that exhibit outcome model patterns similar to those of RCT controls.  
When only a few ECs conform to the outcome model patterns of RCT controls (i.e., comparable samples are scarce), the influence-based method can borrow only a limited number of samples from ECs, resulting in an estimator that performs similarly to one based solely on the RCT data. This phenomenon can arise when ECs come from historical data or from different regions, potentially introducing temporal or spatial shifts relative to the patterns observed in RCT controls~\citep{Stuart-Rubin2008}.

We further illustrate this with two simulated cases shown in Fig.~\ref{fig.2}. In case (a) (left panel of Fig.~\ref{fig.2}(a)), only a small fraction of ECs are comparable. In case (b) (left panel of Fig.~\ref{fig.2}(b)), most ECs exhibit patterns that differ substantially from those of RCT controls. In both cases, the influence-based approach cannot effectively leverage external data to improve the estimation of treatment effects in RCTs, resulting in low data-utilization efficiency.




Based on the above analysis, several important questions naturally arise: Can less-comparable samples still be leveraged to improve the estimation of $\tau$? Is it possible to transform some less-comparable samples into comparable ones? Next, we propose a calibration approach to address these questions.
\subsection{Outcome Calibration} 




From the analysis of Fig.~\ref{fig.2}, the low data utilization efficiency arises from systematic differences between the outcome models of RCT controls and ECs. To address this, we propose a novel method for calibrating these differences. 
Specifically, we introduce a bias function $b(x)$ to capture the difference 
and adjust the outcomes of ECs as
\begin{equation*} 
\tilde{Y}_j = Y_j - b(X_j), \quad \text{for all } j \in \mathcal{E},
\end{equation*}
Ideally, to ensure that $b(X)$ captures the systematic difference, we require it to satisfy
$$\mathbb{E}[\tilde{Y} \mid X, R=0, A=0] = \mathbb{E}[Y \mid X, R=1, A=0]$$
which implies that
\[  \mathbb{E}[Y\mid X,R,A=0] = \mu_0(X) + (1- R) b(X).     \]
\begin{algorithm}[t!]
\caption{Adaptive Calibrated Influence-Based Sample Borrowing Approach (ACIB)}
\label{algo:two stage selection}
\begin{algorithmic}[1]
\STATE \textbf{Input:} The RCT data $\{X_i, A_i, Y_i, R_i = 1, i \in \mathcal{R} \}$, and ECs: $(X_j,A_j=0,Y_j,R_j=0,j\in\mathcal{E})$. 
\STATE Stage 1: Bias estimation. 

\begin{itemize}
    \item[$\rhd$] Nuisance parameters estimation. 

\begin{enumerate}
    \item Fit $\hat{m}(X)= \hat{\mathbb{E}}[Y\mid X,A=0]$ using all controls. 
    
    \item Fit $\hat{\pi}_0(X) := \hat{\mathbb{E}}[R\mid X,A=0]$.
    \item Compute residuals: $\hat U = Y - \hat \mu(X),~\hat V=\hat \pi_0(X)-R.$    
\end{enumerate}

    \item[$\rhd$] Bias estimation: estimate bias using Eq. \eqref{eq-bias}. 
\end{itemize}

\STATE Stage 2: Obtain the calibrated ECs: $\{\tilde Z_j = (X_j, \hat{\tilde Y}_j),j\in\mathcal{E}\}$,  
where $\hat{\tilde{Y}}_j = Y_j - \hat{b}(X_j)$.
\STATE Stage 3: Find the optimal subset of calibrated ECs using Algorithm \ref{algo1}.

\STATE \textbf{Return:} 
The comparable calibrated samples set
 $\mathcal{S}_k^*$ and the final estimator of $\tau$.
\end{algorithmic}
\end{algorithm} 
Therefore, to estimate $b(X)$, we could employ all control samples (including both RCT controls and ECs) by modeling $Y$ as a function of $(X, R)$: 
  \begin{equation}\label{eq7-b}
\begin{aligned}
    &Y=\mu_0(X) + (1-R) b(X)+\epsilon,\\
    &\mathbb{E}[\epsilon\mid X,R,A=0]=0.
\end{aligned}
\end{equation}
Motivated by the R-learner framework~\citep{Nie-Wager2021, Wu-Shu-Rlearner}, we first project the model onto the covariate space and obtain that 
\begin{equation}  \label{eq8}
     \mathbb{E}[Y\mid X,A=0]=\mu_0(X)+(1-\pi_0(X))b(X),
\end{equation}
where $\pi_0(X) = \mathbb{E}[R \mid X, A=0]$ denotes the sampling  score among the controls. 
For ease of presentation, let $m(X) = \mathbb{E}[Y \mid X, A=0]$ denote the outcome model for all controls.
Subtracting Eq.~\eqref{eq8} from Eq.~\eqref{eq7-b} yields
\begin{equation} \label{eq9}
Y - m(X) = (\pi_0(X) - R) b(X) + \epsilon. 
\end{equation}
Based on Eq.~\eqref{eq9}, we estimate $b(X)$ as follows: 
\begin{itemize}
    \item Step 1: Estimate $m(X)$ and $\pi_0(X)$ using all controls, denoted by $\hat{m}(X)$ and $\hat{\pi}_0(X)$, respectively. The corresponding residuals are then given by $\hat{U} = Y - \hat{m}(X)$ and $\hat{V} = \hat{\pi}_0(X) - R$.

    \item Step 2: Estimate $b(X)$ by regressing $\hat{U}$ on $\hat{V}$. Without loss of generality, suppose $b(X)$ is parameterized as $b(X; \theta)$ with $\theta$ being its parameter. 
    Then, we estimate $\theta$ by minimizing the penalized least squares loss: 
\begin{equation} \label{eq-bias}
    \sum_{\{i \in \mathcal{R}\cup \mathcal{E}: A_i = 0\}} (\hat{U}_i-b(X_i;\theta)\hat{V}_i)^2+\lambda J(\theta), 
\end{equation}
where $J(\theta)$ is a penalty term used to control the smoothness of $b(X; \theta)$. The estimate of $b(X)$ is denoted by $\hat{b}(X)$.
\end{itemize}

After obtaining the estimate of $b(x)$, we construct the calibrated ECs: 
$\{\tilde Z_j = (X_j, \hat{\tilde Y}_j),j\in\mathcal{E}\}$,  
where $\hat{\tilde{Y}}_j = Y_j - \hat{b}(X_j)$ represents the calibrated outcome. 
Finally, we replace the original ECs with the calibrated ECs and then apply the adaptive influence-based sample borrowing framework to estimate $\tau$, which we refer to as the adaptive calibrated influence-based sample borrowing approach. The associated procedures are summarized in Algorithm~\ref{algo:two stage selection}.

Notably, in Stage 1 of Algorithm~\ref{algo:two stage selection}, we estimate $m(X)$, $\pi_0(X)$, and $b(X)$ using all observed control samples, fully utilizing the available data. 
In Stage 3 of Algorithm~\ref{algo:two stage selection}, we propose to use Algorithm~\ref{algo1} again. The rationale is that, although outcome calibration aligns the conditional means of the potential outcome between RCT controls and ECs, not all  calibrated ECs should be borrowed because some may still introduce substantial noise. By applying the adaptive influence-based sample borrowing framework again, we can selectively retain low-noise samples while filtering out those with high variability.

\section{Borrowing Behavior Analysis of the Proposed Methods} \label{sec-illustrate}
We provide two illustrative examples to intuitively demonstrate the borrowing behavior of the proposed methods, showing how they work and why they are effective.

\begin{example}[Borrowing behavior of the adaptive lasso-based approach and the proposed influence-based method]
We compare the adaptive influence-based borrowing approach with the adaptive lasso-based borrowing approach~\citep{gao2024improving} using a simulated dataset generated as follows: 
\[     Y_{\textup{rt}} = 2 X_{\textup{rt}} + \epsilon_{\textup{rt}},\quad Y_{\textup{ec}} = -0.9 + 2.5 X_{\textup{ec}} + \epsilon_{\textup{ec}},\] 
where $X_{\textup{rt}},X_{\textup{ec}}\sim U(0,2)$,  
  $\epsilon_{\textup{rt}}\sim N(0,0.2^2)$, and $\epsilon_{\textup{ec}}\sim N(0,0.5^2)$.
  The subscript ``\textup{rt}'' denotes the RCT controls sample and ``\textup{ec}'' denotes the EC sample. 
  The sample sizes were $N_c=100$ for RCT controls and $N_{\mathcal{E}}=200$ for the ECs, with the latter including five additional outliers: $(1.6, 0.5)$, $(1.7, 0.5)$, $(1.8, 0.5)$, $(1.9, 0.5)$, and $(2.0, 0.5)$.  

As shown in Fig.~\ref{fig.1}, the adaptive lasso-based approach proposed by~\cite{gao2024improving} achieves suboptimal comparability and is sensitive to outliers. This suggests that relying solely on the conditional mean difference may not be an optimal strategy for identifying comparable samples. In contrast, the proposed influence-based borrowing approach achieves better performance by employing the influence function to quantify the comparability of each EC. 
\end{example}

\begin{example}[Borrowing behavior of the adaptive calibrated influence-based approach] 
We compare the adaptive calibrated influence-based borrowing approach with its non-calibrated counterpart in two simulation cases below. 
\begin{itemize}
    \item Case a: $   Y_{\textup{rt}} = 2 X_{\textup{rt}} + \epsilon_{\textup{rt}},\quad Y_{\textup{ec}} = -2 + 3 X_{\textup{ec}} + \epsilon_{\textup{ec}},$
    \item Case b: $   Y_{\textup{rt}} = 2 X_{\textup{rt}} + \epsilon_{\textup{rt}},\quad Y_{\textup{ec}} = X_{\textup{ec}}^2- 2 X_{\textup{ec}} + 2 + \epsilon_{\textup{ec}},$
\end{itemize}
where $X_{\textup{rt}},X_{\textup{ec}}\sim U(0,2)$, 
  $\epsilon_{\textup{rt}}\sim N(0,0.2^2)$, and $\epsilon_{\textup{ec}}\sim N(0,0.4^2)$. The sample sizes were $N_c=100$ for the RCT controls and $N_{\mathcal{E}}=200$ for the ECs.

As shown in Fig.~\ref{fig.2}, the calibrated approach improves data utilization efficiency relative to the non-calibrated approach by adjusting EC outcomes via subtracting estimated biases from each sample, enabling the borrowing of a substantially larger number of comparable ECs. 
\end{example}


\section{Experiments} \label{sec-experiment}

We conduct experiments on two simulated datasets and one real-world dataset to evaluate the numerical performance of the proposed methods. 


{\bf Baselines.} To show the efficiency of the proposed method for borrowing comparable ECs, we compare our proposed adaptive influence-based borrowing approach (AIB) and adaptive calibrated influence-based borrowing approach (ACIB) with the five baselines:  
\begin{enumerate}
    \item No borrowing approach (NB). Estimating $\tau$ solely based on RCT data using the AIPW estimator~\citep{cao2009improving}.
    \item Full borrowing approach (FB), estimating $\tau$ using the RCT data and all ECs~\citep{li2023improving}. 
    
    \item Full calibrated borrowing approach (FCB). It first calibrates the outcomes in ECs, and then borrows all calibrated ECs using the method of \cite{li2023improving}. This approach is an ablated version of the ACIB method, with the sample borrowing process removed. We include it here for comparison and ablation purposes. 
    
        
    \item Bayesian predictive $p$-value power prior estimator (PPP). It is an extension of the power prior, which discounts each EC sample according to its outcome compatibility using Box's $p$-value~\citep{kwiatkowski2024case}.
    
    \item Adaptive lasso-based borrowing approach (ALB). It evaluates the comparability of ECs using conditional means of outcome differences and borrows only EC samples with estimated zero bias~\citep{gao2024improving}. 
    
\end{enumerate}

The proposed AIB and ACIB approaches assess the comparability of ECs using influence scores, where a smaller score indicates stronger comparability. This contrasts with the ALB approach, which measures comparability using the magnitude of the estimated bias ($\tilde{\bm{b}}$), and the PPP approach, which relies on case weights~\citep{kwiatkowski2024case}. For these approaches, stronger comparability is indicated by a smaller bias estimate and a larger case weight, respectively. By ranking influence scores, bias estimates, or case weights, we can identify the top-$k$ most comparable ECs for the AIB, ACIB, ALB, and PPP approaches, respectively. 
Implementation details for the proposed methods are provided in Appendix C. 

{\bf Evaluation Metrics.} We evaluate the numerical performance using five metrics: point estimate (Est), absolute bias ($|\text{Bias}|$), standard deviation (SD\footnote{The SD is calculated using the variance estimation formulas for $\hat \tau_{\cS}$ given in \ref{section.6.2}.}), mean squared error (MSE). 
In addition, we also present the corresponding number of borrowed ECs ($\#\text{ECs}$).

\subsection{Simulation Study}

For clarity, we list the main research questions (RQs):
\begin{itemize}
    \item \textbf{RQ1:}  What is the overall performance of the proposed approaches versus the competing baselines?

    \item \textbf{RQ2:} How do the MSE curves of different borrowing approaches vary with $K$, when the top-$k$ ECs are borrowed?

    \item \textbf{RQ3:} How effective is the proposed outcome calibration in improving the data utilization efficiency of ECs?

    \item \textbf{RQ4:} How does the degree of outcome difference between RCT controls and ECs influence the performance of the proposed approaches?
   
\end{itemize}



Throughout the simulation, we generate $N = N_{\mathcal{R}} + N_{\mathcal{E}}$ observations, each with $d=8$ dimensional covariates. Each covariate is independently drawn from a truncated normal distribution, $N_{[-2,2]}(0, 1)$. The RCT dataset consists of $N_\mathcal{R}=300$ samples, with $N_t = 200$ in the treated group and $N_c = 100$ in the control group, while the EC dataset includes $N_\mathcal{E} = 1000$ samples.  
 We generate the data source indicator $R$ as $R\mid X\sim \text{Ber}\{\pi(X)\}$ given the sample sizes $(N_{\mathcal{R}},N_{\mathcal{E}})$, where $\pi(X)$ is given as an logistic regression function.    
The treatment $A$ in the RCT is assigned completely at random (i.e., $\mathbb{P}(A_i = 1 \mid R_i = 1) = N_t / N_{\mathcal{R}}$), whereas all ECs consist only of controls (i.e., $\mathbb{P}(A_i = 0 \mid R_i = 0) = 1$).  


\begin{table}[t!]
\caption{Numerical results for various sample borrowing approaches with $\delta = 2.0$. The true ATE is 0.246 for the Linear case and 0.127 for the Nonlinear case.}
\centering
\begin{tabular}{c|cccccc}
\toprule
Linear & Est & $|\text{Bias}|$ & SD  & MSE &$\#\text{ECs}$ \\
\midrule
NB & 0.254  & 0.008  & 0.120  & 0.015   & 0  \\
        FB & 0.147  & 0.099  & 0.083  & 0.017   & 1000  \\
        FCB & 0.201  & 0.045  & 0.081  & 0.009   & 1000  \\ 
        PPP & 0.170  & 0.077  & 0.094  & 0.015   & 157  \\ 
        ALB & 0.219  & 0.027  & 0.095  & 0.010   & 209  \\   \hdashline
        AIB & 0.247  & \textbf{0.001}  & 0.080  & \textbf{0.006}   & 400  \\ 
        ACIB & 0.251  & 0.005  & \textbf{0.078}  & \textbf{0.006}   & 650  \\ 
\midrule

\midrule
NonLinear & Est & $|\text{Bias}|$ & SD  & MSE  &$\#\text{ECs}$ \\
\midrule
NB & 0.106  & 0.021  & 0.120  & 0.015  & 0  \\ 
        FB & -0.095  & 0.223  & 0.096  & 0.059  & 1000  \\ 
        FCB & 0.091  & 0.036  & 0.097  & 0.011  & 1000  \\ 
        PPP & 0.095  & 0.033  & 0.098  & 0.011  & 331  \\ 
        ALB & 0.095  & 0.032  & 0.093  & 0.010  & 201  \\   \hdashline
        AIB & 0.102  & 0.026  & \textbf{0.092}  & 0.009  & 250  \\ 
        ACIB & 0.126  & \textbf{0.001}  & \textbf{0.092}  & \textbf{0.008}  & 450  \\ 
\bottomrule
\end{tabular}
\label{tab2}
\end{table}

Following previous work~\citep{gao2024improving}, we design two data-generating mechanisms for the potential outcomes: a linear outcome model (denoted as ``Linear") and a nonlinear outcome model (denoted as ``Nonlinear"). For the ``Linear" mechanism, the outcomes are generated as follows:  
\begin{align*}
    &Y_{\textup{rt}}=\left\{ \begin{array}{ll} Y_{\textup{rt}}(0)=\beta_1^\intercal X_{\textup{rt}} + \epsilon_{\textup{rt}}, & A = 0 \\ Y_{\textup{rt}}(1)=\beta_1^\intercal X_{\textup{\textup{rt}}} + \alpha_1^\intercal(1, X_{\textup{rt}})  + \epsilon_{\textup{rt}}, & A = 1  \end{array} \right.\\
    &Y_{\textup{ec}}=\beta_1^\intercal X_{\textup{ec}} +\delta \cdot T(X_{\textup{ec}}) + \epsilon_{\textup{ec}}, 
\end{align*}
where the subscript ``$\textup{rt}$'' denotes the RCT sample and ``$\textup{ec}$'' denotes the EC sample, $\beta_1 \sim U([2,3]^d)$ is a $d$-dimensional vector with each component independently and uniformly distributed on $[2,3]$, $\epsilon_{\textup{rt}} \sim N(0,1.0^2)$, $\alpha_1 = 0.1^{d+1}$ (where $0.1^d$ denotes a $(d+1)$-dimensional vector with all elements equal to $0.1$), $\epsilon_{\textup{ec}} \sim N(0,1.2^2)$, and $T(X_{\textup{ec}}) = 0.05^d X_{\textup{ec}}$.

For the ``Linear'' mechanism, the parameter $\delta \geq 0$ controls the magnitude of outcome difference, also known as inconcurrency bias or concept shift, between RCT controls and ECs, with larger $\delta$ indicating more severe inconcurrency bias. 
Similarly, for the ``Nonlinear'' mechanism, the outcomes are generated as follows:  
\begin{align*}
    &Y_{\textup{rt}}=\left\{ \begin{array}{ll}Y_{\textup{rt}}(0)= \textup{exp}\{\beta_2^\intercal X_{\textup{rt}}\} + \epsilon_{\textup{rt}}, &  A = 0 \\ Y_{\textup{rt}}=  \textup{exp}\{\beta_2^\intercal X_{\textup{rt}} + \alpha_2^\intercal(1, X_{\textup{rt}})\}  + \epsilon_{\textup{rt}}, & A = 1  \end{array} \right.\\
    & Y_{\textup{ec}}= \textup{exp}\{\beta_2^\intercal X_{\textup{ec}}\} + \delta\cdot T(X_{\textup{ec}}) + \epsilon_{\textup{ec}},
\end{align*} 
where $\beta_2\sim U([-1,1]^{d})$, $\alpha_1=0.1^{d+1}$, and $T(X_{\textup{ec}})=0.1^{d}X_{\textup{ec}}$. The true ATE in the RCT population ($\tau$) for  ``Linear'' and ``Nonlinear'' cases are 0.246 and 0.127, obtained through a simulation with a sample size of $N_{\mathcal{R}}=100000$.

\begin{figure}[h]
    \centering
    \subfloat[Linear]{
    \begin{minipage}[t]{0.4\linewidth}
    \centering
    \includegraphics[width=1.0\textwidth]{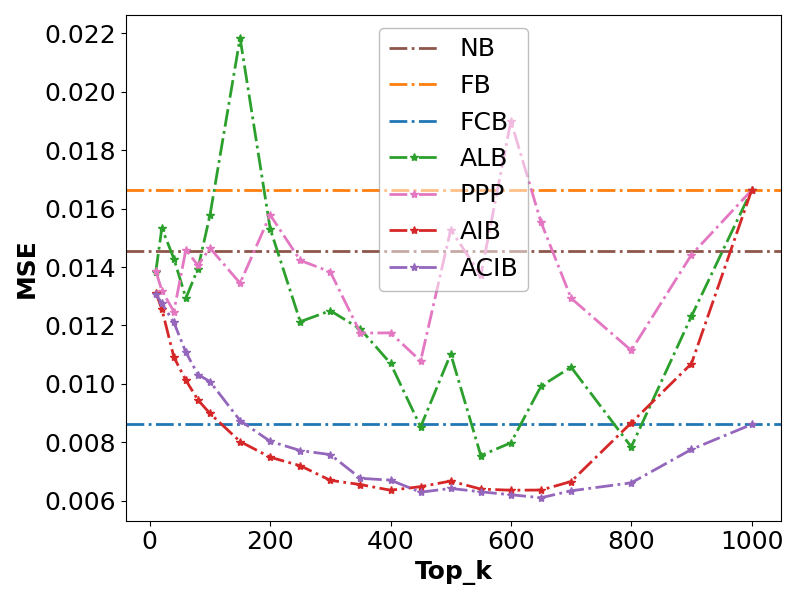}
    \end{minipage}%
    }%
    \subfloat[Nonlinear]{
    \begin{minipage}[t]{0.4\linewidth}
    \centering
    \includegraphics[width=1.0\textwidth]{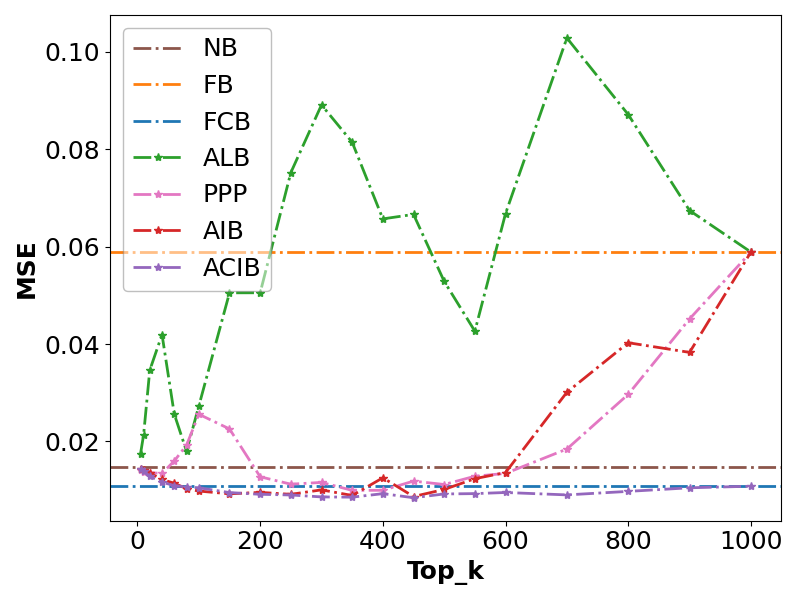}
    \end{minipage}%
    }%
    \caption{Comparison of various approaches
    as top-$k$ ECs are borrowed, where $\delta=2.0$.}
    \label{fig.3}
\end{figure}

\begin{figure}[htbp]
    \centering
    \subfloat[Linear]{
    \begin{minipage}[t]{0.4\linewidth}
    \centering
    \includegraphics[width=1.0\textwidth]{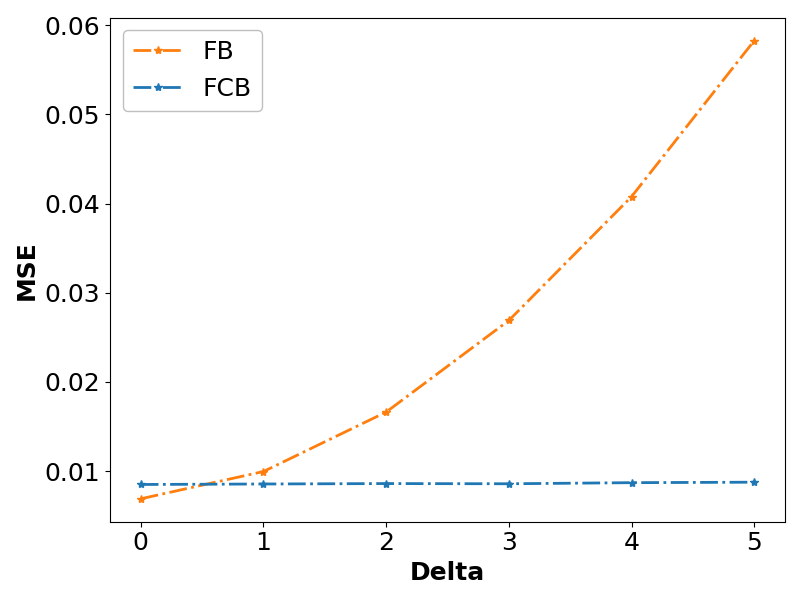}
    \end{minipage}%
    }%
    \subfloat[Nonlinear]{
    \begin{minipage}[t]{0.4\linewidth}
    \centering
    \includegraphics[width=1.0\textwidth]{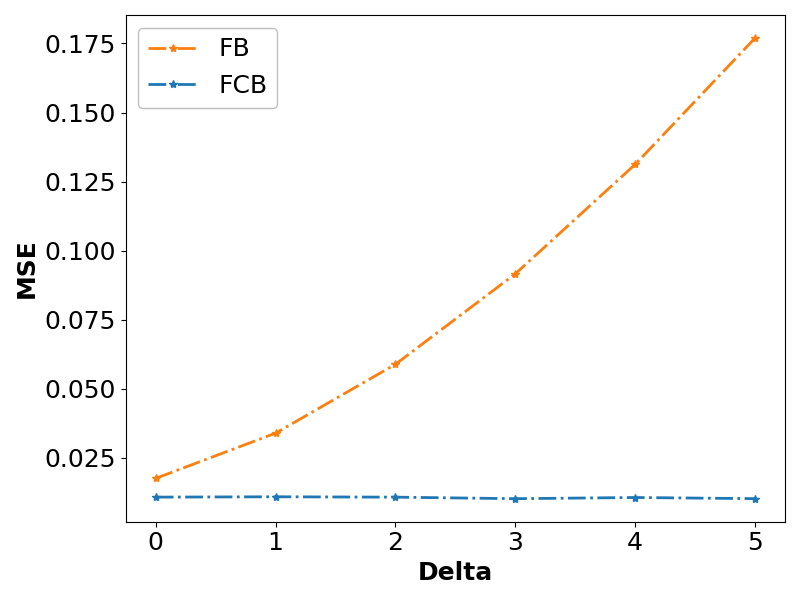}
    \end{minipage}%
    }%
    \caption{Comparison of FB and FCB under different $\delta$.}
    \label{fig.4}
\end{figure}

\textbf{Performance Comparison (RQ1).} 
We first compare the performance of various sample borrowing approaches for estimating $\tau$ under $\delta = 2.0$. The numerical results are summarized in Table~\ref{tab2}, where the borrowed ECs correspond to the optimal MSE. 
From it, we have the following observations. 
(1) The proposed AIB and ACIB outperform baselines, particularly in metrics of $|\text{Bias}|$, SD, and MSE.
This indicates the superiority of the AIB and ACIB in quantifying and borrowing comparable ECs.
 (2) Compared with AIB, ACIB borrows a larger number of ECs in all cases, indicating more comparable ECs are generated by outcome calibration. 

The ACIB contains two key components: outcome calibration and adaptive borrowing. Table~\ref{tab2} also presents an ablation study of the ACIB framework, where AIB is an ablated version obtained by removing the outcome calibration component, FCB is obtained by removing the adaptive borrowing component, and FB is obtained by removing both components. Their relationships are summarized in Table~\ref{tab.abl}.

\begin{table}[H] 
\centering
\caption{Ablation studies}
\resizebox{0.7\linewidth}{!}
{\begin{tabular}{cccc}
\toprule
\text{Method} & Outcome Calibration &  Adaptive Borrowing & Ablative Version   \\
\hline 
ACIB     &   $\checkmark$  &  $\checkmark$  &    $---$  \\  
ACIB    &   $\times$  &  $\checkmark$  & $\Longrightarrow$ AIB{ } \\ 
ACIB     &   $\checkmark$  &  $\times$  & $\Longrightarrow$ FCB \\ 
ACIB     &   $\times$  &  $\times$  & $\Longrightarrow$ FB{\,\, } \\ 
\bottomrule
\end{tabular}}
\label{tab.abl}
\end{table}


 From Table~\ref{tab2}, we observe that FCB consistently outperforms FB, and likewise, ACIB outperforms AIB. These comparisons highlight the critical role of outcome calibration in enhancing data utilization efficiency.
Moreover, AIB outperforms FB, and ACIB further outperforms FCB. These results demonstrate that the influence-based borrowing strategy effectively quantifies the comparability of EC samples and identifies a better subset for borrowing.
\textbf{Sample Borrowing Analysis (RQ2)}. 
We examine how the MSE changes as different top-$k$ ECs are borrowed across various borrowing approaches under $\delta = 2.0$.
As shown in Fig.~\ref{fig.3}, 
the proposed AIB and ACIB initially exhibit a decrease in MSE, followed by an increase as $K$ grows. This pattern indicates that both AIB and ACIB effectively prioritize borrowing highly comparable samples, while the MSE gradually declines once less comparable ECs are included. 
 More results on different $\delta$ are provided in Appendix C.

\textbf{Effectiveness of Outcome Calibration (RQ3).}
We then assess the effectiveness of outcome calibration. 
From Fig.~\ref{fig.3}, the MSE of ACIB reaches its minimum at a larger value of $K$ compared with AIB, indicating that outcome calibration effectively produces more comparable ECs.   
As shown in Fig.~\ref{fig.4} and Fig.~\ref{fig.5}, compared with their non-calibrated counterparts (FB and AIB), the calibrated approaches (FCB and ACIB) remain more stable as the inconcurrency bias ($\delta$) increases. These results demonstrate that outcome calibration mitigates inconcurrency bias and improves data utilization efficiency.

We also assess the effectiveness of the outcome calibration method from the perspectives of distributional distance and the robustness of the estimated influence scores, with the corresponding results provided in Appendix C. 


\textbf{Sensitivity Analysis for Inconcurrency Bias (RQ4).} We further evaluate the performance of the proposed approaches under varying levels of inconcurrency bias ($\delta=\{0,1,2,3,4,5\}$).  
From Figs.~\ref{fig.5}(a) and \ref{fig.5}(c), the number of optimal ECs borrowed by AIB—corresponding to the minimum of the MSE curve—gradually decreases as $\delta$ increases. In particular, when $\delta=0$, AIB borrows the largest set of comparable ECs, which is expected since the true pool of comparable samples shrinks with larger $\delta$. In contrast, Figs.~\ref{fig.5}(b) and \ref{fig.5}(d) show that the optimal number of ECs borrowed by ACIB remains relatively stable across different $\delta$ values, with $\#\text{ECs} \approx 650$ in Linear case and $\#\text{ECs} \approx 450$ in Nonlinear case.


\begin{figure*}[t]
    
    \centering
    \subfloat[Linear, AIB]{
    \begin{minipage}[t]{0.4\linewidth}
    \centering
    \includegraphics[width=1.0\textwidth]{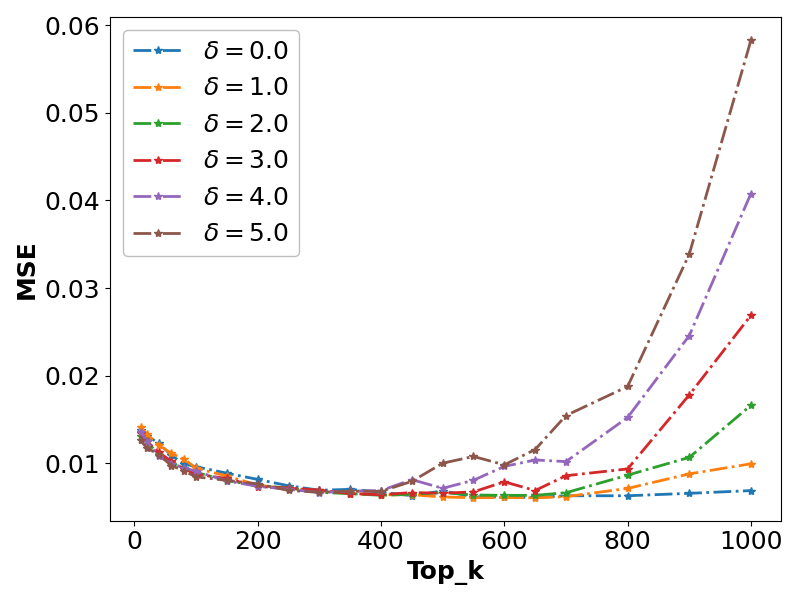}
    \end{minipage}%
    }%
    \subfloat[Linear, ACIB]{
    \begin{minipage}[t]{0.4\linewidth}
    \centering
    \includegraphics[width=1.0\textwidth]{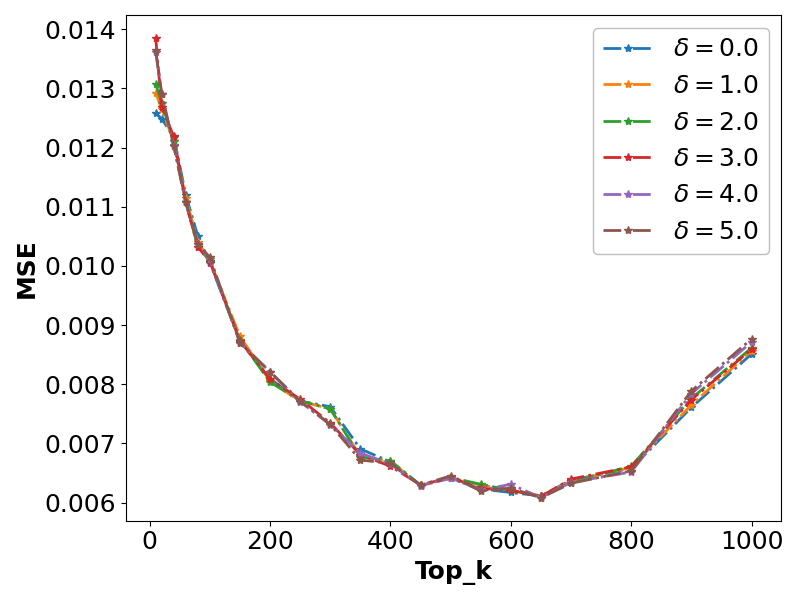}
    \end{minipage}%
    }%
    
    \subfloat[Nonlinear, AIB]{
    \begin{minipage}[t]{0.4\linewidth}
    \centering
    \includegraphics[width=1.0\textwidth]{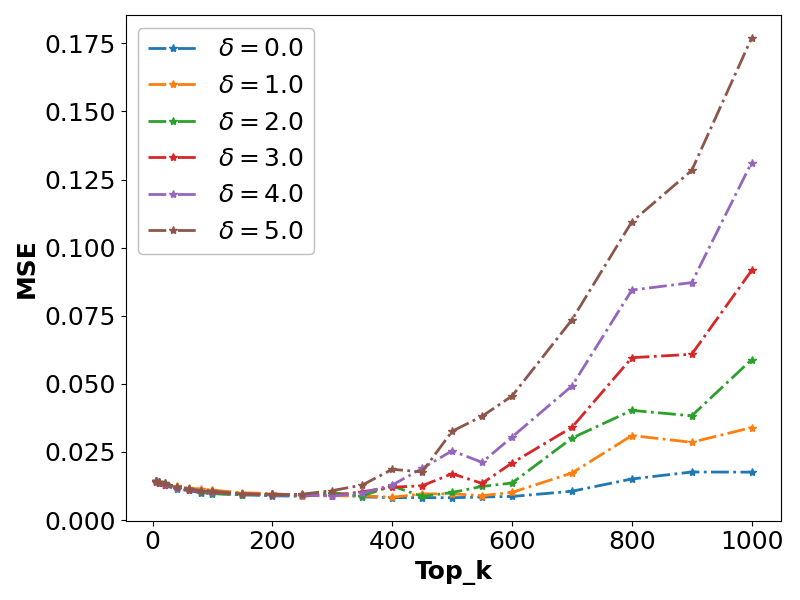}
    \end{minipage}%
    }%
    \subfloat[Nonlinear, ACIB]{
    \begin{minipage}[t]{0.4\linewidth}
    \centering
    \includegraphics[width=1.0\textwidth]{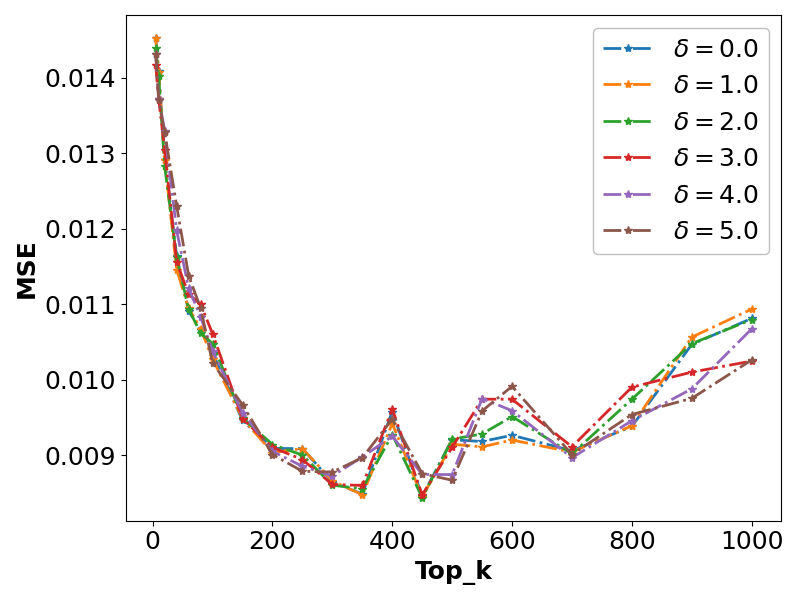}
    \end{minipage}%
    }%
    \caption{Performance of AIB and ACIB approaches at different $\delta$.}
    \label{fig.5}
    
\end{figure*}

\subsection{Real-Data Application}


In addition to simulation studies, we further evaluate the proposed methods using real-world datasets. Specifically, we utilize data from the National Supported Work (NSW) program (representing the RCT dataset)~\citep{lalonde1986evaluating} and the Population Survey of Income Dynamics (PSID) dataset (serving as external controls)~\citep{dehejia2002propensity}\footnote{This data is available at \url{https://users.nber.org/~rdehejia/nswdata2.html}}. The NSW dataset consists of $345$ samples with $185$ in the treated group and $260$ in the control group, while the PSID dataset includes $123$ samples. The NSW program investigates whether providing intensive job training and supported work experience could improve employment outcomes for economically disadvantaged populations. This dataset includes information on training program participation, age, education, race, Hispanic status, marital status, high school degree, and earnings-- RE74 (earnings in 1974), RE75 (earnings in 1975), and RE78 (earnings in 1978)—measured in thousands of dollars. We treat the training program participation as the treatment $A$ and the RE78 as the outcome of interest $Y$ and the remaining eight variables as the baseline covariates $X$. The EC dataset, PSID, is used for external comparison with the NSW dataset, which includes the same covariates and outcome as the NSW dataset, with all observations in the control group ($A = 0$ for all units). Table~\ref{stat info} presents the summary statistics for the baseline covariates and outcome, revealing differences between the two datasets, see Appendix C for details. 

\begin{table*}[htbp]
\caption{Mean and standard deviations (in parentheses) of baseline characteristics.}
\centering
\begin{tabular}{c|cccccc}
\toprule
Variables & Treated Group
in NSW& Control Group
in NSW & Control Group
in PSID\\
\midrule
Age ($X_1$) & 25.82 (7.16) & 25.05 (7.06) & 38.26 (12.89) \\
Education ($X_2$) & 10.35 (2.01) & 10.09 (1.61) & 10.30 (3.18) \\
Race ($X_3$) & 0.84 (0.36) & 0.83 (0.38) & 0.45 (0.50) \\
Hispanic ($X_4$) & 0.06 (0.24) & 0.11 (0.31) & 0.12 (0.32) \\
Marital Status ($X_5$) & 0.19 (0.39) & 0.15 (0.36) & 0.70 (0.46) \\
Nodegree ($X_6$) & 0.71 (0.46) & 0.83 (0.37) & 0.51 (0.50) \\
RE74 ($X_7$) & 2.10 (4.89) & 2.11 (5.69) & 5.57 (7.26) \\
RE75 ($X_8$) & 1.53 (3.22) & 1.27 (3.10) & 2.61 (5.57) \\
RE78 ($Y$) & 6.35 (7.87) & 4.55 (5.48) & 5.28 (7.76) \\
\bottomrule
\end{tabular}
\label{stat info}
\end{table*}

\textbf{Results.}
We investigate whether the job training program increases participants' income in 1978. Fig.~\ref{fig.6} illustrates the sample borrowing behavior of different approaches. Consistent with the findings in the simulation study, both AIB and ACIB show an initial decrease in MSE followed by an increase as $K$ grows (with minor fluctuations), indicating that the proposed methods effectively prioritize borrowing comparable ECs.

\begin{figure}[htbp]
    \centering
    \includegraphics[width=0.5\textwidth]{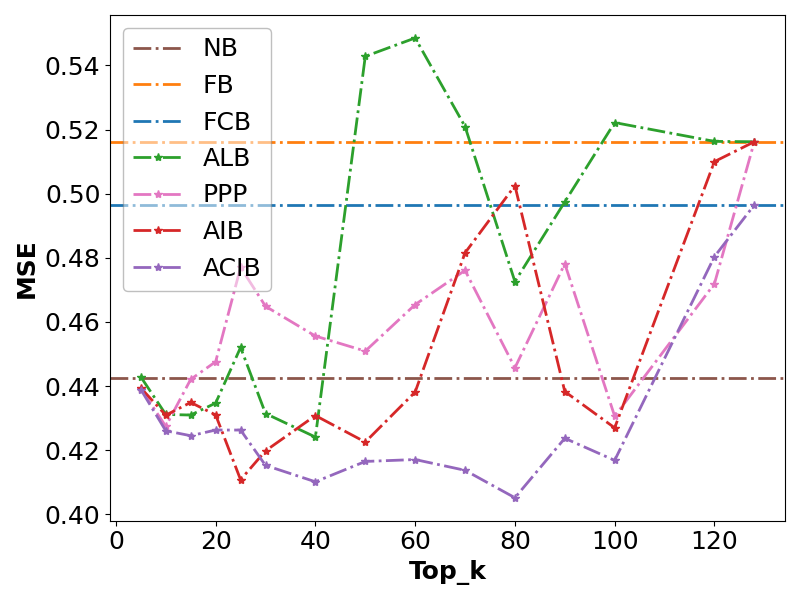}
    \caption{Comparison of various approaches as top-$k$ ECs are borrowed.}
    \label{fig.6}
\end{figure}

Table~\ref{tab5} reports the numerical results, where the borrowed ECs correspond to the optimal MSE. The point estimate obtained from NB is $2.264$, indicating that the job training program yields a significant positive average treatment effect (ATE), increasing income by $2.264$ in 1978. Unlike the simulation study, the true ATE is unknown in this real-world setting. Therefore, we treat $2.264$ as the reference value when computing 
absolute bias ($|\text{Bias}|$) and mean squared error (MSE).

\begin{table}[htbp]
\caption{Analysis results on real-world data (NSW + PSID)}
\centering
\begin{tabular}{c|cccccc}
\toprule
Method & Est & $|\text{Bias}|$ & SD  & MSE &$\#\text{ECs}$ \\
\midrule
NB &2.264 & -- &0.665  & 0.442 &0 \\
FB& 1.942 & 0.322 &0.642 & 0.516 &128\\
FCB&1.971 & 0.293 &0.641 & 0.497 &128 \\
PPP & 2.505  & 0.241  & 0.640  & 0.467  & 28  \\
ALB&2.382 & 0.117 &0.652 & 0.439 &31\\
\hdashline
AIB &2.283  & \textbf{0.018} &0.641 & 0.411 & 25 \\
ACIB &2.347 & 0.082 & \textbf{0.631} & \textbf{0.405} &80 \\
\bottomrule
\end{tabular}
\label{tab5}
\end{table}

From Table~\ref{tab5}, NB exhibits a larger SD than the other approaches, owing to the limited sample size. 
The point estimates of FB, FCB, and PPP differ slightly from NB, likely due to biases introduced by potential outliers in the ECs. In contrast, ALB and the proposed AIB and ACIB yield estimates closer to the benchmark (NB), with AIB and ACIB achieving smaller MSE. 
Furthermore, compared with AIB, ACIB achieves higher estimation efficiency, reflected by smaller SD and MSE. This improvement may stem from the fact that ACIB borrows substantially more comparable ECs ($\#\text{ECs}=80$) than AIB ($\#\text{ECs}=25$). 
In summary, the proposed approaches (AIB and ACIB) achieve effective borrowing from ECs and substantially improve ATE estimation in terms of MSE. 

\section{Conclusion}  \label{sec-conclusion}
In this paper, we first reveal the limitations of existing approaches for borrowing ECs in estimating treatment effects in RCTs. We then propose an adaptive influence-based borrowing framework to overcome these limitations. The framework consists of two key steps: quantifying the comparability of each EC sample and identifying the optimal subset of comparable samples. For the first step, we employ influence functions to compute sample-specific influence scores. For the second step, we develop a data-driven strategy to select the optimal subset of ECs by minimizing the estimator's MSE.Moreover, to address the limitation that influence-based methods may borrow only a small number of ECs when the outcome model in ECs differs substantially from that in RCT control group, we introduce an outcome calibration method to enhance data utilization efficiency and further strengthen the adaptive influence-based borrowing framework.

A limitation of this work is that computing the Hessian matrix for influence scores may have a high computational burden for large-scale models, particularly deep neural networks with millions or even billions of parameters. To address this challenge, the research community has developed several efficient approximation techniques, such as conjugate gradients (CG) for Hessian-vector products~\citep{martens2010deep} and Stochastic estimation methods~\citep{agarwal2017second}. A promising direction for future research would be to integrate these approximations into our framework, thereby enhancing its practicality for complex large-scale models. Another interesting extension is to generalize the proposed method to survival outcomes~\citep{gao2024doubly}.


\bibliographystyle{plainnat}
\bibliography{refs}

@article{rubin1974estimating,
  title={Estimating causal effects of treatments in randomized and nonrandomized studies.},
  author={Rubin, Donald B},
  journal={Journal of educational Psychology},
  volume={66},
  number={5},
  pages={688},
  year={1974},
  publisher={American Psychological Association}
}

@article{Hahn1998,
	author = {Jinyong Hahn},
	journal = {Econometrica},
	number = {2},
	pages = {315-331},
	title = {On the Role of the Propensity Score in Efficient Semiparametric Estimation of Average Treatment Effects},
	volume = {66},
	year = {1998}}

@book{tsiatis2006semiparametric,
  title={Semiparametric theory and missing data},
  author={Tsiatis, Anastasios A},
  volume={4},
  year={2006},
  publisher={Springer}
}

@article{newey1990semiparametric,
  title={Semiparametric efficiency bounds},
  author={Newey, Whitney K},
  journal={Journal of Applied Econometrics},
  volume={5},
  number={2},
  pages={99--135},
  year={1990},
  publisher={Wiley Online Library}
}

@article{splawa1990application,
  title={On the application of probability theory to agricultural experiments. Essay on principles. Section 9},
  author={Splawa-Neyman, Jerzy and Dabrowska, Dorota M and Speed, Terrence P},
  journal={Statistical Science},
  pages={465--472},
  year={1990},
  publisher={JSTOR}
}

@article{rubin1980randomization,
  title={Randomization analysis of experimental data: The Fisher randomization test comment},
  author={Rubin, Donald B},
  journal={Journal of the American Statistical Association},
  volume={75},
  number={371},
  pages={591--593},
  year={1980},
  publisher={JSTOR}
}

@article{rosenbaum1983central,
  title={The central role of the propensity score in observational studies for causal effects},
  author={Rosenbaum, Paul R and Rubin, Donald B},
  journal={Biometrika},
  volume={70},
  number={1},
  pages={41--55},
  year={1983},
  publisher={Oxford University Press}
}

@article{imbens2004nonparametric,
  title={Nonparametric estimation of average treatment effects under exogeneity: A review},
  author={Imbens, Guido W},
  journal={Review of Economics and Statistics},
  volume={86},
  number={1},
  pages={4--29},
  year={2004},
  publisher={MIT Press 238 Main St., Suite 500, Cambridge, MA 02142-1046, USA journals~…}
}

@article{Bang-Robins-2005,
	author = {Heejung Bang and James M. Robins},
	journal = {Biometrics},
	pages = {962-972},
	title = {Doubly robust estimation in missing data and causal inference models},
	volume = {61},
	year = {2005}}

@article{Kang-Schafer-2007,
	author = {Joseph D.Y. Kang and Joseph L. Schafer},
	journal = {Statistical Science},
	pages = {523-539},
	title = {Demystifying double robustness: a comparison of alternative strategies for estimating a population mean from incomplete data},
	volume = {22},
	year = {2007}}

@article{Tan2007,
	author = {Zhiqiang Tan},
	journal = {Statistical Science},
	pages = {560-568},
	title = {Comment: understanding OR, PS and DR},
	volume = {22},
	year = {2007}}

@article{chernozhukov2018double,
	author = {V. Chernozhukov and D. Chetverikov and M. Demirer and E. Duflo and C. Hansen and W. Newey and J. Robins},
	journal = {The Econometrics Journal},
	pages = {1-68},
	title = {Double/debiased machine learning for treatment and structural parameters},
	volume = {21},
	year = {2018}}

@article{wu2024comparative,
  title={On the comparative analysis of average treatment effects estimation via data combination},
  author={Wu, Peng and Luo, Shanshan and Geng, Zhi},
  journal={Journal of the American Statistical Association},
  volume = {0},
 number = {0},
  pages = {1--12}, 
  year={2025}
}

@inproceedings{Kallus-Puli2018,
author = {Nathan Kallus and Aahlad Manas Puli and Uri Shalit},
title = {Removing Hidden Confounding by Experimental Grounding},
year = {2018},
booktitle = {Proceedings of the 32nd International Conference on Neural Information Processing Systems},
numpages = {10911-10920}
}

@article{Colnet-etal2024,
	author = {B{\'e}n{\'e}dicte Colnet and Imke Mayer and Guanhua Chen and Awa Dieng and Ruohong Li and Ga{\"e}l Varoquaux and Jean-Philippe Vert and Julie Josse and Shu Yang},
	journal = {Statistical Science},
	title = {Causal inference methods for combining randomized trials and observational studies: a review},
	volume = {39},
    	pages = {165-191},
	year = {2024}}

@article{Kennedy-2023,
	author = {Edward H. Kennedy},
	journal = {Electronic Journal of Statistics},
	title = {Optimal doubly robust estimation of heterogeneous causal effects},
	year = {2023},
volume = {17},
       pages = {3008--3049}}

@article{Stuart-Rubin2008,
	author = {Elizabeth A. Stuart and Donald B. Rubin},
	journal = {Journal of Educational and Behavioral Statistics},
	title = {Matching With Multiple Control Groups With Adjustment for Group Differences},
	year = {2008},
volume = {33},
       pages = {279--306}}

@InProceedings{Wu-Shu-Rlearner,
  title = 	 {Integrative $R$-learner of heterogeneous treatment effects combining experimental and observational studies},
  author =       {Wu, Lili and Yang, Shu},
  booktitle = 	 {Proceedings of the First Conference on Causal Learning and Reasoning},
  pages = 	 {904--926},
  year = 	 {2022},
  volume = 	 {177},
  publisher =    {PMLR}
}

@article{Nie-Wager2021,
    author = {Nie, X and Wager, S},
    title = {Quasi-oracle estimation of heterogeneous treatment effects},
    journal = {Biometrika},
    volume = {108},
    number = {2},
    pages = {299-319},
    year = {2021}}

@article{kallus2024role,
  title={On the role of surrogates in the efficient estimation of treatment effects with limited outcome data},
  author={Kallus, Nathan and Mao, Xiaojie},
  journal={Journal of the Royal Statistical Society Series B: Statistical Methodology},
  pages={qkae099},
  year={2024},
  publisher={Oxford University Press UK}
}

@article{Degtiar-Rose2023,
	author = {Irina Degtiar and Sherri Rose},
	journal = {Annual Review of Statistics and Its Application},
	pages = {501-524},
	title = {A Review of Generalizability and Transportability},
	volume = {10},
	year = {2023}}

@article{Yang-etal2023,
	author = {Shu Yang and Chenyin Gao and Donglin Zeng and Xiaofei Wang},
	journal = {Journal of the Royal Statistical Society Series B: Statistical Methodology},
	number = {3},
	pages = {575–596},
	title = {Elastic integrative analysis of randomised trial and real-world data for treatment heterogeneity estimatio},
	volume = {85},
	year = {2023}}

@article{Chen-Cai-2021,
	Author = {David Cheng and Tianxi Cai},
	Journal = {arXiv preprint arXiv:2111.15012},
	Title = {Adaptive Combination of Randomized and Observational Data},
	Year = {2021}}

@article{Han-etal-2023,
	Author = {Larry Han and Jue Hou and Kelly Cho and Rui Duan and Tianxi Cai},
	Journal = {Journal of the American Statistical Association},
    number={551},
    pages = {1503–1516},
	Title = {Federated Adaptive Causal Estimation (FACE) of Target Treatment Effects},
    volume={120},
	Year = {2025}}

@article{Dahabreh-etal2019,
	author = {Issa J. Dahabreh and Sarah E. Robertson and Eric J. Tchetgen and Elizabeth A. Stuart and Miguel A. Hernán},
	journal = {Biometrics},
	number = {2},
	pages = {685-694},
	title = {Generalizing causal inferences from individuals in randomized trials to all trial-eligible individuals},
	volume = {75},
	year = {2019}}

@article{Dahabreh-etal2020,
	author = {Issa J. Dahabreh and Sarah E. Robertson and Jon A. Steingrimsson and Elizabeth A. Stuart and Miguel A. Hernán},
	journal = {Statistics in Medicine},
	pages = {1999–2014},
	title = {Extending inferences from a randomized trial to a new
target population},
	volume = {39},
	year = {2020}}

@article{cook1980characterizations,
  title={Characterizations of an empirical influence function for detecting influential cases in regression},
  author={Cook, R Dennis and Weisberg, Sanford},
  journal={Technometrics},
  volume={22},
  number={4},
  pages={495--508},
  year={1980},
  publisher={Taylor \& Francis}
}

@article{shan2022simulation,
  title={A simulation-based evaluation of statistical methods for hybrid real-world control arms in clinical trials},
  author={Shan, Mingyang and Faries, Douglas and Dang, Andy and Zhang, Xiang and Cui, Zhanglin and Sheffield, Kristin M},
  journal={Statistics in Biosciences},
  volume={14},
  number={2},
  pages={259--284},
  year={2022},
  publisher={Springer}
}

@article{agarwal2017second,
  title={Second-order stochastic optimization for machine learning in linear time},
  author={Agarwal, Naman and Bullins, Brian and Hazan, Elad},
  journal={Journal of Machine Learning Research},
  volume={18},
  number={116},
  pages={1--40},
  year={2017}
}

@inproceedings{koh2017understanding,
  title={Understanding black-box predictions via influence functions},
  author={Koh, Pang Wei and Liang, Percy},
  booktitle={Proceedings of the 34th International Conference on Machine Learning},
  pages={1885--1894},
  year={2017},
  organization={PMLR}
}

@inproceedings{martens2010deep,
  title={Deep learning via hessian-free optimization.},
  author={Martens, James and others},
  booktitle={Icml},
  volume={27},
  pages={735--742},
  year={2010}
}

@article{gao2024improving,
  title={Improving randomized controlled trial analysis via data-adaptive borrowing},
  author={Gao, Chenyin and Yang, Shu and Shan, Mingyang and YE, Wenyu and Lipkovich, Ilya and Faries, Douglas},
  journal={Biometrika},
  volume={12},
  number={2},
  pages={asae069},
  year={2025},
  publisher={Oxford University Press}
}

@article{fda2023considerations,
  title={Considerations for the design and conduct of externally controlled trials for drug and biological products guidance for industry},
  author={FDA, US},
  journal={Accessed April},
  volume={1},
  year={2023},
url ={https://www.fda.gov/regulatory-information/search-fda-guidance-documents}
}

@book{imbens2015causal,
  title={Causal inference in statistics, social, and biomedical sciences},
  author={Imbens, Guido W and Rubin, Donald B},
  year={2015},
  publisher={Cambridge University Press}
}

@book{Hampel,
  title={Robust Statistics: The Approach Based on Influence Functions},
  author={Frank R. Hampel and Elvezio M. Ronchetti and Peter J. Rousseeuw and Werner A. Stahel},
  year={2005},
  publisher={ John Wiley \& Sons, Inc.}
}

@article{Wu2023Transfer,
  author = {Lili Wu and Shu Yang},
  title = {Transfer Learning of Individualized Treatment Rules from Experimental to Real-World Data},
  journal = {Journal of Computational and Graphical Statistics},
  volume = {32},
  number = {3},
  pages = {1036--1045},
  year = {2023},
  publisher = {Taylor \& Francis}
}

@article{Qiu-etal2015,
  author = {Sky Qiu and Jens Tarp and Andrew Mertens and Mark van der Laan},
  title = {An Estimator-Robust Design for Augmenting Randomized Controlled Trial with External Real-World Data},
  journal={arXiv preprint arXiv:2501.17835},
  year={2025}
}

@article{athey2017state,
  title={The state of applied econometrics: Causality and policy evaluation},
  author={Athey, Susan and Imbens, Guido W},
  journal={Journal of Economic Perspectives},
  volume={31},
  number={2},
  pages={3--32},
  year={2017},
  publisher={American Economic Association 2014 Broadway, Suite 305, Nashville, TN 37203-2418}
}

@article{lalonde1986evaluating,
  title={Evaluating the econometric evaluations of training programs with experimental data},
  author={LaLonde, Robert J},
  journal={The American Economic Review},
  pages={604--620},
  year={1986},
  publisher={JSTOR}
}

@article{dehejia2002propensity,
  title={Propensity score-matching methods for nonexperimental causal studies},
  author={Dehejia, Rajeev H and Wahba, Sadek},
  journal={Review of Economics and Statistics},
  volume={84},
  number={1},
  pages={151--161},
  year={2002},
  publisher={MIT Press 238 Main St., Suite 500, Cambridge, MA 02142-1046, USA journals~…}
}

@article{cao2009improving,
  title={Improving efficiency and robustness of the doubly robust estimator for a population mean with incomplete data},
  author={Cao, Weihua and Tsiatis, Anastasios A and Davidian, Marie},
  journal={Biometrika},
  volume={96},
  number={3},
  pages={723--734},
  year={2009},
  publisher={Oxford University Press}
}

@article{bojinov2023design,
  title={Design and analysis of switchback experiments},
  author={Bojinov, Iavor and Simchi-Levi, David and Zhao, Jinglong},
  journal={Management Science},
  volume={69},
  number={7},
  pages={3759--3777},
  year={2023},
  publisher={INFORMS}
}

@article{hobbs2011hierarchical,
  title={Hierarchical commensurate and power prior models for adaptive incorporation of historical information in clinical trials},
  author={Hobbs, Brian P and Carlin, Bradley P and Mandrekar, Sumithra J and Sargent, Daniel J},
  journal={Biometrics},
  volume={67},
  number={3},
  pages={1047--1056},
  year={2011},
  publisher={Oxford University Press}
}

@article{stuart2008matching,
  title={Matching with multiple control groups with adjustment for group differences},
  author={Stuart, Elizabeth A and Rubin, Donald B},
  journal={Journal of Educational and Behavioral Statistics},
  volume={33},
  number={3},
  pages={279--306},
  year={2008},
  publisher={Sage Publications Sage CA: Thousand Oaks, CA}
}

@article{Wu-Tan2024,
	author = {Peng Wu and Zhiqiang Tan and Wenjie Hu and Xiao-Hua Zhou},
	journal = {Statistica Sinica},
	title = {Model-Assisted Inference for Covariate-Specific Treatment Effects with High-dimensional Data},
         	Pages = {459--479},
	Volume = {34},
	year = {2024}}

@article{Wu-Han2024,
	Author = {Peng Wu and Shasha Han and Xingwei Tong and Runze Li},
	Journal = {Statistica Sinica},
         	Pages = {747--769},
	Volume = {34},
	Title = {Propensity score regression for causal inference with treatment heterogeneity},
	Year = {2024}}

@article{Wu-Mao2025,
  title={The Promises of Multiple Experiments: Identifying Joint Distribution of Potential Outcomes},
  author={Peng Wu and Xiaojie Mao},
  journal={arXiv preprint arXiv:2504.20470},
  year={2025}
}

@article{hu2023longterm,
  title={Identification and estimation of treatment effects on long-term outcomes in clinical trials with external observational data},
  author={W. Hu and X.H. Zhou and P. Wu},
  journal={Statistica Sinica},
  volume={35},
  pages={1--22},
  year={2025}
}

@article{gao2024doubly,
  title={Doubly protected estimation for survival outcomes utilizing external controls for randomized clinical trials},
  author={Gao, Chenyin and Yang, Shu and Shan, Mingyang and Ye, Wenyu Wendy and Lipkovich, Ilya and Faries, Douglas},
  journal={arXiv preprint arXiv:2410.18409},
  year={2024}
}

@inproceedings{Wu-etal-ShortLong,
author = {Wu, Peng and Shen, Ziyu and Xie, Feng and Wang, Zhongyao and Liu, Chunchen and Zeng, Yan},
title = {Policy learning for balancing short-term and long-term rewards},
year = {2024},
publisher = {PMLR},
booktitle = {Proceedings of the 41st International Conference on Machine Learning
},
articleno = {2206},
pages = {53817-53846},
location = {Vienna, Austria}
}

@inproceedings{yang2024learning,
author = {Yang, Qinwei and Liu, Xueqing and Zeng, Yan and Guo, Ruocheng and Liu, Yang and Wu, Peng},
title = {Learning the optimal policy for balancing short-term and long-term rewards},
year = {2025},
booktitle = {Proceedings of the 38th International Conference on Neural Information Processing Systems},
articleno = {1151},
numpages = {27},
location = {Vancouver, BC, Canada}
}

@article{Wu-etal-2025-Safe,
	Author = {Peng Wu and Qing Jiang and Shanshan Luo},
	Journal = {arXiv preprint arXiv:2505.05308},
	Title = {Safe Individualized Treatment Rules with Controllable Harm Rates},
	Year = {2025}}

@article{neuenschwander2009note,
  title={A note on the power prior},
  author={Neuenschwander, Beat and Branson, Michael and Spiegelhalter, David J},
  journal={Statistics in medicine},
  volume={28},
  number={28},
  pages={3562--3566},
  year={2009},
  publisher={Wiley Online Library}
}

@article{schoenfeld2019design,
  title={Design and analysis of a clinical trial using previous trials as historical control},
  author={Schoenfeld, David Alan and Finkelstein, Dianne M and Macklin, Eric and Zach, Neta and Ennist, David L and Taylor, Albert A and Atassi, Nazem and Pooled Resource Open-Access ALS Clinical Trials Consortium},
  journal={Clinical Trials},
  volume={16},
  number={5},
  pages={531--538},
  year={2019},
  publisher={SAGE Publications Sage UK: London, England}
}

@article{kwiatkowski2024case,
  title={Case weighted power priors for hybrid control analyses with time-to-event data},
  author={Kwiatkowski, Evan and Zhu, Jiawen and Li, Xiao and Pang, Herbert and Lieberman, Grazyna and Psioda, Matthew A},
  journal={Biometrics},
  volume={80},
  number={2},
  pages={ujae019},
  year={2024},
  publisher={Oxford University Press}
}

@article{wang2019using,
  title={Using real-world data to extrapolate evidence from randomized controlled trials},
  author={Wang, Shirley V and Schneeweiss, Sebastian and Gagne, Joshua J and Evers, Thomas and Gerlinger, Christoph and Desai, Rishi and Najafzadeh, Mehdi},
  journal={Clinical Pharmacology \& Therapeutics},
  volume={105},
  number={5},
  pages={1156--1163},
  year={2019},
  publisher={Wiley Online Library}
}

@book{Hernan-Robins2020,
	author = {M.A. Hern{\'a}n and J. M. Robins},
	publisher = {Boca Raton: Chapman and Hall/CRC},
	title = {Causal Inference: What If},
	year = {2020}}

@article{dahabreh2019generalizing,
  title={Generalizing causal inferences from individuals in randomized trials to all trial-eligible individuals},
  author={Dahabreh, Issa J and Robertson, Sarah E and Tchetgen, Eric J and Stuart, Elizabeth A and Hern{\'a}n, Miguel A},
  journal={Biometrics},
  volume={75},
  number={2},
  pages={685--694},
  year={2019},
  publisher={Oxford University Press}
}

@article{li2023improving,
  title={Improving efficiency of inference in clinical trials with external control data},
  author={Li, Xinyu and Miao, Wang and Lu, Fang and Zhou, Xiao-Hua},
  journal={Biometrics},
  volume={79},
  number={1},
  pages={394--403},
  year={2023},
  publisher={Wiley Online Library}
}

@article{pocock1976combination,
  title={The combination of randomized and historical controls in clinical trials},
  author={Pocock, Stuart J},
  journal={Journal of Chronic Diseases},
  volume={29},
  number={3},
  pages={175--188},
  year={1976},
  publisher={Elsevier}
}

@book{van2000asymptotic,
  title={Asymptotic statistics},
  author={Van der Vaart, Aad W},
  volume={3},
  year={2000},
  publisher={Cambridge University Press}
}

@article{valancius2024causal,
  title={A causal inference framework for leveraging external controls in hybrid trials},
  author={Valancius, Michael and Pang, Herbert and Zhu, Jiawen and Cole, Stephen R and Jonsson Funk, Michele and Kosorok, Michael R},
  journal={Biometrics},
  volume={80},
  number={4},
  pages={ujae095},
  year={2024},
  publisher={Oxford University Press}
}

@article{Semenova-Chernozhukov,
	author = {Vira Semenova and Victor Chernozhukov},
	journal = {The Econometrics Journal},
	pages = {264--289},
	title = {Debiased machine learning of conditional average treatment effects and and other causal functions},
	volume = {24},
	year = {2021}}

 \newpage

  \begin{center}
\bf \Large 
Supplementary Material 
\end{center}

\setcounter{equation}{0}
\setcounter{section}{0}
\setcounter{figure}{0}
\setcounter{example}{0}
\setcounter{proposition}{0}
\setcounter{corollary}{0}
\setcounter{theorem}{0}
\setcounter{table}{0}
\setcounter{condition}{0}
\setcounter{lemma}{0}
\setcounter{remark}{0}

\renewcommand {\theproposition} {S\arabic{proposition}}
\renewcommand {\theexample} {S\arabic{example}}
\renewcommand {\thefigure} {S\arabic{figure}}
\renewcommand {\thetable} {S\arabic{table}}
\renewcommand {\theequation} {S\arabic{equation}}
\renewcommand {\thelemma} {S\arabic{lemma}}
\renewcommand {\thesection} {S\arabic{section}}
\renewcommand {\thetheorem} {S\arabic{theorem}}
\renewcommand {\thecorollary} {S\arabic{corollary}}
\renewcommand {\thecondition} {S\arabic{condition}}
\renewcommand {\thepage} {S\arabic{page}}
\renewcommand {\theremark} {S\arabic{remark}}

\setcounter{page}{1}

  \setcounter{equation}{0}
\renewcommand {\theequation} {S\arabic{equation}}
  \setcounter{lemma}{0}
\renewcommand {\thelemma} {S\arabic{lemma}}
   \setcounter{definition}{0}
\renewcommand {\thedefinition} {S\arabic{definition}}
   \setcounter{example}{0}
\renewcommand {\theexample} {S\arabic{example}}
   \setcounter{proposition}{0}
\renewcommand {\theproposition} {S\arabic{proposition}}
   \setcounter{corollary}{0}
\renewcommand {\thecorollary} {S\arabic{corollary}}

 \bigskip 

For clarity, we recall and list the nuisance parameters used in the proposed estimators below.  

 \begin{table}[h!] 
\centering
\caption{Nuisance parameters in $\hat \tau_{\textup{aipw}}$ and $\hat \tau_{\cS}$.} 
\setlength{\tabcolsep}{2pt}
\resizebox{0.6\linewidth}{!}
{\begin{tabular}{ l }
\hline
\text{Nuisance parameters}    \\
\hline 
$e_1(X)=\P(A=1 \mid X,R=1) = \P_{\cS}(A=1 \mid X,R=1)$ \\ 
$\mu_a(X)=\E(Y \mid X, A=a, R=1) = \E_{\cS}(Y \mid X, A=a, R=1)$ \\
   $\pi(X) = \P_{\cS}(R=1\mid X)$ \\ 
$e_{\cS}(X)= \P_{\cS}(A=1 \mid X) = e_1(X) \pi(X)$  \\
$m_a(X) = \E_{\cS} [Y \mid X, A=a]$ \\
\hline 
\end{tabular}} \\
By definition, $\mu_1(X) = m_1(X)$, whereas $\mu_0(X)$ may not equal $m_0(X)$.
\label{tab.2}
\end{table}

In addition, for ease of presentation, we let  $n = N_{\cE} + N_{\cS}$.


\section{Technical Proofs} \label{app:theorem1}


\subsection{Proof of Lemma 1} 

\noindent 
\emph{Proof of Lemma 1}.  Let $f(x)$, $f(x| r=1)$, and $f(x| r = 0)$ represent the density functions of $X$ in the combined data $\P_{\cS}$, RCT data $\P_{\cS}(\cdot | R=1)$ and selected external  data $\P_{\cS}(\cdot | R=0)$, respectively. Denote $f(y_0, y_1 | x)$ as the joint distribution of $(Y(0), Y(1))$ conditional on $X=x$. 


\smallskip 
\emph{First}, we derive the tangent space. The observed data distribution under Assumptions 1 and 3 is given as 
\begin{align*} 
 & p(a, x, y, r) \\  
 ={}&  [ f(a,  y | x)  f(x) \pi(x) ]^r  \cdot [ f(a,  y | x)  f(x) (1- \pi(x)) ]^{1-r}    \\
 ={}& f(x) \times \left[ \left\{ f_1(y|x) e_1(x) \right\}^a \left\{ f_0(y|x) (1 - e_1(x)) \right\}^{1-a} \pi(x) \right]^{r}  \times \left[ f_0(y|x) (1 - \pi(x)) \right]^{1- r} 
\end{align*}
where $f_1(\cdot |x) = \int f(y_0, \cdot | x) dy_0$ and $f_0(\cdot |x) = \int f(\cdot, y_1 | x) dy_1$  are the marginal density of $Y(1)$ and $Y(0)$ conditional on $X=x$, respectively.   
Consider a regular parametric submodel indexed by $\theta$ given as 
\begin{align*}
 p(a, x, y, r; & \theta)  ={} f(x, \theta)  \times \left[ f_1(y|x, \theta)^a f_0(y|x, \theta)^{1-a} \right]^{r} \times  \left[ e_1(x, \theta)^a  (1 - e_1(x, \theta))^{1-a} \pi(x, \theta) \right]^{r} \\
 {}& \times  \left[  f_0(y|x, \theta)  (1 - \pi(x, \theta)) \right]^{1- r}, 
\end{align*}
which equals $p(a, x, y, r)$ when $\theta = \theta_0$. Also, $f_a(y | x, \theta) = f_a(y | x, r=1, \theta) =  f_a(y | x, r=0, \theta)$ by Assumption 3. Then, the score function for this submodel is given by  
 \begin{align*}
	 	s(a, x, y, r; \theta) 
		 ={}& \frac{\partial \log p(a, x, y, r;  \theta)}{\partial \theta}  \\
			={}&  t(x, \theta) + r a \cdot s_1(y|x,\theta) + (1-a) \cdot s_0(y|x,\theta)  + r \frac{a - e_1(x,\theta)}{ e_1(x,\theta)(1 - e_1(x,\theta)) } \dot{e}_1(x, \theta) \\
            {}& +  \frac{r - \pi(x,\theta)}{ \pi(x,\theta)(1 - \pi(x,\theta)) } \dot{\pi}(x, \theta), 
	 \end{align*}	
where  
  \begin{align*} \begin{cases}
    & t(x, \theta) ={} \partial f(x, \theta) / \partial \theta   \\        
    & s_a(y|x,\theta) ={}  \partial \log f_1(y|x,\theta) / \partial \theta  \quad \text{for } a= 0, 1  \\
    & \dot{e}_1(x,\theta) ={} \partial e_1(x,\theta)/\partial \theta \\
    & \dot{\pi}(x,\theta) ={} \partial \pi(x,\theta)/\partial \theta
    \end{cases}
  \end{align*}
Thus, the tangent space is given as    		
  \begin{align*}
         \mathcal{T} ={}&   \{ t(x) + r a s_1(y|x) + (1-a) s_0(y|x) + r (a - e_1(x)) \cdot b_1(x)  + (r - \pi(x)) \cdot b_2(x)     \}, 
  \end{align*}
 where $s_a(y|x)$ satisfies $\E[ s_a(Y|X) \mid X=x ] =  \int s_a(y|x) f_a(y|x) dy = 0$ for $a = 0,1$, $t(x)$ satisfies $\E[ t(X) ] =  \int t(x) f(x)dx = 0$, $b_1(x)$ and $b_2(x)$ are arbitrary square-intergrable measurable function of $x$.
 In addition,  $s_a(y | x) = s_a(y | x, r=1) =  s_a(y | x, r=0)$ according to $f_a(y | x) = f_a(y | x, r=1) = f_a(y | x, r=0) $.

\smallskip 
\emph{Second}, we calculate the pathwise derivative of $\tau$.  
Under the above parametric submodel, the target estimand $\tau = \tau(\theta)$ can be written as  
\begin{align*}
	 \tau(\theta) ={}&  \E[ Y(1) -  Y(0) \mid R = 1 ] \\
        ={}&   \E_{\cS}[ Y(1) -  Y(0) \mid R = 1 ] \\
     ={}& \frac{ \E_{\cS} [ R Y(1) -  R  Y(0)   ] }{ \P_{\cS}(R = 1) }	\\
			  ={}&  \frac{ \int \int  \pi(x,\theta) y f_1(y|x,\theta) f(x,\theta) dy dx    }{  \int \pi(x, \theta) f(x, \theta)  dx } - \frac{\int \int \pi(x,\theta) y f_0(y|x,\theta) f(x,\theta) dy dx    }{  \int \pi(x, \theta) f(x, \theta)  dx }.
	\end{align*}
By calculation, the pathwise derivative of $\tau(\theta)$ at $\theta = \theta_0$ is 
	\begin{align*}
	 \frac{\partial \tau(\theta)}{\partial \theta} \Big|_{\theta = \theta_0} 
={}&  \frac{ \E_{\cS}\Big [ \pi(X) \cdot \E_{\cS}\{ Y(1) \cdot s_1(Y(1) | X) | X \} \Big ] }{q} 
- \frac{ \E_{\cS}\Big [ \pi(X)  \cdot \E_{\cS}\{ Y(0) \cdot s_0(Y(0) | X) | X \} \Big ] }{q} \\
+{}& \frac{ \E_{\cS}\Big [  \Big \{ \pi(X) t(X) + \dot{\pi}(X) \Big \} \cdot \Big \{  m_1(X) -  m_0(X)  - \tau \Big \} \Big ] }{q}.
	\end{align*}

\smallskip 
\emph{Third}, we construct the efficient influence function of $\tau$. Let   
	\begin{align*}
	\phi 
={}& \frac{\pi(X)}{q} \bigg[  \frac{ RA\{Y - m_1(X)\}}{ e_{\mathcal{S}}(X)  } - \frac{(1-A)\{Y - m_0(X)\}   }{1 - e_{\mathcal{S}}(X)} \bigg] + \frac{R}{q}\{ m(X) -  m_0(X) - \tau \}, 
	\end{align*}
 where $e_{\cS}(X) = \P_{\cS}(A=1| X) = \P_{\cS}(A=1|X, R=1) \P_{\cS}(R=1|X) + \P_{\cS}(A=1|X, R=0) \P_{\cS}(R=0|X) = e_1(X)\pi(X)$ and $1 - e_{\cS}(X)= \P_{\cS}(A=0|X) = (1-e_1(X)) \pi(X) + (1-\pi(X))$.

\smallskip 
\emph{Fourth}, we verify that $\tau$ is an influence function of $\tau$. 
This holds if it satisfies the following equation  
	\begin{equation}  \label{A.1}
		 \frac{\partial \tau(\theta) }{\partial \theta}\Big|_{\theta = \theta_0}  =  \E_{\cS}[ \phi  \cdot s(A, X, Y, R; \theta_0) ], 
	\end{equation}
 Next, we give a detailed proof of (\ref{A.1}).  
	\begin{align*}
	   \E_{\cS}[\phi \cdot s(A, X, Y, R; \theta_0) ] 
	={}&  H_1 +  H_2 + H_3,
	\end{align*}
 where 
\begin{align*}
    		H_1 ={}& \E_{\cS} \left [  \left \{  \frac{\pi(X)}{q}  \frac{ R A\{Y - m_1(X)\}}{  e_{\cS}(X)  } \right \}  s(A, X, Y, R; \theta_0) \right ] \\
            H_2 ={}& \E_{\cS} \left [  \left \{ 
	\frac{\pi(X)}{q}  \frac{(1-A)\{Y - m_0(X)\}   }{1 - e_{\cS}(X)} \right \}  s(A, X, Y, R; \theta_0) \right ]   \\
    H_3 ={}& \E_{\cS} \Biggl [  \left \{ \frac{R}{q} (  m_1(X) - m_0(X) - \tau ) \right \}  s(A, X, Y, R; \theta_0) \Biggr ].
\end{align*}
We analyze $H_{1}$, $H_2$, and $H_3$, separately.  
   	   	\begin{align*}
		H_1 
			={}&  \E_{\cS}\left [  \left \{ \frac{\pi(X)}{q}  \frac{ R  A\{Y - m_1(X)\}}{ e_{\cS}(X) }  \right \}   s_1(Y|X)  \right ] \\
			 ={}& \E_{\cS} \left [  \E_{\cS} \left \{ \frac{\pi(X)}{q}  \frac{ R  A\{Y - m_1(X)\}}{ e_{\cS}(X)   }  s_1(Y|X)      \Big | X  \right \}  \right ] \\
			 ={}& \E_{\cS} \Big [  \E_{\cS} \Big \{ \frac{\pi(X)^2}{q}  \frac{  e_1(X) \{Y(1) - m_1(X)\}}{ e_{\cS}(X)   } \times s_1(Y(1) |X)      \Big | X, G=1  \Big \}  \Big ] \\
				 ={}& \E_{\cS} \left [  \frac{\pi(X)}{q}   \E_{\cS} \left \{ \{Y(1) - m_1(X)\}  s_1(Y(1) |X)      \Big | X \right \}  \right ] \\	
				 ={}&  \frac{ \E_{\cS}\Big [ \pi(X) \cdot \E_{\cS}\{ Y(1) \cdot s_1(Y(1) | X) | X \} \Big ] }{q} \\
				 ={}& \text{ the first term of }  \frac{\partial \tau(\theta)}{\partial \theta} \Big|_{\theta = \theta_0}, 
	\end{align*}	
where the fifth equality follows from the fact that $\E[ s_1(Y|X) \mid X=x ] = 0$. 
	\begin{align*}
		H_2 
			={}&  \E_{\cS}\left [  \left \{ 	\frac{\pi(X)}{q}  \frac{(1-A)\{Y - m_0(X)\}   }{ 1 - e_{\cS}(X)}  \right \}  \cdot s_0(Y|X)  \right ] \\
				 ={}& \E_{\cS} \left [ \frac{\pi(X)}{q}  \frac{1- e_{\cS}(X)  }{1 - e_{\cS}(X)}    \cdot  \E_{\cS} \left \{ Y(0) \cdot s_0(Y(0) |X)      \Big | X \right \}  \right ] \\	
                 {}&  - \E_{\cS} \left [ \frac{\pi(X)}{q}  \frac{1- e_{\cS}(X)  }{1 - e_{\cS}(X)}   m_0(X)  \cdot  \E_{\cS} \left \{  s_0(Y(0) |X)      \Big | X \right \}  \right ]  \\ 
				 ={}&  \frac{ \E_{\cS}\Big [ \pi(X)  \cdot \E_{\cS}\{ Y(0) \cdot s_0(Y(0) | X) | X \} \Big ] }{q} \\
				 ={}& \text{ the second term of }  \frac{\partial \tau(\theta)}{\partial \theta} \Big|_{\theta = \theta_0}, 
	\end{align*}	 
and 	
 	\begin{align*}		
		H_3 ={}& \E_{\cS} \Biggl [  \left \{ \frac{R}{q}\{  m_1(X) - m_0(X) - \tau \} \right \}  s(A, X, Y, R; \theta_0) \Biggr ] \\
			={}&   \E_{\cS}\Biggl [  \left \{\frac{R}{q}\{  m_1(X) - m_0(X) - \tau \}  \right \} \Big \{t(X)  \\
            +{}&     R \frac{A - e_1(X)}{ e_1(X)(1 - e_1(X)) } \dot{e}_1(X) +  \frac{R - \pi(X)}{\pi(X)(1 - \pi(X))}  \dot{\pi}(X) \Big \} \Biggr ] \\	 
			  	={}&   \E\Biggl [  \left \{\frac{\pi(X)}{q}\{ m_1(X) - m_0(X) - \tau \}  \right \}   \left \{t(X) +  \frac{\dot{\pi}(X) }{\pi(X)}  \right \} \Biggr ] \\	
			 ={}& \frac{ \E_{\cS}\Big [  \Big \{ \pi(X) t(X) + \dot{\pi}(X) \Big \} \Big \{ m_1(X) - m_0(X)  -  \tau \Big \} \Big ] }{q} \\
			  ={}& \text{ the third term of }  \frac{\partial \tau(\theta)}{\partial \theta} \Big|_{\theta = \theta_0}.
	\end{align*}	
Thus, equation (\ref{A.1}) holds.  

\smallskip 
\emph{Finally}, we show that $\phi$ is efficient influence function by verifying $\phi \in \mathcal{T}$. Let  
  \begin{align*}
      \begin{cases}
          t(X) ={}& \dfrac{\pi(X)}{q}\{ m_1(X) - m_0(X) - \tau \} \\
          s_1(Y|X) ={}& \dfrac{\pi(X)}{q} \frac{(Y - m_1(X))}{ e_{\cS}(X) } \\
          s_0(Y|X) ={}& \dfrac{\pi(X)}{q} \frac{(Y - m_0(X))}{ 1- e_{\cS}(X) } \\
          b_2(X) ={}&  \dfrac{ m_1(X)  - m_0(X)- \tau}{1-q}, 
      \end{cases}
  \end{align*}
  then $\phi$ can be written as 
  	\[ \phi  = t(X) +  R A s_1(Y|X) + (1-A) s_0(Y|X)  + (R- \pi(X)) b_2(X).    \]
Clearly,  $\int s_a(y|x) f_a(y|x) dy = 0$ for $a =0, 1$, and $\int t(x) f(x)dx = 0$, which implies that $\phi \in \mathcal{T}$, and thus $\phi$ is the efficient influence function of $\tau$. 

\hfill $\Box$

\subsection{Proof of Lemma 2} 
 \noindent 
\emph{Proof of Lemma 2.} 
Let $Z = (X, A, R, Y)$, and denote 
\begin{align*} 
	\tilde \phi(Z;  m_0,  m_1, \pi, e_1) ={}&  \frac{\pi(X)}{q} \left[  \frac{ R  A\{Y - m_1(X)\}}{ e_{\cS}(X)  }  - \frac{(1-A)\{Y - m_0(X)\}   }{1 - e_{\cS}(X)  } \right] \\
	        {}&+ \frac{R}{q}\{ m_1(X) - m_0(X) \} 
		\end{align*} 
as the non-centralized efficient influence functions of $\tau$, where $m_0,  m_1, \pi$ and $e_1$ are the nuisance parameters $m_0(x),  m_1(x), \pi(x)$ and $e_1(x)$, respectively. 
Then, the proposed estimator $\hat \tau$ can be written as  
\[ \hat  \tau  =  \frac{1}{n} \sum_{i\in \cE \cup \cS } \tilde  \phi(Z;  \hat m_0, \hat m_1, \hat \pi, \hat e_1), \] 
where $(\hat m_0,  \hat m_1, \hat \pi, \hat e_1)$ are estimates of 
$(m_0,  m_1, \pi, e_1)$. 

We decompose $\hat \tau_{\cS} -  \tau$ as  
$\hat \tau_{\cS} - \tau  =  U_{1n} + U_{2n},$
where 
    \begin{align*}
    	  U_{1n} ={}& \frac{1}{n} \sum_{i\in \cE \cup \cS } [ \tilde \phi(Z_i;  \mu_0,  \mu_1,  \pi,  e_1)  -    \tau ], \\
        U_{2n} ={}& \frac{1}{n} \sum_{i\in \cE \cup \cS } [  \tilde \phi(Z_i;  \hat \mu_0, \hat  \mu_1, \hat \pi, \hat e_1) - \tilde \phi(Z_i;  \mu_0,  \mu_1,  \pi,  e_1)].  
    \end{align*}
Note that $U_{1n}$ is a sum of $n$ independent variables with zero means, and its variance equals $\mathbb{V}^*/n$, where $\mathbb{V}^*$ is the semiparametric efficiency bound of $\tau$.   
By the central limit theorem,  we have 
$\sqrt{n} U_{1n}   \xrightarrow{d} N(0,  \mathbb{V}^* ),$
  where $\xrightarrow{d}$ denotes convergence in distribution.  
 Thus, it suffices to show that $U_{2n} = o_{\P}(n^{-1/2})$.  $U_{2n}$ can be be further decomposed as $ U_{2n} =  U_{2n} - \E_{\cS}[U_{2n}] +  \E_{\cS}[U_{2n}].$ 
   By a Taylor expansion for $\E[U_{2n}]$ yields that  
   \begin{align*}
 \E_{\cS}[U_{2n}] ={}&   \E_{\cS} [  \tilde \phi(Z;  \hat m_0, \hat  m_1, \hat \pi, \hat e_1) - \tilde \phi(Z;  m_0,  m_1,  \pi,  e_1)  ]  \\
     ={}& \partial_{[\hat m_0 - m_0, \hat m_1 - m_1,  \hat \pi - \pi, \hat e_1 - e_1]} \E_{\cS}[  \tilde \phi(Z;  m_0,  m_1,  \pi,  e_1)   ]  \\
     {}& +  \frac 1 2  \partial^2_{[ \hat m_0 - m_0, \hat m_1 - m_1,  \hat \pi - \pi, \hat e_1 - e_1]} \E_{\cS}[  \tilde \phi(Z;  m_0,  m_1,  \pi,  e_1)  ]  \\ 
     {}&+ \cdots
   \end{align*} 
The first-order term  
	\begin{align*}
& \partial_{[\hat m_0 - m_0, \hat m_1 - m_1,  \hat \pi - \pi, \hat e_1 - e_1]} \E_{\cS}[  \tilde \phi(Z;  m_0,  m_1,  \pi,  e_1)   ]  \\
	={}& \E_{\cS} \biggl [ -\frac{1}{q} \left \{ R - \frac{\pi(X)(1-A)}{ 1 - e_{\cS}(X) }    \right \}  (\hat m_0(X) - m_0(X))         \biggr ] \\
     +{}& \E_{\cS} \biggl [ \frac{1}{q} \left \{ R - \frac{\pi(X)RA}{ e_{\cS}(X) }    \right \}  (\hat m_1(X) - m_1(X))       \biggr ]    \\ 
	  +{}&  \E_{\cS} \Big [ \frac{1}{q} \left \{  \frac{ R  A\{Y - m_1(X)\}}{ e_{\cS}(X)  }  + \frac{(1-A)\{Y - m_0(X)\}   }{1- e_{\cS}(X)} \right \}  \times (\hat \pi(X) - \pi(X))         \Big ]    \\ 
	  - {}&   \E_{\cS} \Big [ \frac{\pi(X)}{q} \left \{  \frac{ R A\{Y - m_1(X)\}}{ e_{\cS}(X)^2 }  + \frac{(1-A)\{Y - m_0(X)\}   }{ \{1- e_{\cS}(X)\}^2} \right \}  \times  e_1(X) (\hat \pi(X) - \pi(X))   \Big ]  \\ 
	    -{}&\E_{\cS} \Big [ \frac{\pi(X)}{q} \left \{  \frac{ R A\{Y - m_1(X)\}}{ e_{\cS}(X)^2 }  + \frac{(1-A)\{Y - m_0(X)\}   }{\{1-e_{\cS}(X)\}^2} \right \}  \times \pi(X) (\hat e_1(X) - e_1(X))   \Big ] \\
        ={}& 0. 
	\end{align*}
  where the last equation follows from $\E_{\cS}[ A|X] = e_{\cS}(X)$, $\E_{\cS}[ \pi(X) A \mid X] = \pi(X) e_1(X)$, $\E_{\cS}[ RA(Y - m_1(X)) |X ] = 0$, and $\E_{\cS}[ (1-A)(Y - m_0(X)) |X ] = 0$. 

  For the second-order term, we get 
    \begin{align*}
       & \partial^2_{[ \hat m_0 - m_0, \hat m_1 - m_1,  \hat \pi - \pi, \hat e_1 - e_1]} \E_{\cS}[  \tilde \phi(Z;  m_0,  m_1,  \pi,  e_1)  ]   \\
       ={}&   \E_{\cS} \biggl [ \frac{1}{q} \frac{(1-A)}{1- e_{\cS}(X) }    (\hat m_0(X) - m_0(X)) (\hat \pi(X) - \pi(X))         \biggr ]    \\
		 +{}& \E_{\cS} \biggl [ \frac{1}{q} \frac{ \pi(X) (1-A)}{\{1-  e_{\cS}(X)\}^2}  e_1(X)  (\hat m_0(X) - m_0(X))  \times (\hat \pi(X) - \pi(X))         \biggr ]    \\   
		 +{}& \E_{\cS} \biggl [ \frac{1}{q}   \frac{ \pi(X) (1-A)}{\{1- e_{\cS}(X)\}^2} \pi(X)    (\hat m_0(X) - m_0(X))  \times (\hat e_1(X) - e_1(X))         \biggr ]     \\
			   -{}&  \E_{\cS} \biggl [ \frac{1}{q} \frac{RA}{e_{\cS}(X)}  (\hat m_1(X) - m_1(X))  (\hat \pi(X) - \pi(X))       \biggr ]    \\
       +{}& \E_{\cS} \biggl [ \frac{1}{q}  \frac{\pi(X)RA}{e_{\cS}(X)^2}  e_1(X)  (\hat m_1(X) - m_1(X))  \times (\hat \pi(X) - \pi(X))       \biggr ]    \\ 
      			    +{}& \E_{\cS} \biggl [ \frac{1}{q}  \frac{\pi(X)RA}{e_{\cS}(X)^2}  \pi(X)  (\hat m_1(X) - m_1(X))   \times(\hat e_1(X) - e_1(X))       \biggr ]    \\ 
			     +{}&   \E_{\cS} \biggl [ \frac{1}{q}  \frac{(1-A)  }{e_{\cS}(X)}  (\hat \pi(X) - \pi(X)) (\hat m_0(X) - m_0(X))        \biggr ]    \\ 
			       -{}&  \E_{\cS} \biggl [ \frac{1}{q}   \frac{ R  A}{ e_{\cS}(X)  }   (\hat \pi(X) - \pi(X)) (\hat m_1(X) - m_1(X))         \biggr ]    \\ 
			        	   +{}&   \E_{\cS} \biggl [ \frac{1}{q}   \frac{(1-A)  }{ \{1-e_{\cS}(X)\}^2} e_1(X)  (\hat \pi(X) - \pi(X))  \times (\hat m_0(X) - m_0(X))         \biggr ]  \\    
			   +{}&   \E_{\cS} \biggl [ \frac{1}{q}   \frac{ R  A }{ e_{\cS}(X)^2 }  e_1(X)  (\hat \pi(X) - \pi(X)) (\hat m_1(X) - m_1(X))         \biggr ]  \\     
			     +{}&  \E_{\cS} \biggl [ \frac{1}{q}    \frac{(1-A)   }{\{1-e_{\cS}(X)\}^2}  \pi(X)  (\hat e_1(X) - e_1(X))  \times (\hat m_0(X) - m_0(X))         \biggr ]  \\ 
			        + {}& \E_{\cS} \biggl [ \frac{\pi(X)}{q}   \frac{  A }{ e_{\cS}(X)^2 }  \pi(X)  (\hat e_1(X) - e_1(X))   \times (\hat m_1(X) -m_1(X))          \biggr ]  \\ 
	={}& O_{\P} \biggl (   ||\hat e_1(X) - e_1(X) ||_2  \cdot ( \|  \hat m_1(X) - m_1(X) ||_2    +   ||  \hat m_0(X) - m_0(X) ||_2 )  \\
	{}&  +   ||\hat \pi(X) - \pi(X) ||_2  \cdot ( \|  \hat m_1(X) - m_1(X) ||_2   +   ||  \hat m_0(X) - m_0(X) ||_2 )   \biggr ) \\
	={}& o_{\P}(n^{-1/2}),
     \end{align*}
All higher-order terms can be shown to be dominated by the second-order term. Therefore, 
    $\E_{\cS}[U_{2n}]  = o_{\P}(n^{-1/2}).$  
 In addition,  we get that $U_{2n} - \E_{\cS}[U_{2n}] = o_{\P}(n^{-1/2)}$ by calculating $\text{Var}\{\sqrt{n}(U_{2n} - \E_{\cS}[U_{2n}])\} = o_{\P}(1)$. This proves the conclusion.

\hfill $\Box$

\subsection{Proof of Lemma 3}
\noindent 
\emph{Proof of Lemma 3}. By a standard result of \cite{ Hahn1998, chernozhukov2018double},  
under the conditions in Lemma 2,  
\begin{align*}
 \hat \tau_{\textup{aipw}} ={}& \frac{1}{N_{\mathcal{R}}} \sum_{i\in \mathcal{R}} \left [ \frac{ A_i(Y_i - \mu_1(X_i))}{e_1(X_i)} - \frac{(1-A_i)(Y_i - \mu_0(X_i))}{1 - e_1(X_i)}  \right ] \\
 {}& + \frac{1}{N_{\mathcal{R}}} \sum_{i\in \mathcal{R}} \{ \mu_1(X_i) - \mu_0(X_i)\} + o_{\P}(n^{-1/2}) \\
={}& \frac{1}{n} \sum_{i\in \mathcal{R}\cup\mathcal{S}} \frac{R_i}{q} \left [ \frac{ A_i(Y_i - \mu_1(X_i))}{ e_{1}(X_i) } - \frac{(1-A_i)(Y_i - \mu_0(X_i)) }{1 - e_{1}(X_i)} \right ] \\
{}&+ \frac{1}{n} \sum_{i\in \mathcal{R}\cup\mathcal{S}} \frac{R_i}{q}\{ \mu_1(X_i) - \mu_0(X_i)\} + o_{\P}(n^{-1/2}).
\end{align*}
Thus, the asymptotic variance of $\sqrt{n} \hat \tau_{\textup{aipw}}$, denoted as $\text{asy.var}(\sqrt{n} \hat \tau_{\textup{aipw}})$, is given by
       \begin{align*}
	\text{asy.var}(\sqrt{n} \hat \tau_{\textup{aipw}})  ={}&  \E_\cS \left [  \frac{R}{q^2} \left \{  \frac{ A \text{Var}(Y(1)|X)  }{ e_1(X)^2 }  + \frac{(1-A) \text{Var}(Y(0)|X)  }{ (1 - e_1(X))^2 }     \right \}       \right ] \\
    {}& +  \E_\cS \left [  \frac{R}{q^2}  \{ \mu_1(X) - \mu_0(X) - \tau \}^2  \right ] \\
    ={}&  \E_\cS \left [  \frac{ \pi(X) }{q^2} \left \{  \frac{ \text{Var}(Y(1)|X)  }{ e_1(X) }  + \frac{ \text{Var}(Y(0)|X)  }{ 1 - e_1(X) }     \right \}       \right ] \\
    {}& +  \E_\cS \left [  \frac{R}{q^2}  \{ \mu_1(X) - \mu_0(X) - \tau \}^2  \right ] \\
    ={}&  \E_\cS \left [  \frac{ \pi(X)^2 }{q^2} \left \{  \frac{ \text{Var}(Y(1)|X)  }{ e_{\cS}(X) }  + \frac{  \text{Var}(Y(0)|X)  }{ (1 - e_1(X))\pi(X) }     \right \}       \right ] \\
    {}& +  \E_\cS \left [  \frac{R}{q^2}  \{ \mu_1(X) - \mu_0(X) - \tau \}^2  \right ].
\end{align*}
The asymptotic variance of $\hat \tau_\cS$ is given by  
   \begin{align*}
\text{asy.var}(\sqrt{n} \hat \tau_{\cS}) ={}& \E_\cS[\phi^2]\\ ={}&  \E_\cS \Biggl [  \frac{ \pi(X)^2 }{q^2} \Biggl \{  \frac{ \pi(X) e_1(X) \text{Var}(Y(1)|X)  }{ e_\cS(X)^2 }  + \frac{  (1 - e_{\cS}(X)) \text{Var}(Y(0)|X)  }{ (1 - e_{\cS}(X))^2 }     \Biggr \}       \Biggr ] \\
    {}& +  \E_\cS \left [  \frac{R}{q^2}  \{\mu_1(X) - \mu_0(X) - \tau\}^2  \right ] \\
        ={}&  \E_{\cS}\left [  \frac{ \pi(X)^2 }{q^2} \left \{  \frac{ \text{Var}(Y(1)|X)  }{ e_\cS(X) }  + \frac{  \text{Var}(Y(0)|X)  }{ 1 - e_\cS(X) }     \right \}       \right ]  +   \E_\cS \left [  \frac{R}{q^2}  \{\mu_1(X) - \mu_0(X) - \tau\}^2  \right ]. 
   \end{align*}

 The efficiency gain $\text{asy.var}(\sqrt{n}\hat \tau_{\textup{aipw}}) - \text{asy.var}(\sqrt{n} \hat \tau_{\mathcal{S}})$  is 
       \begin{align*}
          \E_\cS & \Biggl [  \frac{ \pi(X)^2 }{q^2}  \frac{ \text{Var}(Y(0)|X)  }{ (1 - e_1(X))\pi(X) } -     \frac{ \pi(X)^2 }{q^2}  \frac{  \text{Var}(Y(0)|X)  }{ 1- e_\cS(X) }      \Biggr ] \\
           ={}&  \E_\cS \left [  \frac{ \pi(X)^2 }{q^2} \text{Var}(Y(0)|X)  \frac{ 1 - \pi(X)  }{ (1 - e_1(X))\pi(X) \{1- e_\cS(X)  \} }  \right ] \\
           ={}&  \E_\cS \left [  \frac{ \pi(X) }{q^2} \frac{\text{Var}(Y(0)|X)}{ 1 - e_1(X) }  \frac{ 1 - \pi(X)  }{  1- e_\cS(X)  }  \right ]. 
       \end{align*}
    This finishes the proof.

\hfill $\Box$

%
%
%

\subsection{Proof of Theorem 1}
\noindent 
\emph{Proof of Theorem 1}. By the proof of Lemma 2,  if Condition 1 holds,   
we have 
   \begin{align*}
      \hat \tau_{\cS} ={}& \frac{1}{ N_{\mathcal{R}} + N_{\mathcal{S}} } \sum_{i\in \mathcal{R}\cup\mathcal{S}} \frac{\pi(X_i)}{q}   \times \left [  \frac{ R_iA_i(Y_i - m_1(X_i))}{ e_{\mathcal{S}}(X_i)  } - \frac{(1-A_i)(Y_i -  m_0(X_i)) }{1 -  e_{\mathcal{S}}(X_i)} \right ] \\
      +{}&  \frac{1}{ N_{\mathcal{R}} + N_{\mathcal{S}} } \sum_{i\in \mathcal{R}\cup\mathcal{S}} \frac{R_i}{q}\{ m_1(X_i) - m_0(X_i)\}  + o_{\P}(n^{-1/2}).  
   \end{align*}
Note that by definition, $\mu_1(X) = m_1(X)$ but $\mu_0(X) \neq m_0(X)$. Then, under Assumption 1, 
  \begin{align*} 
  \tau ={}& \E_\cS [ \mu_1(X) - \mu_0(X) \mid R=1 ] \\
    ={}& \E_{\cS}\left [    \frac{R}{q}\{ \mu_1(X) - \mu_0(X) \} 
 \right ]  \\
 ={}&   \E_{\cS}\left [    \frac{R}{q}\{ m_1(X) - \mu_0(X) \} 
 \right ].
 \end{align*}
Thus, the bias of $\hat \tau_{\cS}$ is 
   \begin{align*}
   \E_\cS&[\hat \tau_\cS] - \tau \\ 
        ={}& \E_{\cS} \Biggl [ \frac{\pi(X_i)}{q} \left \{  \frac{ R_iA_i(Y_i - m_1(X_i))}{ e_{\mathcal{S}}(X_i)  }  - \frac{(1-A_i)(Y_i -  m_0(X_i)) }{1 -  e_{\mathcal{S}}(X_i)}  \right \}  +  \frac{R_i}{q}\{ m_1(X_i) - m_0(X_i)\} \Biggr ] \\
        {}&- \tau  + o_{\P}(n^{-1/2}) \\
       ={}&  \E_{\cS}\left [ \frac{R}{q}\{ m_1(X) - m_0(X) \} \right ]  -  \E_{\cS} \left [ \frac{R}{q}\{ m_1(X) - \mu_0(X) \} \right ]  + o_{\P}(n^{-1/2})  \\
        ={}& \E_{\cS}\left [ \frac{R}{q} \{ \mu_0(X) - m_0(X)  \} \right ]  + o_{\P}(n^{-1/2}).      
   \end{align*}
In addition, if we denote $\E_{\cS}\left [ \frac{R}{q} \{ \mu_0(X) - m_0(X)  \} \right ]$ by $\textup{bias}(\hat \tau_{\cS})$,  then  
 \begin{align*}
   & \sqrt{n} \{ \hat \tau_{\cS} - \tau - \textup{bias}(\hat \tau_{\cS}) \} \\
    ={}& \frac{1}{\sqrt{n}} \sum_{i\in \cE\cup \cS}  \frac{\pi(X_i)}{q}  \Biggl [  \frac{ R_iA_i(Y_i - m_1(X_i))}{ e_{\mathcal{S}}(X_i)  }  - \frac{(1-A_i)(Y_i -  m_0(X_i)) }{1 -  e_{\mathcal{S}}(X_i)}\Biggr ]  \\
    {}& +  \frac{1}{\sqrt{n }} \sum_{i\in \cE\cup \cS} \left [ \frac{R_i}{q}\{ m_1(X_i) - m_0(X_i)\} - \tau  \right ]   + o_{\P}(n^{-1/2}).
 \end{align*}
This implies the conclusion by central limit theorem.

\hfill $\Box$


\subsection{Proof of Theorem 2}
\noindent 
\emph{Proof of Theorem 2}.  It is sufficient to show that 
      \begin{align}
		 \widehat{\textup{bias}}(\hat \tau_\cS)  -  \textup{bias}(\hat \tau_\cS)  ={}& 	O_{\P}(1/\sqrt{n})  \label{eq-s2} \\
		    \widehat{\textup{var}}(\hat \tau_\cS)  -   \textup{var}(\hat \tau_\cS) ={}& o_{\P}(1/\sqrt{n}).   \label{eq-s3}			
	\end{align}


First, we prove \eqref{eq-s2}. According to the proof of Lemma 2,  
   \begin{align*}
      \hat \tau_{\cS} ={}& \frac{1}{n} \sum_{i\in \mathcal{R}\cup\mathcal{S}} \frac{\pi(X_i)}{q}  \\
     \times {}&  \left [  \frac{ R_iA_i(Y_i - m_1(X_i))}{ e_{\mathcal{S}}(X_i)  } - \frac{(1-A_i)(Y_i -  m_0(X_i)) }{1 -  e_{\mathcal{S}}(X_i)} \right ] \\
      +{}&  \frac{1}{ n} \sum_{i\in \mathcal{R}\cup\mathcal{S}} \frac{R_i}{q}\{ m_1(X_i) - m_0(X_i)\}  + o_{\P}(n^{-1/2}).  
   \end{align*} 
In addition, according to the proof of Lemma 3, 
\begin{align*}
 \hat \tau_{\textup{aipw}} 
={}& \frac{1}{n} \sum_{i\in \mathcal{R}\cup\mathcal{S}} \frac{R_i}{q} \left [ \frac{ A_i(Y_i - \mu_1(X_i))}{ e_{1}(X_i) } - \frac{(1-A_i)(Y_i - \mu_0(X_i)) }{1 - e_{1}(X_i)} \right ] \\
{}&+ \frac{1}{n} \sum_{i\in \mathcal{R}\cup\mathcal{S}} \frac{R_i}{q}\{ \mu_1(X_i) - \mu_0(X_i)\} + o_{\P}(n^{-1/2}),
\end{align*}
where $q = N_{\mathcal{R}}/(N_{\mathcal{R}}+N_{\mathcal{S}})$. 
Since $m_1(X) = \mu_1(X)$,
\begin{align*}
   \hat\tau_{\mathcal{S}} - \hat \tau_{\text{aipw}} 
={}& \frac{1}{n} \sum_{i\in \mathcal{R}\cup\mathcal{S}} \biggl [ \frac{R_i}{q} \frac{(1-A_i)(Y_i - \mu_0(X_i)) }{1 - e_{1}(X_i)}   - \frac{\pi(X_i)}{q} \frac{(1-A_i)(Y_i - m_0(X_i)) }{1 - e_{\mathcal{S}}(X_i)} \biggr ] \\
{}&+ \frac{1}{n } \sum_{i\in \mathcal{R}\cup\mathcal{S}} \frac{R_i}{q}\{ \mu_0(X_i) - m_0(X_i)\} + o_{\P}(n^{-1/2}) \\
:={}& A_{1n} + A_{2n} + o_{\P}(n^{-1/2}),
\end{align*}
We can see that the first term $A_{1n}$ converges to zero at a rate of order $1/\sqrt{n}$, and the second term $A_{2n}$ converges to the bias $\mathbb{E}[ R (\mu_0(X) - m_0(X))/q]$ at a rate of order $1/\sqrt{n}$. Thus, equation \eqref{eq-s2} holds. 

\medskip 
Then, we prove \eqref{eq-s3}. Observe that 
  \[ \widehat{\textup{var}}(\hat \tau_\cS) -   \textup{var}(\hat \tau_\cS)  =  \frac{\hat \sigma^2 - \sigma^2}{n},      \]
it suffices to show that $\hat \sigma^2 - \sigma^2 = o_{\P}(1)$. 
Since  $|Y|$ is bounded by a finite constant, the function  
 \begin{align*}  
 g(X_i, R_i, A_i, Y_i; \hat \pi, \hat m_1, \hat m_0, \hat e_1) 
:=& \frac{\hat \pi(X_i)}{q} \left ( \frac{ R_iA_i(Y_i - \hat m_1(X_i))}{ \hat e_{\mathcal{S}}(X_i)  }  -   \frac{(1-A_i)(Y_i - \hat m_0(X_i)) }{1 - \hat e_{\mathcal{S}}(X_i)}\right )  \\
	     +{}&  \frac{R_i}{q}\{ \hat m_1(X_i) - \hat m_0(X_i)\}
\end{align*} 	     
is Lipschitz continuous with respect to the nuisance parameters $(\pi, m_1, m_0, e_1)$, then the consistency of nuisance parameter estimators implies the consistency of $\hat \sigma^2$.
This finishes the proof of Theorem 2.  

\hfill $\Box$

\subsection{Proof of Theorem 3}

\emph{Proof of Theorem 3.}  
 To show the conclusion 
\[  \lim_{N_{\mathcal{R}}\to \infty}  \P( \hat \cS \in \cS^* ) = 1,      \] 
it suffices to show that 
       \begin{equation}  \label{eq-s4}    \text{MSE}(\hat \tau_{\hat \cS}) - \text{MSE}(\hat \tau_{\cS^*})   \leq  \eta. 
       \end{equation}
This is because, if \eqref{eq-s4} holds and $\hat \cS \notin \cS^*$, it contradicts the condition  ``$\textup{MSE}(\hat{\tau}_{\cS^*}) < \textup{MSE}(\hat{\tau}_{\cS_k}) - \eta$ for all $\cS_k \notin \cS^*$", and thus we must have $\hat \cS \in \cS^*$. 


 Next, we prove \eqref{eq-s4}. We consider a decomposition of $ \text{MSE}(\hat{\tau}_{\hat\cS}) - \text{MSE}(\hat{\tau}_{\cS^*})$, 
    \begin{align*}
        \text{MSE}(\hat{\tau}_{\hat \cS}) - \text{MSE}(\hat{\tau}_{\cS^*})  
    ={}&  \underbrace{\text{MSE}(\hat{\tau}_{\hat \cS}) -  \widehat{\text{MSE}}(\hat{\tau}_{\hat \cS})}_{B_{1n}} +   \underbrace{\widehat{\text{MSE}}(\hat{\tau}_{\hat \cS})  - \widehat{\text{MSE}}(\hat{\tau}_{ \cS^*})}_{B_{2n}}  \\
    {}& + \underbrace{\widehat{\text{MSE}}(\hat{\tau}_{\cS^*})  -     \text{MSE}(\hat{\tau}_{\cS^*})}_{B_{3n}}.  
    \end{align*}
Since $\hat \cS  =  \arg \min_{\mathcal{S}_k  \in {\bf S} } \widehat{\text{MSE}}(\hat{\tau}_{\cS_k})$, $B_{2n}\leq 0$, which implies that 
   \[    \text{MSE}(\hat{\tau}_{\hat \cS}) - \text{MSE}(\hat{\tau}_{\cS^*})    \leq B_{1n} + B_{3n}.              \]

By Theorem 2,  $\widehat{\text{MSE}}(\hat{\tau}_{\cS}) - \text{MSE}(\hat{\tau}_{\cS}) = o_{\P}(1)$ for $\cS = \hat \cS, \cS^*$. Thus, for the above given $\eta$,  there exists a positive interger $M$, such that when the sample size $N_{\cE} + N_{\cS}$ is large than $M$,  we have 
      \begin{align*}
   |B_{1n}| ={}&  |\widehat{\text{MSE}}(\hat{\tau}_{\hat \cS}) - \text{MSE}(\hat{\tau}_{\hat \cS})| \leq \eta/2,  \\    
   |B_{3n}| ={}& |\widehat{\text{MSE}}(\hat{\tau}_{\cS^*}) - \text{MSE}(\hat{\tau}_{\cS^*})| \leq \eta/2.
      \end{align*}
Therefore, the inequality \eqref{eq-s4} holds. This completes the proof. 

\hfill $\Box$



\section{Regularity Conditions} 

\begin{assumption} \label{assump-A1}  
Let $Z = (X, A, Y, R)$, and write $\phi$ in Lemma 1 as $\phi(Z)$, 
suppose that $\phi(Z)$ belongs to a Donsker class~\citep{van2000asymptotic}.
\end{assumption}

Assumption \ref{assump-A1} regularizes the complexity of the functional space of the nuisance parameters~\citep{gao2024improving}, and it may not be required if a cross-fitting procedure is used for estimating the nuisance parameters~\citep{chernozhukov2018double}.

\section{Additional Experiments} \label{app:more_exps}

\subsection{Implementation Details}

We use logistic regression to estimate the propensity scores $e_1(X)$ and sampling score $\pi(X)$. For the outcome model, we use linear regression in the "Linear" simulation and exponential regression in the "Nonlinear" simulation to model the expected outcome given the treatment and covariates. 
For real-world data, we train the outcome model of the RCT data or external control through a Multi-Layer Perceptron (MLP). All experiments were run on an NVIDIA GeForce RTX 3090 to ensure efficient computation.

\subsection{Performance Comparison at $\delta=\{1.0,3.0\}$}
\label{addc_a}
This section provides further experimental results, including a detailed comparison of performance and sample borrowing behavior across methods for $\delta=\{1,3\}$, as shown in Tables~\ref{tab2_add},~\ref{tab3_add} and Figures~\ref{fig.1_add},~\ref{fig.2_add}. As the results indicate, the proposed AIB and ACIB approaches achieve consistent and superior performance over the existing baselines.

\begin{table}[ht!]
\caption{Numerical results for various sample borrowing approaches with $\delta = 1.0$. The true ATE is 0.246 for the Linear case and 0.127 for the Nonlinear case.}
\centering
\begin{tabular}{c|cccccc}
\toprule
Linear & Est & $|\text{Bias}|$ & SD  & MSE  &$\#\text{ECs}$ \\
\midrule
NB & 0.254  & 0.008  & 0.120  & 0.015   & 0  \\
        FB & 0.189  & 0.057  & 0.082  & 0.010   & 1000  \\
        FCB & 0.202  & 0.044  & 0.081  & 0.009  & 1000  \\ 
        PPP & 0.190  & 0.057  & 0.093  & 0.012  & 155  \\
        ALB & 0.240  & 0.006  & 0.093  & 0.009  & 259  \\ 
        \hdashline
        AIB & 0.239  & 0.007  & \textbf{0.078}  & \textbf{0.006}  & 600  \\
        ACIB & 0.252  & \textbf{0.006}  & \textbf{0.078}  & \textbf{0.006}  & 650  \\
\midrule

\midrule
NonLinear & Est & $|\text{Bias}|$ & SD  & MSE &$\#\text{ECs}$ \\
\midrule
        NB & 0.106  & 0.021  & 0.120  & 0.015  & 0  \\ 
        FB & -0.030  & 0.157  & 0.096  & 0.034  & 1000  \\
        FCB & 0.090  & 0.038  & 0.097  & 0.011  & 1000  \\
        PPP & 0.136  & 0.008  & 0.096  & 0.009  & 405  \\
        ALB & 0.004  & 0.124  & 0.119  & 0.029  & 209  \\
        \hdashline
        AIB & 0.116  & 0.011  & \textbf{0.091}  & \textbf{0.008}  & 400  \\
        ACIB & 0.125  & \textbf{0.003}  & 0.092  & \textbf{0.008}  & 450  \\
\bottomrule

\end{tabular}
\label{tab2_add}
\end{table}




\begin{table}[ht!]
\caption{Numerical results for various sample borrowing approaches with $\delta = 3.0$. The true ATE is 0.246 for the Linear case and 0.127 for the Nonlinear case.}
\centering
\begin{tabular}{c|cccccc}
\toprule
Linear & Est & $|\text{Bias}|$ & SD  & MSE &$\#\text{ECs}$ \\
\midrule
NB & 0.254  & 0.008  & 0.120  & 0.015  & 0  \\ 
        FB & 0.105  & 0.141  & 0.084  & 0.027  & 1000  \\ 
        FCB & 0.202  & 0.045  & 0.081  & 0.009  & 1000  \\
        PPP & 0.166  & 0.080  & 0.094  & 0.015  & 162  \\ 
        ALB & 0.199  & 0.047  & 0.097  & 0.012  & 205  \\ 
        \hdashline
        AIB & 0.236  & 0.010  & 0.080  & \textbf{0.006}  & 400  \\ 
        ACIB & 0.251  & \textbf{0.005}  & \textbf{0.078}  & \textbf{0.006}  & 650  \\
\midrule

\midrule
NonLinear & Est & $|\text{Bias}|$ & SD  & MSE &$\#\text{ECs}$ \\
\midrule
NB & 0.106  & 0.021  & 0.120  & 0.015  & 0  \\ 
        FB & -0.159  & 0.286  & 0.098  & 0.092  & 1000  \\
        FCB & 0.100  & 0.027  & 0.097  & 0.010  & 1000  \\ 
        PPP & 0.014  & 0.113  & 0.101  & 0.023  & 218  \\ 
        ALB & 0.016  & 0.111  & 0.130  & 0.029  & 175  \\ 
        \hdashline
        AIB & 0.105  & 0.023  & \textbf{0.092}  & 0.009  & 300  \\ 
        ACIB & 0.134  & \textbf{0.006}  & \textbf{0.092}  & \textbf{0.008} & 450  \\ 
\bottomrule

\end{tabular}
\label{tab3_add}
\end{table}

\begin{figure}[h]
    \centering
    \subfloat[Linear]{
    \begin{minipage}[t]{0.4\linewidth}
    \centering
    \includegraphics[width=1.0\textwidth]{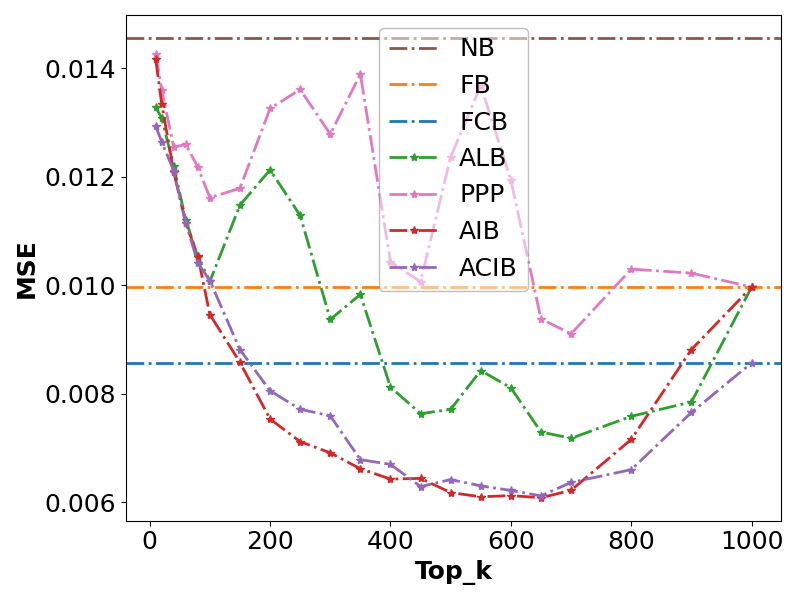}
    \end{minipage}%
    }%
    \subfloat[Nonlinear]{
    \begin{minipage}[t]{0.4\linewidth}
    \centering
    \includegraphics[width=1.0\textwidth]{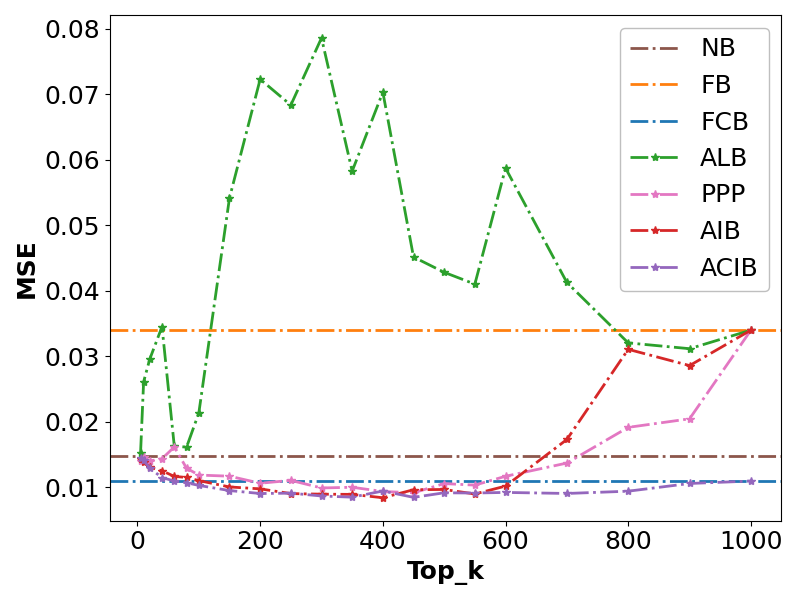}
    \end{minipage}%
    }%
    \caption{Comparison of various approaches
    as top-$k$ ECs are borrowed, where $\delta=1.0$.}
    \label{fig.1_add}
\end{figure}

\begin{figure}[h]
    \centering
    \subfloat[Linear]{
    \begin{minipage}[t]{0.4\linewidth}
    \centering
    \includegraphics[width=1.0\textwidth]{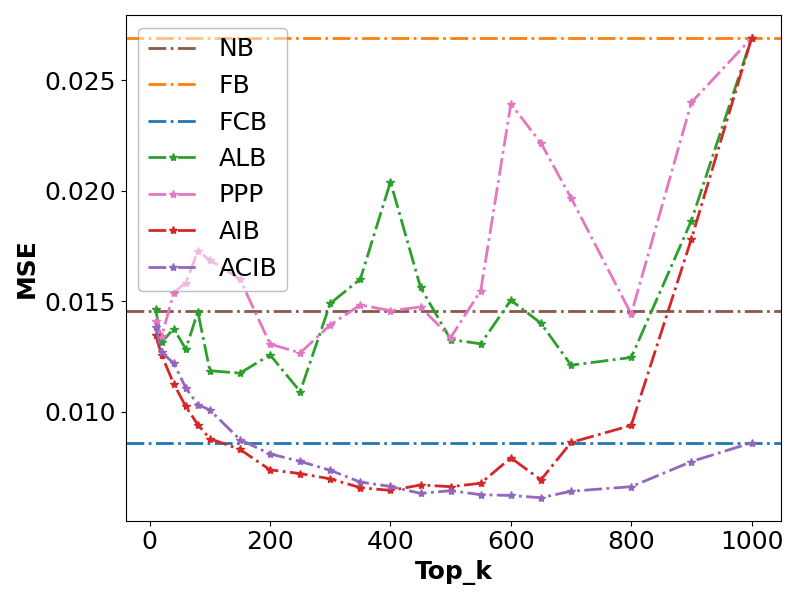}
    \end{minipage}%
    }%
    \subfloat[Nonlinear]{
    \begin{minipage}[t]{0.4\linewidth}
    \centering
    \includegraphics[width=1.0\textwidth]{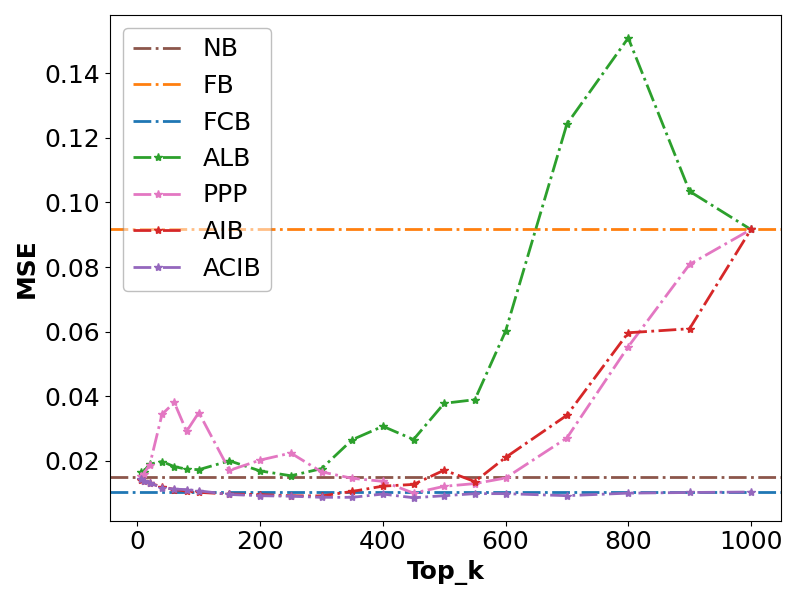}
    \end{minipage}%
    }%
    \caption{Comparison of various approaches
    as top-$k$ ECs are borrowed, where $\delta=3.0$.}
    \label{fig.2_add}
\end{figure}

\subsection{Some Results about outcome calibration}

In this section, we demonstrate the effectiveness of the proposed outcome calibration method. Our goal is to calibrate the outcomes in ECs so that the conditional means of the potential outcomes under control are aligned between the RCT controls and ECs. We define the ideal calibrated outcome in the ECs as $$Y_{\text{ideal}}=\hat \mu_0(X_j), j\in\mathcal{E}.$$ To quantify similarity between outcome distributions, we employ the Euclidean distance metric.

We define the diatance between $Y_{\text{ideal }}$ and the calibrated outcome $\tilde Y$ as
$$
d(Y_{\text{ideal}}, \tilde Y)=\sqrt{\sum_{j=1}^{N_{\mathcal{E}}}(Y_{\text{ideal},j}-\tilde Y_j)^2}.
$$
Similarly, the distance between $Y_{\text{ideal }}$ and the original external-control outcomes $Y$ is
$$
d(Y_{\text{ideal}},  Y)=\sqrt{\sum_{j=1}^{N_{\mathcal{E}}}(Y_{\text{ideal},j}- Y_j)^2}.
$$
Larger values of $d(Y_{\text{ideal}}, \tilde Y)$ or $d(Y_{\text{ideal}}, Y)$ indicate greater discrepancy (and thus lower similarity) between the corresponding outcome vectors.

Figure~\ref{fig.3_add} reports the results for all three datasets, including two simulated datasets and one real dataset. As shown in Figure \ref{fig.3_add}(a) and (b), as $\delta$ increases, $d(Y_{\text{ideal}}, Y)$ increases steadily, whereas $d(Y_{\text{ideal}}, \tilde Y)$ remains relatively stable and consistently smaller than $d(Y_{\text{ideal}}, Y)$. This pattern suggests that the calibration method effectively eliminates the outcome inconcurrency bias. As shown in Figure \ref{fig.3_add}(c), the outcome calibration likewise eliminates the outcome inconcurrency bias in the real dataset.

\begin{figure*}[htbp]
    \centering
    \subfloat[Linear]{
    \begin{minipage}[t]{0.33\linewidth}
    \centering
    \includegraphics[width=1.0\textwidth]{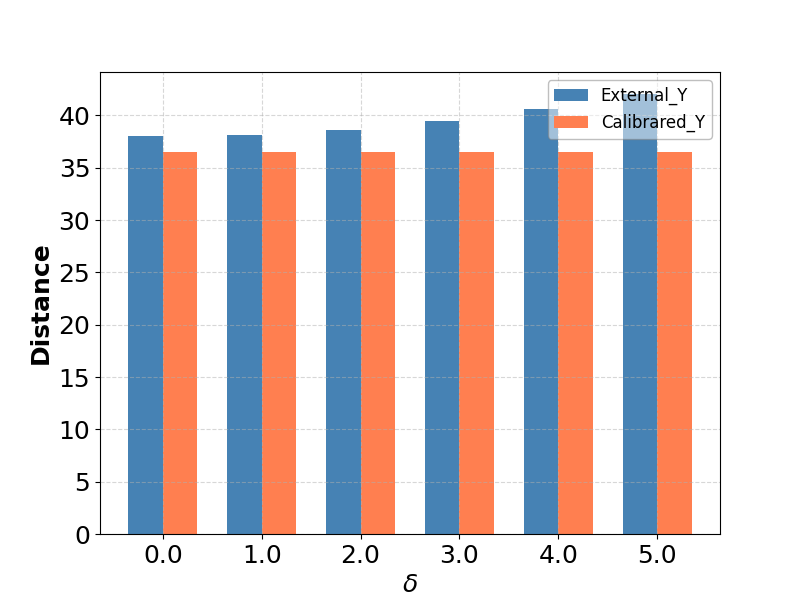}
    \end{minipage}%
    }%
    \subfloat[Nonlinear]{
    \begin{minipage}[t]{0.33\linewidth}
    \centering
    \includegraphics[width=1.0\textwidth]{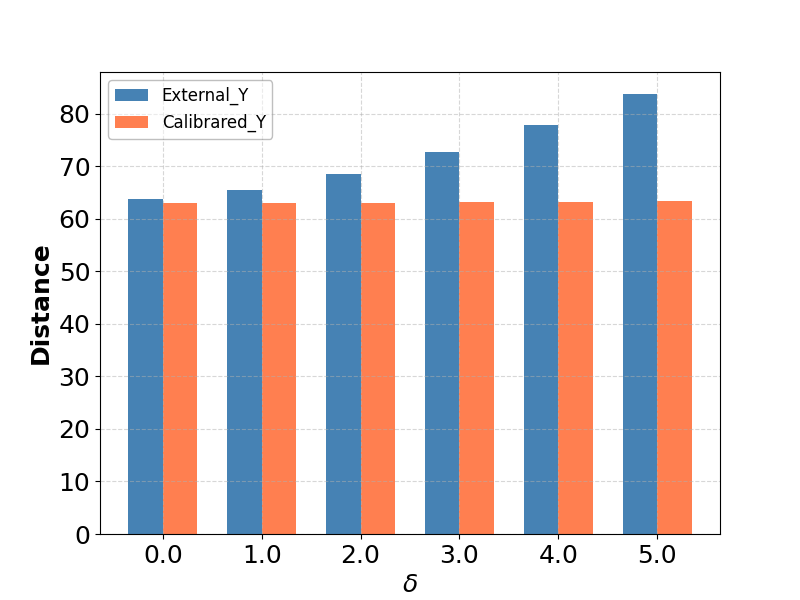}
    \end{minipage}%
    }%
    \subfloat[NSW $\&$ PSID]{
    \begin{minipage}[t]{0.33\linewidth}
    \centering
    \includegraphics[width=1.0\textwidth]{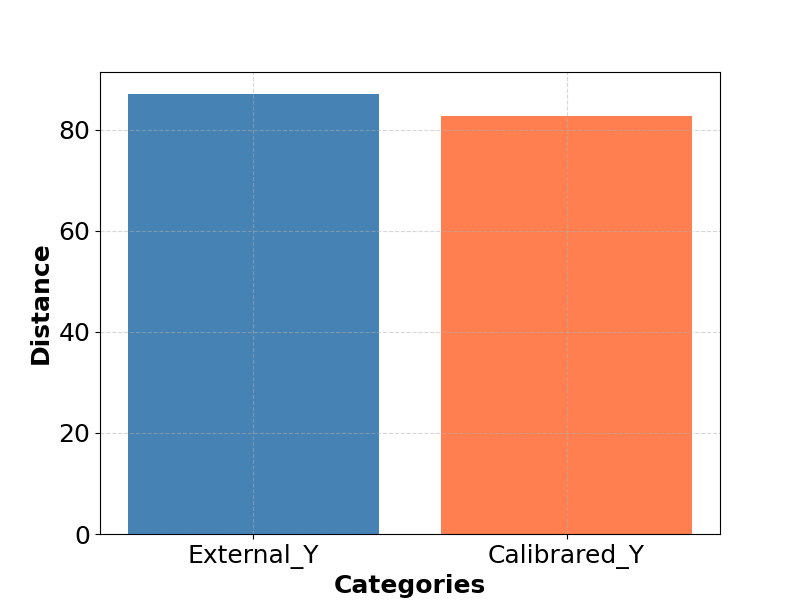}
    \end{minipage}%
    }%
    \caption{Distance between the calibrated outcome and the ideal outcome.}
    \label{fig.3_add}
\end{figure*}

\begin{figure*}[htbp]
    \centering
    \subfloat[Linear]{
    \begin{minipage}[t]{0.33\linewidth}
    \centering
    \includegraphics[width=1.0\textwidth]{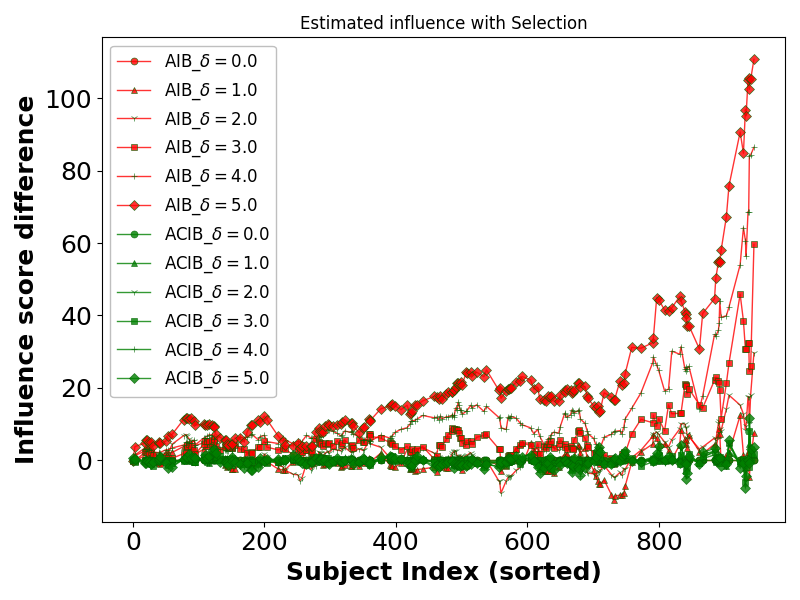}
    \end{minipage}%
    }%
    \subfloat[Nonlinear]{
    \begin{minipage}[t]{0.33\linewidth}
    \centering
    \includegraphics[width=1.0\textwidth]{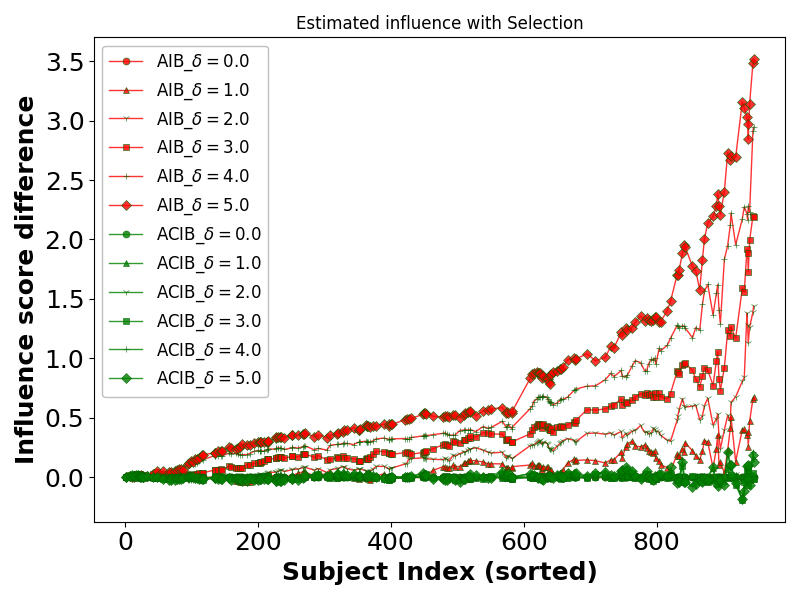}
    \end{minipage}%
    }%
    \subfloat[NSW $\&$ PSID]{
    \begin{minipage}[t]{0.33\linewidth}
    \centering
    \includegraphics[width=1.0\textwidth]{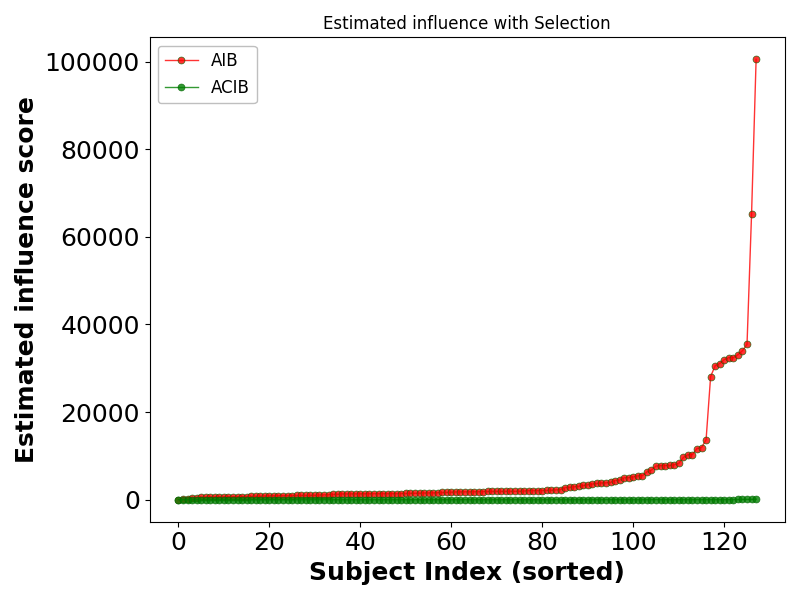}
    \end{minipage}%
    }%
    \caption{Estimated influence score in the proposed AIC and ACIB approaches.}
    \label{fig.4_add}
\end{figure*}

\begin{figure*}[htbp]
    \centering
    \subfloat[Linear]{
    \begin{minipage}[t]{0.33\linewidth}
    \centering
    \includegraphics[width=1.0\textwidth]{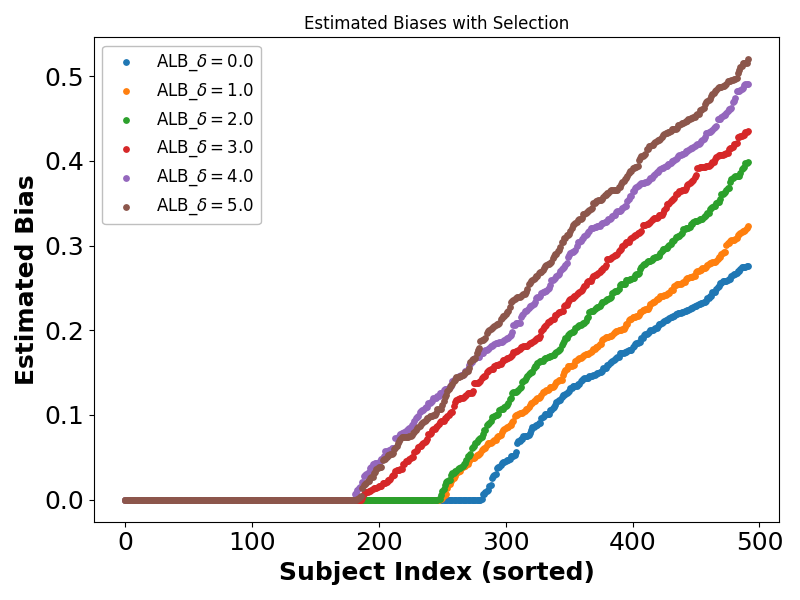}
    \end{minipage}%
    }%
    \subfloat[Nonlinear]{
    \begin{minipage}[t]{0.33\linewidth}
    \centering
    \includegraphics[width=1.0\textwidth]{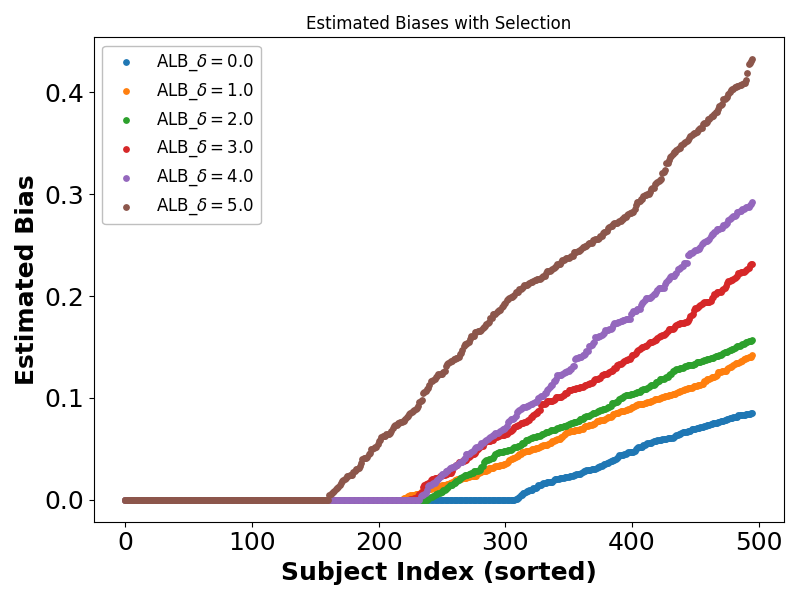}
    \end{minipage}%
    }%
    \subfloat[NSW $\&$ PSID]{
    \begin{minipage}[t]{0.33\linewidth}
    \centering
    \includegraphics[width=1.0\textwidth]{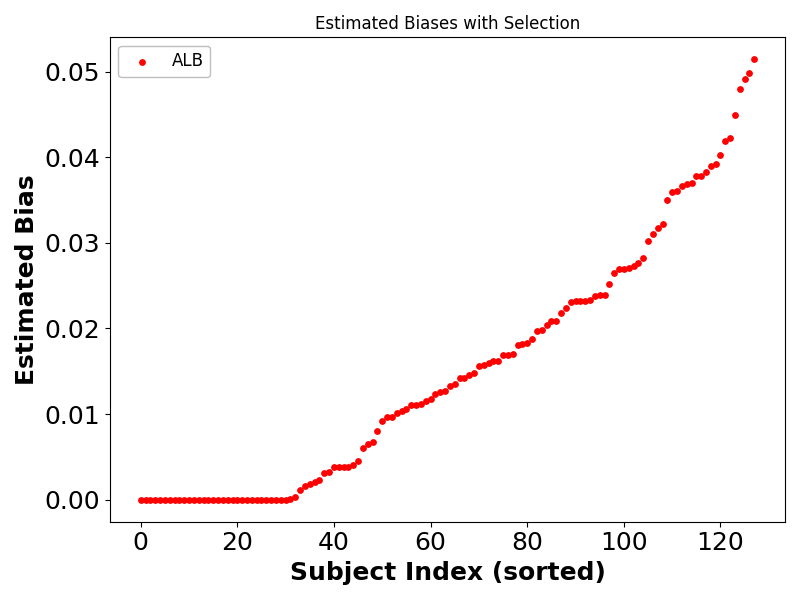}
    \end{minipage}%
    }%
    \caption{Estimated bias in the Lasso-based approach.}
    \label{fig.5_add}
\end{figure*}

\subsection{Some influence score estimation and bias estimation Results}

In this section, we present the influence score estimates $\{\mathcal{IF}(Z_j),j\in\mathcal{E}\}$ obtained from the proposed AIB and ACIB methods, as well as the bias estimates $\{\tilde b(X_j),j\in\mathcal{E}\}$ from theALB method. All results are reported after sorting. 

First, Figure~\ref{fig.4_add} shows the influence score estimation for the AIB and ACIB methods. In Figure~\ref{fig.4_add}(a) and (b), we display the differences in influence scores under different values of $\delta$ for the simulated datasets. The y axis corresponds to $$\{\bm{\mathcal {IF}} (Z) _ {\delta = a} - \bm{\mathcal {IF}} (Z) _ {\delta = 0} \},~a=\{0,1,2,3,4,5\},$$ where $\delta=0$ represents the setting without inconcurrency bias and thus the number of comparable samples is the largest. Consequently, this difference reflects the reduction in sample comparability caused by the inconcurrency bias. We observe that as $\delta$ increases, the influence score differences for the AIB method grow, indicating that the number of comparable samples decreases. In contrast, the influence score differences for the ACIB method remain essentially stable around zero across all $\delta$, suggesting that ACIB effectively removes the inconcurrency bias, and thereby improves sample utilization. In Figure~\ref{fig.4_add}(c), for the real dataset, we also observe that the estimated influence scores under AIB are larger than those under ACIB, further indicating that ACIB can mitigate inconcurrency bias and achieve outcome calibration.

Second, Figure~\ref{fig.5_add}(a) and (b) present the bias estimation of the ALB method under different values of $\delta$ for the simulated datasets. We observe that as $\delta$ increases, the number of comparable samples gradually decreases. This is expected, since a larger $\delta$ corresponds to stronger inconcurrency bias, resulting in a smaller number of comparable samples.

\subsection{Additional Results for Application}

{\bf Summary Statistics.} From Table~IV of the manuscript, we find that there are large differences in the distribution of student-level covariates between the control group in NSW and PSID, indicating the presence of covariate shift. Moreover, there is a notable disparity in the mean of the outcome variable: the sample mean of $Y$ in the NSW control group is $4.55$, whereas in the PSID data it is $5.28$. Specifically, the earning variable (with relatively large values) and the age and education variables (with comparatively smaller ranges) exhibit substantial differences in scale. To ensure comparability across features and prevent variables with larger magnitudes from dominating the analysis, we standardized these continuous variables. In contrast, categorical variables such as Race and Marital Status were not standardized, as they are already represented in binary (0/1) form, which inherently maintains consistent scaling across categories.

\end{document}